\documentclass[11pt]{article}%
\usepackage{amsfonts}
\usepackage{enumerate}
\usepackage{multicol,subcaption}
\usepackage{afterpage}
\usepackage{amsmath,bbm,dsfont,mathrsfs,mathtools,appendix}
\usepackage{amssymb}
\usepackage{graphicx}
\usepackage{fullpage}%
\usepackage{framed}
\usepackage[table]{xcolor}
\usepackage[colorlinks=true,linkcolor=blue,citecolor=green,plainpages=false,pdfpagelabels]%
{hyperref}%
\hypersetup{
	%\ifbool{hardcopybool}{hidelinks}{colorlinks},
	colorlinks,
	linkcolor={blue!60!green},
	citecolor={green!50!yellow!75!black},
	urlcolor={blue!80!black},
	linktoc=all
}

\usepackage{bm}
\usepackage{todonotes}
\setcounter{MaxMatrixCols}{30}
\providecommand{\U}[1]{\protect\rule{.1in}{.1in}}

\newtheorem{theorem}{Theorem}

\newtheorem{corollary}[theorem]{Corollary}

\newtheorem{lemma}[theorem]{Lemma}

\newtheorem{proposition}[theorem]{Proposition}
	
\newtheorem{remark}[theorem]{Remark}

\newenvironment{proof}[1][Proof]{\noindent\textbf{#1.} }{\ \rule{0.5em}{0.5em}}

\newcommand{\Par}{\Pi}
 %identity

\newcommand{\cH}{\mathcal{H}}
\newcommand{\R}{\mathbbm{R}}

\newcommand{\C}{\mathbb{C}}

\newcommand{\N}{\mathbb{N}}

\newcommand{\nn}{\nonumber}

\newcommand{\cP}{\mathcal{P}}

\newcommand{\cS}{\mathcal{S}}

\newcommand{\cA}{\mathcal{A}}
\newcommand{\cE}{\mathcal{E}}

\newcommand{\cX}{{\cal X}}

\newcommand{\1}{\mathbbm{1}}

\def\>{{\rangle}}
\def\<{{\langle}}
\newcommand{\be}{\begin{equation}}
	\newcommand{\ee}{\end{equation}}
\newcommand{\bea}{\begin{eqnarray}}
	\newcommand{\eea}{\end{eqnarray}}

\newcommand{\eps}{\varepsilon}

\newcommand{\ceil}[1]{\left\lceil {#1} \right\rceil}
\newcommand{\floor}[1]{\left\lfloor {#1} \right\rfloor}

\newcommand{\ket}[1]{|#1\rangle} %ket
\newcommand{\bra}[1]{\langle#1|} %bra
\newcommand{\kb}[1]{|#1\rangle\!\langle#1|} %ketbra
 %braket

\def\placeholder{\,\cdot\,}
\newcommand{\Tr}{\mathrm{Tr}}

\newcommand{\comment}[1]{}

\newcommand{\sgn}{\operatorname{sgn}}

\DeclareMathOperator{\KL}{KL}

\newcommand{\NW}{N^{\scalebox{0.65}{$W$}}_{\eps,\delta}}
\newcommand{\NP}{N^{\scalebox{0.65}{$P$}}_{\eps,\delta}}

\newcommand{\tr}{{\rm Tr}}

\numberwithin{equation}{section}
\numberwithin{theorem}{section}

\usepackage{color}
\definecolor{colorthree}{rgb}{0.01,0.51,0.93}

% Include all pictures as MetaPOST
\DeclareGraphicsRule{*}{mps}{*}{}

\usepackage{setspace}

\newcommand{\footremember}[2]{%
    \footnote{#2}
    \newcounter{#1}
    \setcounter{#1}{\value{footnote}}
}
\newcommand{\footrecall}[1]{%
    \footnotemark[\value{#1}]%
}

\allowdisplaybreaks
\usepackage{subfiles}

\title{Optimal Fidelity Estimation from Binary  Measurements for Discrete and Continuous Variable Systems}
%\title{Optimal fidelity estimation from binary \textcolor{blue}{Wigner} measurements \textcolor{red}{for discrete and continuous variable systems}}

\author{Omar Fawzi \footremember{Lyon}{Univ Lyon, Inria, ENS de Lyon, LIP, 46 Allee d'Italie, 69364  Lyon Cedex 07, France} \and Aadil Oufkir \footremember{Aachen}{Institute for Quantum Information, RWTH Aachen University, Aachen, Germany}   \and  Robert Salzmann \footrecall{Lyon}}

\date\today
\begin{document}
\maketitle
\begin{abstract}
         Estimating the fidelity between a desired target quantum state and an actual prepared state is essential for assessing the success of experiments. For pure target states, we use functional representations that can be measured directly and determine the number of copies of the prepared state needed for fidelity estimation. In continuous variable (CV) systems, we utilise the Wigner function, which can be measured via displaced parity measurements.  
         We provide upper and lower bounds on the sample complexity required for fidelity estimation, considering the worst-case scenario across all possible prepared states. For target states of particular interest, such as Fock and Gaussian states, we find that this sample complexity is characterised by the $L^1$-norm of the Wigner function, a measure of Wigner negativity widely studied in the literature, in particular in resource theories of quantum computation.
         For discrete variable systems consisting of $n$ qubits, we explore fidelity estimation protocols using Pauli string measurements. Similarly to the CV approach, the sample complexity is shown to be characterised by the $L^1$-norm of the characteristic function of the target state for both Haar random states and stabiliser states. Furthermore, in a general black box model, we prove that, for any target state, the optimal sample complexity for fidelity estimation is characterised by the smoothed $L^1$-norm of the target state. To the best of our knowledge, this is the first time the $L^1$-norm of the Wigner function provides a lower bound on the cost of some information processing task.
         %Even though having a large Wigner $L^1$-norm is known to be a necessary condition for hardness of some information processing tasks, this is the first time, to the best of our knowledge, that it provides also a sufficient condition for hardness.
        
    % punchy sentence in the following spirit:
    %Many previous results showed that large Wigner negativity is necessary for hardness. As far as we are aware, this is the first result showing that large Wigner negativity is sufficient for hardness.

         %The $L^1$-norm of the Wigner function is also called Wigner negativity and prominently appears in the resource theory of quantum computation.

          %For discrete variable systems consisting of $n$ qubits, we explore fidelity estimation protocols using Pauli string measurements.
\end{abstract} 
\tableofcontents

\section{Introduction}
With the possibility to produce a variety of complex and highly entangled quantum states in experiments, it is essential to develop reliable methods certifying that a desired target state has been prepared successfully. For a pure target state $\rho = \kb{\psi}$ and an actually prepared state $\sigma$, an operationally meaningful way to achieve such certification is by estimating the fidelity $F(\rho,\sigma)$ through measurements on the state $\sigma.$  For example, performing the projective measurement $\{\kb{\psi},\1-\kb{\psi}\}$, allows fidelity estimation using a number of copies of $\sigma$ that depends only on the desired accuracy and success probability, and is independent of the target state or system size. However, for most states $\kb{\psi}$ of interest, performing this measurement is not possible in practice in an efficient and reliable manner. Therefore, it is sensible to express the fidelity as the overlap of certain functions representing the quantum states, which can be directly measured with current experimental techniques \cite{Flammia_DirectFidelityEstimation_2011,Silva_PracticalFidel_2011}:

For  continuous variable (CV) quantum systems with $m\in\N$ degrees of freedom, e.g. an $m$-mode photonic system, we use the phase space representation of the fidelity between $\rho$ and $\sigma$ in terms of their \emph{Wigner functions} $W_\rho$ and $W_\sigma$ given by
\begin{align}
\label{eq:WignerFidelityIntro}
   F(\rho,\sigma) = \pi^m \int_{\C^m} W_\rho(\alpha)W_\sigma(\alpha) d\alpha.
\end{align}
Assuming full knowledge of the pure target state $\rho$, the fidelity can hence be estimated by directly measuring the Wigner function $W_\sigma(\alpha)$ at different points in phase space $\alpha\in \C^m$ through \emph{Wigner tomography.} This can be achieved by utilising 
that the Wigner function of $\sigma$ is given by the expectation value of the \emph{displaced parity operator} \cite{Royer_WignerAsExpOfParity_1977}, i.e.
\begin{align}
\label{eq:WignerParityIntro}
    W_\sigma(\alpha) = \left(\frac{2}{\pi}\right)^m\Tr\Big(D(\alpha)\Par D^*(\alpha)\ \sigma\Big),
\end{align}
    with $D(\alpha)$ denoting the \emph{displacement operator} and $\Par=(-1)^N$ the \emph{parity operator} with $N$ being the \emph{$m$-mode number operator} (c.f.~Section~\ref{sec:BasicFactsWignerFunction} for definition). The observable $\left(2/\pi\right)^mD(\alpha)\Par D^*(\alpha)$ has measurement outcomes $\{\left(2/\pi\right)^m,-\left(2/\pi\right)^m\},$ showing that we can estimate $    W_\sigma(\alpha)$ by some binary measurement on the state $\sigma.$ A practical scheme for performing these displaced parity measurements relying on Ramsey interferometry has been proposed in \cite{Lutterbach_DirectMeasWigner_1997}  and experimentally first implemented in \cite{Lutter_ExperimentWigner_2000,Bertet_ExpLutterDirectMeasWigner_2002,Vlastakis_ScienceWignerMeas_2013}. Directly measuring the Wigner function in this way and performing fidelity estimation using \eqref{eq:WignerFidelityIntro} has become conventional in practice, see  \cite{Vlast_Science_2013,ScienceWigner2_2016,reglade_quantum_2024,Antoni_Wigner_2024,Huard_Wignertomoagraphy_2024} and references therein. In the following, we refer to the above-described paradigm as the \emph{Wigner model of fidelity estimation}.

On the other hand, for discrete variable (DV) quantum systems consisting of $n$ qubits, we consider the representation of the fidelity given by
\begin{align}
\label{eq:FidelPauliIntro}
F(\rho,\sigma) = \frac{1}{d}\sum_{P\in\cP_n}\chi_\rho(P)\chi_\sigma(P),
\end{align}
with dimension of the overall system $d=2^n$, $\cP_n =\{\1,X,Y,Z\}^{\otimes n}$ denoting the set of Pauli strings of length $n$ and
\begin{align}
 \chi_\sigma(P)= \Tr(P \sigma)   
\end{align} the \emph{characteristic function} of the state $\sigma.$ Again, assuming full knowledge of the pure target state $\rho,$ we can estimate the fidelity by measuring Pauli strings with binary outcome in $\{1,-1\}$ on the state $\sigma.$
 We refer to this paradigm as the \emph{Pauli model of fidelity estimation} in the following. 

For both of these settings, efficient fidelity estimation algorithms are know for some special families of states, e.g., well-conditioned states including stabiliser states~\cite{Flammia_DirectFidelityEstimation_2011}. However, as far as we are aware, no lower bound for fidelity estimation is known. This raises the following natural question:
\begin{center}
\emph{Are there states for which fidelity estimation is hard?}
\end{center}
More specifically, in the Wigner model of fidelity estimation, are there states for which the sample complexity is arbitrarily large? In the Pauli model, are there states for which the number of samples required is exponential in $n$? Our results give an affirmative answer to both of these questions using Fock states (Eq.~\eqref{eq:SampleFockIntro}) and Haar-random $n$-qubit states (Eq.~\eqref{eq:ThetaRandomPauliIntro}). Going beyond such hard states, we can ask:
%bounds and understand which property of the state $\rho$ governs the sample complexity. 
 \begin{center}
 \emph{Which property of $\rho$ quantifies the sample complexity of fidelity estimation?}
 \end{center}
We identify the smoothed $L^1$-norm as quantifying the optimal sample complexity for fidelity estimation in a general black box model (Eq.~\eqref{eq:SampleCharBlackBoxIntro}).

 %\paragraph{Overview of results} 

\subsection{Main results}
We now describe more precisely the setup for our results. We analyse the \emph{sample complexity of fidelity estimation in the Wigner- and Pauli model}, respectively denoted by $\NW(\rho)$ and $\NP(\rho)$. They are defined as the minimal number of copies of the state $\sigma$ needed to estimate the fidelity $F(\rho,\sigma)$ with accuracy $\eps>0$ and  probability of failure at most $\delta>0$ in the worst case over all states $\sigma$ using the paradigms described around \eqref{eq:WignerFidelityIntro} and \eqref{eq:FidelPauliIntro} respectively. 

 For this, we unify the discussion for continuous and discrete variable systems by considering general overlaps of functions $f$ and $g$ on some measurable space $\Omega$ with measure $\mu.$ Inspired from \eqref{eq:WignerFidelityIntro} and \eqref{eq:FidelPauliIntro}, we choose for CV systems $\Omega_W = \C^m$ with rescaled Lebesgue measure $d\mu_W(\alpha) = \pi^m d\alpha$ and for DV systems $\Omega_P = \cP_n$\footnote{More precisely, we choose in Section~\ref{sec:PauliFidel} the set $\Omega_P=\cP_n\setminus\{\1\}$ instead of $\cP_n$ as on the identity string the characteristic function of any state $\sigma$ trivially satisfies $\chi_\sigma(\1)=1$ and hence no information is gained when measuring the observable $\1$ in $\sigma.
 $} with $\mu_P =\frac{1}{d}\mu_{\text{count}}$ and $\mu_{\text{count}}$ denoting the counting measure.

 The function $f$ corresponds to the Wigner- or characteristic function of the target state $\rho$ in the Wigner- or Pauli model respectively and hence full knowledge of this function is assumed. On the other hand, for the function $g,$ which plays the role of $W_\sigma$ or $\chi_\sigma$ in the above, we assume to have \emph{black box access:} That is for each $\lambda\in\Omega$ we can obtain samples of a random variable $Y_\lambda$ with binary outcome space and which satisfies $\mathbb{E}[Y_\lambda]=g(\lambda).$ 
 
 The objective is then to estimate the overlap 
\begin{align}
\label{eq:OverlapblackboxIntro}
   \int_{\Omega}f(\lambda)g(\lambda)d\mu(\lambda) = \int_{\Omega}f(\lambda)\, \mathbbm{E}[Y_\lambda] d\mu(\lambda)
\end{align}
from as few samples of $Y_\lambda$ at different points $\lambda$ as possible. We call this task \emph{black box overlap estimation} in the following.

Here, $f$ is assumed to be fixed and the function $g,$ for which black box access is provided, is allowed to be any element of some fixed set of functions $\cS$. The \emph{sample complexity of black box overlap estimation}, denoted by $N_{\eps,\delta}(f,\cS),$ is then given by the minimal number of samples of $Y_\lambda$ needed to estimate the overlap \eqref{eq:OverlapblackboxIntro} with accuracy $\eps>0$ and probability of failure at most $\delta>0$ in the worst case over all $g\in\cS.$ In particular, choosing $\Omega$ and $\mu$ for CV and DV quantum systems as outlined above and denoting the set of possible Wigner- and characteristic functions by $\cS_W$ and $\cS_P$ respectively, we can make the following identifications:

\begin{align}
     \nn\text{CV systems: }\quad \ \NW(\rho) &= N_{\eps,\delta}(W_\rho,\cS_W), \\ \text{DV systems: }\ \quad\NP(\rho) &= N_{\eps,\delta}(\chi_\sigma,\cS_P).
\end{align}

We provide upper and lower bounds on the sample complexity of black box overlap estimation $N_{\eps,\delta}(f,\cS)$ for a large class of sets of functions $\cS$: 
%In particular, we can deduce bounds on the sample complexities of fidelity estimation in the Wigner- and Pauli model $\NW(\rho)$ and $\NP(\rho).$ In many instances, we can prove that those upper and lower bounds match and obtain a characterisation of the corresponding sample complexities given in terms of the $L^1$-norm (or a smoothed version of it) of the target function $f$ (or $W_\rho$ or $\chi_\rho$ respectively).

\bigskip 

\noindent \textbf{Black box overlap estimation:}\\
For general measurable spaces $\Omega$ with measure $\mu$ we prove upper and lower bounds on the sample complexity $N_{\eps,\delta}(f,\cS)$ for $f\in L^2(\Omega,\mu)$ and all sets of functions $\cS\subset \cS_{\max}$ where for some fixed number $r>0$ the set $\cS_{\max}$ consists of all functions $g$ satisfying $\|g\|_{L^2(\Omega,\mu)}\le 1$ and $\|g\|_{L^\infty(\Omega,\mu)}\le r$. These conditions on the sets $\cS$ and $\cS_{\max}$ are motivated by the Wigner- and Pauli model as for all quantum states $\rho$ of a continuous (or discrete) variable system the corresponding Wigner (or characteristic) function satisfies $\|W_\rho\|_{L^2(\C^m,\mu_W)}\le 1$ and $\|W_\rho\|_{L^\infty(\C^m,\mu_W)}\le (2/\pi)^m$\ (or $\|\chi_\rho\|_{L^2(\cP_n,\mu_P)}\le 1$ and $\|\chi_\rho\|_{L^\infty(\cP_n,\mu_P)}\le 1$). Hence, under the appropriate choice of $\Omega$ and $\mu$ as above, the sets of Wigner- and characteristic functions for continuous and discrete variable quantum systems respectively provide examples of such sets $\cS,$ i.e. $\cS_W\subset \cS_{\max}$ and $\cS_P\subset\cS_{\max}.$

    For all such sets $\cS,$ we provide an explicit algorithm for estimating the overlap \eqref{eq:OverlapblackboxIntro} yielding an upper bound on $N_{\eps,\delta}(f,\cS)$ (c.f.~Theorem~\ref{thm:blackboxOverEstimation}) and furthermore, through some information theoretic argument, the corresponding lower bound (c.f.~Theorem~\ref{thm:LowerBoundblackbox}). For the particular choice of $\cS=\cS_{\max},$ these upper and lower bounds can be shown to be essentially matching in their scaling in the relevant variables $f,\eps,\delta$ since they are given as 
     \begin{align}
\label{eq:SampleCharBlackBoxIntro}
\left(\frac{\|f\|^{(2\eps)}_1}{2\eps}\right)^2\log\left(\frac{1}{3\delta}\right) \ \lesssim\, N_{\eps,\delta}(f,\cS_{\max}) \,\lesssim\inf_{\eps'\in [0,\eps)}\left(\frac{\|f\|^{(\eps')}_1}{\eps-\eps'}\right)^{2}\log\left(\frac{1}{\delta}\right).
 \end{align}
Here, we have denoted the \emph{smoothed $L^1$-norm} of the function $f$ by \begin{align}
\label{eq:SmoothedL1Intro}
 \|f\|^{(\eps)}_1=\inf_{\substack{\tilde f\in L^1(\Omega)\cap L^2(\Omega),\\\|f-\tilde f\|_2\le \eps}}\|\tilde f\|_1.  
\end{align}
Note that~\eqref{eq:SampleCharBlackBoxIntro} provides an \emph{instance-optimal} characterisation for the task of black box overlap estimation, i.e., it identifies precisely the property of $f$ that governs the sample complexity as the smoothed $L^1$-norm. The study of instance-optimal bounds is well-developed in the learning literature, for example for the problem of identity testing, one of the flagship results is that classical identity testing is characterised by a smoothed version of the $L^{2/3}$-quasinorm of the distribution to be tested~\cite{valiant2017automatic}; see also~\cite{chen2022toward} for results about quantum identity testing in this direction.

For functions $f$ with rapidly decaying tail as defined in Section~\ref{sec:BlackBox} and discussed in detail in Appendix~\ref{sec:Tail}, which include for example exponentially decaying functions on Euclidean space, we find that the sample complexity is characterised by the usual $L^1$-norm as 
\begin{align}
\label{eq:ThetaBlackBoxIntro}
    N_{\eps,\delta}(f,\cS_{\max}) = \Theta\left(\left(\frac{\|f\|_1}{\eps}\right)^2\log\left(\frac{1}{\delta}\right)\right),
\end{align}
with precise statement given in Corollary~\ref{cor:SmaxTheta}.

\bigskip

\noindent\textbf{Fidelity estimation in Wigner model:}
In Theorem~\ref{thm:WignerUpperBound} and~\ref{thm:LowerBoundWigner}
we give upper and lower bounds on the sample complexity of fidelity estimation in the Wigner model, $\NW(\rho),$ for pure states $\rho$ on $L^2(\R^m)$ from the ones established for the abstract framework of black box overlap estimation. In particular, the upper bound involves the smoothed $L^1$-norm of $W_\rho,$ giving that for every fixed pure state $\rho$ and in the worst case over all states $\sigma$, the fidelity $F(\rho,\sigma)$ can be estimated by performing displaced parity measurements on a finite number of copies of $\sigma$, i.e. $\NW(\rho)<\infty.$

Showing for general pure states $\rho$ that these upper and lower bounds on $\NW(\rho)$ are essentially matching similarly to \eqref{eq:SampleCharBlackBoxIntro} and \eqref{eq:ThetaBlackBoxIntro} is, however, hard as the set of Wigner functions, $\cS_W,$ is a proper and complicated subset of $\cS_{\max}=B^2_1(0)\cap B^{\infty}_{(2/\pi)^m}(0).$  Hence, we focus on interesting examples of states of continuous variable systems and establish for those that the sample complexity can, up to constants, be expressed in terms of the $L^1$-norm of the corresponding Wigner function:

Firstly, we consider the Fock states $\left(\kb{n}\right)_{n\in\N},$ i.e. the eigenstates of the harmonic oscillator or number operator. In this case, we find in Proposition~\ref{prop:FockStates} that their sample complexity of fidelity estimation in the Wigner model is fully characterised as
\begin{align}
\label{eq:SampleFockIntro}
     \NW(\kb{n}) = \Theta\left(\Bigg(\frac{\left\|W_{\kb{n}}\right\|_1}{\eps}\Bigg)^2\log\left(\frac{1}{\delta}\right)\right).
\end{align} 
 Here, the $L^1$-norm of the Wigner functions of the Fock states satisfies the scaling behaviour \begin{equation}
     \|W_{\kb{n}}\|_1\sim \sqrt{n}
 \end{equation}
 as shown in \eqref{eq:FockWignerL1} below by straightforwardly employing known asymptotic expressions of the $L^p$-norms of the Laguerre polynomials in \cite{Markett_ScalingLpLaguerre_1982}. Note that shows that there exist states with arbitrarily large sample complexity.

As another example, we construct in Section~\ref{sec:SpreadedSpikeStates} the so-called \emph{spike states} for any $m$-mode continuous variable quantum system. These are pure states $\left(\rho_n\right)_{n\in\N}=\left(\kb{\psi_n}\right)_{n\in\N}$ on the Hilbert space $L^2(\R^m)$ whose Wigner functions are uniformly vanishing as $n\to\infty,$ i.e.
\begin{align}
\label{eq:SpikeVanishIntro}
    \|W_{\rho_n}\|_\infty \xrightarrow[n\to\infty]{} 0.
\end{align}
Furthermore, we show that $W_{\rho_n}\in L^1(\C^m)$ for all $n\in\N$ with $L^1$-norms blowing up and scaling reciprocally to their $L^\infty$-norms, i.e.
 \begin{align}
 \label{eq:SpikeL1Intro}
    \|W_{\rho_n}\|_1 \sim \|W_{\rho_n}\|^{-1}_\infty.
\end{align}
Using \eqref{eq:SpikeVanishIntro} and \eqref{eq:SpikeL1Intro} together with the established bounds on $\NW(\rho),$ we show in Proposition~\ref{prop:SpikeStates} that also for the spike states the corresponding $L^1$-norm characterises the sample complexity of fidelity estimation, i.e. precisely
\begin{align}
\label{eq:SampleSpikeIntro}
    \NW(\rho_{n}) = \Theta\left(\Bigg(\frac{\left\|W_{\rho_{n}}\right\|_1}{\eps}\Bigg)^2\log\left(\frac{1}{\delta}\right)\right).
\end{align}
We believe that the existence of a sequence of pure states satisfying \eqref{eq:SpikeVanishIntro} and \eqref{eq:SpikeL1Intro} can be of independent interest.

For $\rho$ being a Gaussian state, we know that $W_\rho\ge0$ and hence by normalisation of the Wigner function that $\|W_\rho\|_1 = 1.$ From this, we see in Proposition~\ref{prop:Gaussian} that the sample complexity of fidelity estimation in the Wigner model satisfies for all pure Gaussian states $\rho$ satisfies
\begin{align}
\label{eq:GaussianIntro}
 \NW(\rho) = \Theta\left(\frac{1}{\eps^2}\log\left(\frac{1}{\delta}\right)\right)
\end{align}
and hence that for all states $\sigma,$ the fidelity $F(\rho,\sigma)$ can efficiently be estimated using displaced parity measurements on a few copies of $\sigma.$

 %Going beyond the instanced based setup, i.e. with pure target state $\rho$ being fixed, 
  In addition, looking at bounds that only depend on the $L^1$-norm of the state, we find in Theorem~\ref{thm:WorstCaseWigener} a characterisation of the sample complexity in the worst case over all pure states $\rho$ whose Wigner function has $L^1$-norm less or equal than some fixed threshold value. In particular, we see for all $t\ge1$
\begin{align}
\label{eq:WorstCaseWignerIntro}
\sup_{\substack{\rho\, \text{pure state,}\\ \|W_\rho\|_1\le t}}\NW(\rho) =\Theta\left(\left(\frac{t}{\eps}\right)^2\log\left(\frac{1}{\delta}\right)\right),
\end{align}
showcasing that also in the considered worst case, the sample complexity has the same scaling behaviour as the ones for the Fock, spike and Gaussian states portrayed in \eqref{eq:SampleFockIntro}, \eqref{eq:SampleSpikeIntro} and \eqref{eq:GaussianIntro} respectively. 

%In particular, we see from \eqref{eq:WorstCaseWignerIntro} that in the worst case over all pure states $\rho,$ the sample complexity of fidelity estimation in the Wigner model is infinite, i.e. $\sup_{\rho}\NW(\rho)=\infty.$

Negativity of the Wigner function of quantum states is an indication of non-classicality \cite{kenfack_negativity_2004,albarelli2018resource,Kok_NegofQuasi_2020} as it portrays the fact that the Wigner function deviates from being a proper probability distribution, implies that the quantum state cannot be written as a convex combination of coherent or, more generally, Gaussian states and furthermore is equivalent to contextuality
\cite{Ulysse_Cont=WignNeg_2022}.  
It has been shown that Wigner negativity is necessary for quantum advantage, as computations involving only positive Wigner functions are efficiently classically simulable \cite{Bartlett_EffecienClassSimContVar_2002,MariEisert_SimPosWigner_2012,veitch2013efficient}.
The $L^1$-norm of the Wigner function can be regarded as a quantification of Wigner negativity \cite{kenfack_negativity_2004} with it being equal to 1 if and only if the Wigner function is non-negative and otherwise being strictly greater than 1. However, in this context, the $L^1$-norm of the Wigner function only provides an upper bound on the cost of classical simulation~\cite{MariEisert_SimPosWigner_2012,veitch2013efficient,pashayan2015estimating,hahn2024bridging} as there are states with large $L^1$-norm that can still be simulated efficiently~\cite{garcia2020efficient,calcluth2023vacuum}. Interestingly, for the fidelity estimation problem we consider here, we can achieve lower bounds on the sample cost in terms of the Wigner $L^1$-norm. The results presented here in \eqref{eq:SampleCharBlackBoxIntro}, \eqref{eq:SampleFockIntro}, \eqref{eq:SampleSpikeIntro}, \eqref{eq:GaussianIntro} and \eqref{eq:WorstCaseWignerIntro} show an interesting connection between the complexity of learning properties of a quantum state and its non-classicality and we hope the techniques we introduce here, such as the lower bound techniques and the smoothed $L^1$-norm, find further applications in the context of non-classicality.
%In addition, 
%as far as we are aware, this is the first lower bound on the cost of an information processing task using the 

\bigskip

\noindent\textbf{Fidelity estimation in Pauli model:}

\noindent Applying again the results from the abstract paradigm of black box overlap estimation to the Pauli model of fidelity estimation yields upper and lower bounds on the corresponding sample complexity $\NP(\rho)$ for all pure states $\rho$ on a $n$-qubit system, c.f. Theorem~\ref{thm:PauliUpperBound} and Theorem~\ref{thm:LowerBoundPauli}. 

In Proposition~\ref{thm:HaarRandomSample} we use these to find for Haar random pure states $\rho$ with probability at least $1-1/d$ that the corresponding sample complexity is characterised as
\begin{align}
\label{eq:ThetaRandomPauliIntro}
    \NP(\rho) = \widetilde \Theta\left(\left(\frac{\|\chi_\rho\|_1}{\eps}\right)^2\log\left(\frac{1}{\delta}\right)\right)=\widetilde \Theta\left(\frac{d}{\eps^2}\log\left(\frac{1}{\delta}\right)\right),
\end{align}
with $L^1$-norm defined by $\|\chi_\rho\| = \frac{1}{d}\sum_{P\in\cP_n\setminus\{1\}}|\chi_{\rho}(P)|.$
Here, we used $\widetilde \Theta$ to denote the fact that the upper and lower bounds in \eqref{eq:ThetaRandomPauliIntro} match up to factors that scale logarithmically in the leading order term. In particular, the result implies that in the worst case over all pure states $\rho$ the sample complexity blows up as $d\to\infty$, i.e. $\sup_{\rho}\NP(\rho)\xrightarrow[d\to\infty]{}\infty.$ The $L^1$-norm of $\chi_{\rho}$ has appeared in the literature under the name of st-norm in~\cite{campbell2011catalysis} and it was shown in~\cite{hahn2022quantifying} that it provides a lower bound on a magic measure obtained via an encoding into continuous variable systems.

For stabiliser states $\rho$ on $n$ qubits, we have that $\|\chi_\rho\|_1 = \frac{d-1}{d}=\Theta(1)$ (Proposition~\ref{prop:Stab}). Hence, $\NP(\rho)$ is independent of the system size and the fidelity $F(\rho,\sigma)$ can be estimated efficiently by performing Pauli measurements on a few copies of $\sigma$. This was previously shown in \cite{Flammia_DirectFidelityEstimation_2011}.

% We can use this together with the bounds on the sample complexity of fidelity estimation to see in Proposition~\ref{prop:Stab} that
% \begin{align}
% \label{eq:StabliserThetaIntro}
%     \NP(\rho) = \Theta\left(\frac{1}{\eps^2}\log\left(\frac{1}{\delta}\right)\right).
% \end{align}
% Hence, $\NP(\rho)$ is independent of the system size and the fidelity $F(\rho,\sigma)$ can be estimated efficiently by performing Pauli measurements on a few copies of $\sigma$. The upper bound in \eqref{eq:StabliserThetaIntro} can already be found in \cite{Flammia_DirectFidelityEstimation_2011}.

\subsection{Related work}
In \cite{Flammia_DirectFidelityEstimation_2011,Silva_PracticalFidel_2011}, employing the overlap formula \eqref{eq:FidelPauliIntro}, upper bounds on the sample complexity of fidelity estimation in the Pauli model have been established. In particular in \cite{Flammia_DirectFidelityEstimation_2011} the upper bound in Proposition~\ref{prop:Stab} for the sample complexity for stabiliser states has already been found as well as the worst case bound
$\sup_{\rho \text{ pure}}\NP(\rho)= \mathcal{O}\left(\frac{1}{\eps^2\delta^2} + \frac{d}{\eps^2\delta}\log\left(\frac{1}{\delta}\right)\right)$ (using the Markov argument around Eq.~$(10)$ in \cite{Flammia_DirectFidelityEstimation_2011}) or $\sup_{\rho \text{ pure}}\NP(\rho)= \mathcal{O}\left(\frac{1}{\eps^2\delta} + \frac{d}{\eps^4}\log\left(\frac{1}{\delta}\right)\right) $ (using the  ‘‘truncating bad events'' technique)\footnote{In the main text of \cite{Flammia_DirectFidelityEstimation_2011} the authors claim to have proven $\sup_{\rho \text{ pure}}\NP(\rho)= \mathcal{O}\left(\frac{1}{\eps^2\delta}+ \frac{d}{\eps^2}\log\left(\frac{1}{\delta}\right)\right).$ However, considering their proof in the appendix (in particular the ‘‘Truncating bad events" section), the error parameter $\beta$ introduced there needs to scale as $\eps$ in order estimate the fidelity with the desired accuracy leading to a $1/\eps^4$ of their actual upper bound.}.  In Remark~\ref{rem:WorstCasePauli}, we improve these worst case bounds by establishing $\sup_{\rho \text{ pure}}\NP(\rho)= \mathcal{O}\left(\frac{d}{\eps^2}\log\left(\frac{1}{\delta}\right)\right).$ Furthermore, we show that this worst case sample complexity is tight (up to a logarithmic factor in the dimension) for Haar random pure state $\rho$ with high probability (c.f. Proposition~\ref{thm:HaarRandomSample}).

In \cite{Guo2023May}, an upper bound on the sample complexity in terms of the $L^1$-norm is given for the problem of estimating the expectation values of observables in the Pauli model. In this work, we use a similar sampling rule to derive (with a simpler analysis) our upper bounds. Moreover, we  provide nearly matching lower bounds in the case of observables being either Haar random pure states (c.f. Proposition~\ref{thm:HaarRandomSample}) or stabilizer states (c.f. Proposition \ref{prop:Stab}) as well as all   pure observables in the low accuracy regime (c.f. Remark \ref{rk:matching bound-small eps}). 

In \cite{Silva_PracticalFidel_2011,Sivak_HectorPaper_2022} using the overlap formula in terms of Wigner functions \eqref{eq:WignerFidelityIntro} for fidelity estimation of CV systems is discussed, though without establishing bounds on the corresponding sample complexity. 
The work \cite{Sivak_HectorPaper_2022} as well as  \cite{Pashayan2015Aug}, in which a method for estimating the probabilities of outcomes of a quantum circuit is presented, rely on a sampling rule similar to ours (c.f. \eqref{L1sample}). Moreover, in both works it is claimed that this sampling procedure is optimal as it has the smallest variance. We stress that this optimality proof has implicit assumptions: the estimator should be of a particular form (c.f. \cite[Eq. $(14)$]{Pashayan2015Aug} or \cite[Eq. $(5)$]{Sivak_HectorPaper_2022}), the sampling distribution remains the same during  different steps, and the global estimator is given by the empirical one. In fact, we know that for some CV states for which the Wigner function has infinite $L^1$-norm~\cite[Theorem 2]{Simon_WignerWeyl_1992}, estimators satisfying these assumptions have infinite variance but nonetheless our bound~\eqref{eq:SampleCharBlackBoxIntro} gives an algorithm with finite sample complexity.  
Here, we prove our lower bounds on the sample complexity (c.f. Theorems \ref{thm:LowerBoundblackbox}, \ref{thm:LowerBoundWigner} and~\ref{thm:LowerBoundPauli}) without making these assumptions. In particular, we allow the learner to choose adaptively the sampling procedures. This allows us to prove the optimality of the sampling rule \eqref{L1sample} (applied on the corresponding smoothing function $\widetilde f $)  for $\cS= \cS_{\max}$ (c.f. \eqref{eq:SampleCharBlackBoxIntro}), for the Fock, spike and Gaussian states as well as the worst case in the Wigner model (c.f. \eqref{eq:SampleFockIntro}, \eqref{eq:SampleSpikeIntro}, \eqref{eq:GaussianIntro} and \eqref{eq:WorstCaseWignerIntro}) and for Haar random and stabiliser states in the Pauli model (c.f.~\eqref{eq:ThetaRandomPauliIntro} and Proposition~\ref{prop:Stab}).

In \cite{Chabaud_BuildingTrust_2020} the authors consider fidelity estimation of CV systems using heterodyne measurements\footnote{Heterodyne measurements are given by performing the POVM $\left(\frac{1}{\pi^m}\kb{\alpha}\right)_{\alpha\in\C^m}$ defined through the coherent states.} instead of performing Wigner tomography through displaced parity measurements, and establish upper bounds on the corresponding sample complexity. 

The recent paper \cite{Huan_Certifying_2024} considers the task of quantum state certification, i.e. the decision problem of whether the actually prepared state $\sigma$ is close or far from the target state $\rho$, on $n$ qubit systems from Pauli string measurements. Incorporating the full $n$-bit string of measurement outcomes obtained from measuring each Pauli matrix for a given $P\in\cP_n$ individually enables them to prove a $\mathcal{O}(n^2) = \mathcal{O}(\log^2(d))$ upper bound on the corresponding sample complexity for almost all Haar random pure states. This is in sharp contrast to the lower bound we found in \eqref{eq:ThetaRandomPauliIntro}, which is exponential in the number of qubits due to only using the binary measurement outcome from the full Pauli string $P\in\cP_n.$ 

Quantum state tomography for continuous variable systems for different measurement setups has been extensively studied:
In \cite{Lvovsky_ContinuousHomo_2009} quantum state tomography using homodyne detection,\footnote{Homodyne detection is given by measuring the observable $X_\theta = \cos(\theta) X+\sin(\theta)P$ where, $X$ and $P$ are the standard position and momentum operators and $\theta$ is an angle, which can be freely chosen and can be seen as rotating phase space.} a practically implementable measurement scheme, which differs from displaced parity measurements considered here, has been studied.
Furthermore, continuous variable tomography under energy constraints on the unknown state has been a topic of interest in the literature:
Restricting to classical shadow tomography and using homodyne, heterodyne,  photon number resolving\footnote{Photon number resolving measurements are given by performing the POVM $\left(\frac{1}{\pi^m}D(\alpha)\kb{n}D(\alpha)^*\right)_{\alpha\in\C^m}$ for different photon numbers $n\in\N.$} or photon parity measurements, \cite{gandhari2023precisionboundscontinuousvariablestate,Becker_CVLearning_2024} establish performance guarantees by providing upper bounds on the corresponding sample complexity provided the unknown state satisfies certain energy constraints with respect to the photon number operator. In the recent paper \cite{Mele_CVLearning_2024}, lower bounds (for general mixed states) and upper bounds (for pure and Gaussian states) on the sample complexity of full quantum state tomography using general measurements are provided. There, the authors prove their results again under the assumption that the unknown state satisfies an energy constraint. Interestingly, their lower bound implies that the scaling of the sample complexity in terms of error parameter $\eps$ worsen when the number of modes $m$ increases. Note that in comparison, our analysis here does not rely on an energy constraint on either the target state $\rho$ or prepared state $\sigma$. In fact, our results are for arbitrary states $\sigma$ and involve only the (smoothed) $L^1$-norm of the Wigner function of $\rho,$ which can remain bounded for arbitrarily high energy of $\rho.$\footnote{This can for example be seen by considering the displaced state $\rho_\alpha=D(\alpha)\rho D(\alpha)^*$ which satisfies $\|W_{\rho_\alpha}\|_1 = \|W_{\rho}\|_1$ for all $\alpha\in\C^m$ but $\Tr(N\rho_\alpha)\xrightarrow[|\alpha|\to\infty]{}\infty.$ }

\subsection{Outline of the article}
The rest of the article is outlined as follows:
\begin{itemize}
\item[-] In Section~\ref{sec:Prelim} we discuss some preliminary facts which are needed for the derivations of the rest of the paper. 
    \item[-] In Section~\ref{sec:BlackBox} we formally define the task of black box overlap estimation and show upper and lower bounds on the corresponding sample complexity in Theorems~\ref{thm:blackboxOverEstimation} and~\ref{thm:LowerBoundblackbox} and its classification for functions with rapidly decaying tail in Corollary~\ref{cor:SmaxTheta}. 
    \item[-] In Section~\ref{sec:FidelWigner} we formally define the task of fidelity estimation in the Wigner model, discuss its connection to the abstract black box estimation and show upper and lower bounds on the corresponding sample complexity in Theorems~\ref{thm:WignerUpperBound} and~\ref{thm:LowerBoundWigner}. In Section~\ref{sec:MatchWigner} we discuss instances for which these upper and lower bounds can be shown to be matching up to constants including the Fock states in Section~\ref{sec:Fockstates}, spike states in Section~\ref{sec:SpreadedSpikeStates}, Gaussian states in Section~\ref{sec:Gaussian} and the worst case behaviour of the sample complexity in Theorem~\ref{thm:WorstCaseWigener}.
    \item[-] In Section~\ref{sec:PauliFidel} we formally define the task of fidelity estimation in the Pauli model, discuss its connection to the abstract black box estimation and show upper and lower bounds on the corresponding sample complexity in Theorems~\ref{thm:PauliUpperBound} and~\ref{thm:LowerBoundPauli}. We then continue to show in Section~\ref{sec:MatchPauli}  that these upper and lower bounds are essentially matching for Haar random pure states in Proposition~\ref{thm:HaarRandomSample} and for stabiliser states in Proposition~\ref{prop:Stab}.
    \item[-] In the Appendix we study properties of the smoothed $L^1$-norm. 
    In Lemma~\ref{lem:ContSmoothedL1} we show  continuity in the sense of $\lim_{\eps\downarrow 0}\|f\|^{(\eps)}_1=\|f\|_1$. We then provide convergence rates for this limit for functions with rapidly decaying tail in Appendix~\ref{sec:Tail} with explicit example being the Wigner functions of the Fock states.  
\end{itemize}

\section{Preliminaries}
\label{sec:Prelim}
In this section we discuss some preliminary definitions and facts which are needed for the derivations of the main results of this paper.
\subsection{Notation}
For $X$ being a set and $f,g:X\to \R$ being two functions we write $f\lesssim g$ if there exists $C\ge 0$ such that for all $x\in X$ we have $f(x)\le Cg(x)$ and furthermore $f=\mathcal{O}(g)$ if $|f|\lesssim |g|.$ Similarly we write  $f\gtrsim g$ if there exists $c>0$ such that $f(x)\ge cg(x)$ for all $x\in X$ and $f=\Omega(g)$ if $|f|\gtrsim |g|.$ We denote by $f\sim g$ that $f\lesssim g$ and $g\gtrsim f$ and by $f=\Theta(g)$ that $f=\mathcal{O}(g)$ and $f=\Omega(g).$
\subsection{$L^q$-spaces and smoothed $L^1$-norm}
\label{sec:LqNotation}
Let $\Omega$ be some measurable space with some measure $\mu.$ In the following we denote for $q\in[1,\infty]$ by $L^q(\Omega,\C)$ and $L^q(\Omega,\R)$ the Banach spaces of complex and real valued $L^q$-functions respectively. For brevity we often denote for the case of complex valued functions $L^q(\Omega)\equiv L^q(\Omega,\C).$  We denote the ball of real valued $L^q$-functions with radius $r\ge 0$ centered around some $f\in L^q(\Omega,\R)$ by
\begin{align}
\label{eq:RealBalls}
    B^q_r(f) = \left\{g\in L^q(\Omega,\R) \Big|\, \|f-g\|_q\le r\right\} \quad \text{ where } \quad \| f \|_{q} = \left(\int_{\Omega} |f(\lambda)|^{q} d\mu(\lambda)\right)^{1/q}.
\end{align}
The \emph{smoothed $L^1$-norm}\footnote{Note that the smoothed $L^1$-norm is not a norm itself as it does not satisfy triangle inequality or faithfulness due to the employed infimum.} of a function $f\in L^2(\Omega,\R)$ for $\eps\ge 0$ is defined to be \begin{align}
\label{eq:SmoothedL1Norm}
    \|f\|^{(\eps)}_{1} = \inf_{\substack{\widetilde f\in L^1(\Omega,\R)\\ \|f-\widetilde f\|_2 \le \eps}} \|\widetilde f\|_1.
\end{align} 
As seen in Lemma~\ref{lem:ContSmoothedL1} in the Appendix, the smoothed $L^1$-norm satisfies the natural continuity property $\lim_{\eps\downarrow 0}\|f\|^{(\eps)}_1=\|f\|_1.$

Note that for all $f\in L^2(\Omega,\R)$ and $\eps>0$ the respective smoothed $L^1$-norm is always finite, i.e. $\|f\|^{(\eps)}_1<\infty,$ which follows since $L^1(\Omega,\R)\cap L^2(\Omega,\R)$ is dense in $L^2(\Omega,\R)$.\footnote{Density can be seen by defining for $c>0$ the function $\tilde f_c = f\,\mathbf{1}_{\{|f| \ge c\}}$ and noting that by the dominated convergence theorem it approximates $f$ in $L^2$-norm as $\lim_{c\to 0}\int_\Omega |\tilde f_c(\lambda)-f(\lambda)|^2d\mu(\lambda)=\lim_{c\to 0}\int_{\{|f| < c\}} |f(\lambda)|^2 d\mu(\lambda) =0.$ Furthermore, note that for all $c>0$ we have $\tilde f_c\in L^1(\Omega,\R)$ as $\int_{\{|f| \ge c\}} |f(\lambda)| d\mu(\lambda)\le \|f\|^2_2/c. $}

%Note that in the two cases we are interested in this article, i.e. finite measure spaces, $\mu(\Omega)<\infty,$ or $\Omega =\R^m$ with $\mu$ being the Lebesgue measure, the smoothed $L^1$-norm with $\eps>0$ is always finite for all $f\in L^2(\Omega,\R).$ This is true because for finite measure spaces any $L^2$-function is immediately also in $L^1$ and for the other case because $L^1(\R^m,\R)\cap L^2(\R^m,\R)$ is dense in $L^2(\R^m,\R).$

\subsection{Information theoretic divergences}

%We denote the Kullback-Leibler (KL) divergence between two probability distributions $P=\left(p_x\right)_{x\in\cX}$ and $Q=\left(q_x\right)_{x\in\cX}$ on some finite alphabet $\cX$ by
%\begin{align*}
%\KL(P\|Q) = \sum_{x\in\cX} p_x\log\left(\frac{p_x}{q_x}\right)
%\end{align*}
%if $q_x=0$ implies $p_x=0$ and $\KL(P\|Q) =\infty$ otherwise. 

The Kullback-Leibler (KL) divergence between two probability distributions $P$ and $Q$ on $\cX$ is defined by
\begin{align*}
\KL(P\|Q) =\int_{\cX}  dP(x)\log\left(\frac{dP}{dQ}(x)\right) 
\end{align*}
%if $q_x=0$ implies $p_x=0$ and $\KL(P\|Q) =\infty$ otherwise. 
whenever $P$ is absolutely continuous with respect to $Q$ and $\KL(P \| Q) = +\infty$ otherwise.

For $p, q\in [0,1]$, we write $\KL(p \| q) = \KL(\mathrm{Ber}(p) \| \mathrm{Ber}(q))$, where $\mathrm{Ber}(p)$ denotes a Bernouilli distribution with parameter $p$. Note that if $\delta \leq 1/2$, and $p \geq 1-\delta$ and $q \leq \delta$ we have $\KL(p \| q) \geq \KL(1-\delta \| \delta) \geq \log(1/(3\delta))$.

Furthermore, the $\chi^2$-divergence is defined by
\begin{align}\label{def:chi2}
    \chi^2(P\|Q) = \int_{\cX} \left( \frac{d P}{d Q}(x) - 1 \right)^2 Q(dx)
    %\sum_{x\in\cX} \frac{(p_x-q_x)^2}{q_x} = \bigg(\sum_{x\in\cX} \frac{p_x^2}{q_x}  \bigg)-1
\end{align}
if $P$ is absolutely continuous with respect to $Q$ and $+\infty$ otherwise.

%if $q_x=0$ implies $p_x=0$ and $\chi^2(P\|Q) =\infty$ otherwise.

These divergences satisfy the well-known bound
\begin{align}
\label{eq:KLDomChi}
    \KL(P\|Q)\le \chi^2(P\|Q).
\end{align}
\subsection{Wigner functions for continuous variable systems}

\label{sec:BasicFactsWignerFunction}
For $m\in\N$ we consider a $m$-mode quantum harmonic oscillator on the Hilbert space $\cH = L^2(\R^m).$ For a point in phase space $\alpha\in\C^m\cong \R^{2m}$ the \emph{displacement operator} is defined as
\begin{align}
D(\alpha) = e^{\sum_{k=1}^m\left(\alpha_ka^*_k -\alpha_k^* a_k\right)} =\bigotimes_{k=1}^me^{\alpha_ka^*_k -\alpha_k^* a_k},
\end{align}
where we denoted by
 $a_k$ and $a^*_k$ the canonical bosonic annihilation and creation operators of the $k^{th}$ mode (see e.g.~\cite[Chapter 5.2]{BratteliRobinson_operator2_1997} or \cite[Chapter 8.3]{Teschl_mathematical_2014} for a definition and discussion). Note that $D(\alpha)$ is a unitary operator with $D^*(\alpha) =D(-\alpha)$ and satisfying the relation $D(\alpha)D(\beta) = e^{(\alpha\cdot\beta^*-\alpha^*\cdot\beta)/2}D(\alpha+\beta),$ where we denoted the real dot product between two complex vectors $\alpha,\beta\in\C^m$ by $\alpha\cdot\beta  = \sum_{k=1}^m\alpha_k\beta_k.$
 
 For a state $\rho$ we can define its \emph{characteristic function} as 
 \begin{align}
     \chi_\rho(\alpha) \coloneqq \Tr\left(\rho D(\alpha)\right).
 \end{align}
 The characteristic function is complex valued and satisfies $\chi^*_\rho(\alpha) = \chi_\rho(-\alpha)$ and furthermore $\chi_\rho \in L^q(\C^m)\cong L^q(\R^{2m})$ for all $q\in[2,\infty]$ \cite{Werner_QuantumHarmonicAnal_1984}. Therefore, we can define the so-called \emph{Wigner function} of $\rho$ as the (symplectic) Fourier transform of its characteristic function, i.e. formally\footnote{The integral expression \eqref{eq:WignerDefChar} is well-defined for $\chi_\rho$ being a Schwartz function or more generally in $L^1(\C^m)\cap L^2(\C^m).$ For general functions in $L^2(\C^m)$ the Fourier transform is then extended by density in the usual way, see e.g. \cite[Section 7.1]{Teschl_mathematical_2014}.} through the integral expression
\begin{align}
\label{eq:WignerDefChar}
    W_\rho(\alpha) \coloneqq \frac{1}{\pi^{2m}}\int_{\C^m} e^{\alpha\cdot\beta^*-\alpha^*\cdot\beta} \chi_\rho(\beta)d \beta = \frac{1}{\pi^{2m}}\int_{\C^m} e^{2i\Im(\alpha\cdot \beta^*)} \chi_\rho(\beta)d \beta,
\end{align}
with $d\beta \equiv d\Re(\beta)d\Im(\beta)$ denoting the Lebesgue measure on $\C^m.$ Note, that the Wigner function \eqref{eq:WignerDefChar}
is real valued and can have positive and negative values. For $\rho=\kb{\psi}$ being a pure state, the Wigner function is directly given by
\begin{align}
\label{eq:WignerPure}
    W_\rho(x,p) = \left(\frac{2}{\pi}\right)^m\int_{\R^m}\psi^*(x+y)\psi(x-y) e^{i2p\cdot y}dy,
\end{align}
where we changed coordinates to position and momentum variables $(x,p)\in\R^{2m}$ through the relation $\alpha = \frac{1}{\sqrt{2}}\left(x+ip\right).$ Note the integral in \eqref{eq:WignerPure} is well-defined as the integrand is in $L^1(\R^m)$ and furthermore that this definition of the Wigner function can be extended to general mixed states by linearity.

The Wigner function $W_\rho$ for any state $\rho$  is a \emph{quasiprobability distribution} as it can have positive and negative values and satisfies the normalisation\footnote{To be precise, this property holds rigorously for all states $\rho$ such that $W_\rho\in L^1(\C^m)$ \cite[Proposition 18]{de_gosson_wigner_2017}, which is a proper subset of the set of all states as discussed below.}
\begin{align}
\label{eq:WignerNormalisation}
    \int_{\C^m} W_\rho(\alpha) d\alpha = \frac{1}{2^m}\int_{\R^{2m}} W_\rho(x,p) dxdp = 1.
\end{align}
An important class of states which have non-negative Wigner function are \emph{Gaussian states} with defining property being that their Wigner (or equivalently characteristic) function is a Gaussian function. Notably, by Hudson's theorem \cite{HUDSON1974249,Soto_HudsonThMultimode_1983}, a pure state has non-negative Wigner function if and only if it is Gaussian. On the other hand, the set of general mixed states with non-negative Wigner function has been shown to be strictly bigger than the convex hull of all Gaussian states and is itself hard to characterise \cite{Mandilara_ExtendingHudson_2009,Radim_PosWignerbeyonMixGauss_2011}.

The Hilbert-Schmidt inner product between two states $\rho$ and $\sigma$ can by Plancherel's identity be expressed as the inner product of the corresponding characteristic- or Wigner functions as
\begin{align}
\label{eq:WignerOverlap}
\Tr(\rho\sigma) = \frac{1}{\pi^m}\int_{\C^m} \chi^*_\rho(\alpha)\chi_\sigma(\alpha)d \alpha = \pi^m\int_{\C^m} W_\rho(\alpha)W_\sigma(\alpha)d \alpha. 
\end{align}
Hence, in particular we know for $\rho$ being a state that
\begin{align}
\label{eq:L2Wigner}
    \left\|W_\rho\right\|_2 = \frac{1}{\pi^{m/2}}\|\rho\|_2 \le  \frac{1}{\pi^{m/2}}
\end{align}
where we denoted the Hilbert-Schmidt norm of $\rho$ by $\|\rho\|_2$ and used the fact that it is dominated by the trace norm.
Alternatively to \eqref{eq:WignerDefChar}, we can write the Wigner function of $\rho$ as the expectation value of the \emph{displaced parity operator} \cite{Royer_WignerAsExpOfParity_1977}, i.e.
\begin{align}
\label{eq:WignerParity}
    W_\rho(\alpha) = \left(\frac{2}{\pi}\right)^m\Tr\Big(\rho\, D(\alpha)\Par D^*(\alpha)\Big)
\end{align}
with $\Par$ being the \emph{parity operator} defined through the relation $(\Par\psi)(x) \coloneqq \psi(-x)$ for all $\psi\in L^2(\R^{m})$ or alternatively in Fock basis $$\Par = \sum_{n_1,\cdots,n_m\in\N_0}(-1)^{n_1+\cdots+n_m} \kb{n_1,\cdots,n_m}.$$ 
From that we see that the Wigner function is bounded with
\begin{align}
\label{eq:LinftyWigner}
\|W_\rho\|_\infty \le \left(\frac{2}{\pi}\right)^m.
 \end{align}
 From \eqref{eq:L2Wigner} and \eqref{eq:LinftyWigner} and Hölder's inequality we see that $W_\rho \in L^q(\C^m)$ for all $q\in[2,\infty].$ However, for all $q \in[1,2)$ there exists explicit (pure) states $\rho$ such that $W_\rho \notin L^q(\C^m)$ \cite[Theorem 2]{Simon_WignerWeyl_1992}. The set of functions $\psi\in L^2(\R^m)$ whose Wigner function of the corresponding (possibly unnormalised) pure state satisfies
 $W_{\kb{\psi}}\in L^1(\C^m)$ is hence a proper subset of $L^2(\R^m).$ It is called the \emph{Feichtinger algebra} and posses rich mathematical properties which are extensively studied in the literature \cite[Chapter 7]{de_gosson_wigner_2017}.

 \comment{The relation \eqref{eq:WignerParity} provides a way to measure the Wigner function experimentally \cite{Lutterbach_DirectMeasWigner_1997,Banaszek_DirectMeasurWigner_1999,Winkelmann_DirectMeasWigner_2022}: For that first displace the corresponding state $\rho$ by applying the unitary $D^*(\alpha) = D(-\alpha)$ and then then measure the photon number parity of the resulting state.}

\section{Black box overlap estimation}
\label{sec:BlackBox}

We consider some measurable space $\Omega$ with measure $\mu$ and employ the notation of Section~\ref{sec:LqNotation}. Let $r>0$ be some number, which is fixed throughout the whole section. We say one has \emph{black box access to some function} $g\in B^\infty_r(0)$ if for every $\lambda\in\Omega$ one can obtain samples from the unique\footnote{Note that the condition on $Y_\lambda$ uniquely determines its distribution to be $\mathbb{P}(Y_\lambda =r) = \frac{1+g(\lambda)/r}{2}$ and $\mathbb{P}(Y_\lambda =-r) = \frac{1-g(\lambda)/r}{2}$} random variable $Y_\lambda\equiv Y_\lambda(g)$ with outcome space $\{r,-r\}$ and \begin{align}
\label{eq:YAssumptions}
    \mathbb{E}[Y_\lambda] = g(\lambda).
\end{align} In the following we consider subsets $\mathcal{S} \subseteq  B^\infty_r(0)$ and assume to have black box access to the functions $g\in \cS.$

With that terminology the task of \emph{black box overlap estimation} is defined to be the following:
Let $f\in L^2(\Omega,\R) $ and $\cS\subseteq L^2(\Omega,\R)\cap B^\infty_r(0).$ Then,
provided full access to the function $f$ and black box access to a function $g\in\cS,$ the goal is to estimate 
\begin{align}
\label{eq:Overlapblack box}
    \int_{\Omega}f(\lambda)g(\lambda)d\mu(\lambda) = \int_{\Omega}f(\lambda)\, \mathbbm{E}[Y_\lambda] d\mu(\lambda)
\end{align}
using as few samples as possible of the random variable $Y_\lambda\equiv Y_\lambda(g)$ for different points $\lambda\in\Omega$. More precisely, for $N\in\N,$ a \emph{$N$-query algorithm} $\mathcal{\cA}$ for the task of black box overlap estimation takes points $\lambda_1,\cdots,\lambda_N\in \Omega$ and for each $\lambda_t$ a sample $y_{\lambda_t}\in\{r,-r\}$ from $Y_{\lambda_t}.$ Here, we allow for adaptive algorithms meaning that the points $\lambda_t,$ or their distributions when chosen at random, can depend on $f$ and the previous points $\lambda_1,\cdots,\lambda_{t-1}$ and corresponding samples, i.e.
\begin{align}\label{eq:AlgorihmBlackBoxOverlap}
    \nn\lambda_1\equiv&\,\lambda_1(f) &\longrightarrow \quad &y_{\lambda_1} \ \text{ sample from } Y_{\lambda_1}\\\nn\lambda_2\equiv &\, \lambda_2(f,\lambda_1,y_{\lambda_1}) &\longrightarrow \quad &y_{\lambda_2} \ \text{ sample from } Y_{\lambda_2}\\\nn&\vdots& &\vdots\\\lambda_N\equiv&\,\lambda_N(f,\lambda_1,y_{\lambda_1},\cdots,\lambda_{ N-1},y_{\lambda_{N-1}}) &\longrightarrow \quad &y_{\lambda_N} \ \text{ sample from } Y_{\lambda_N}.
\end{align}
The algorithm then outputs a value $\phi_\mathcal{A}(f,\lambda_1,y_{\lambda_1},\cdots,\lambda_N,y_{\lambda_N})\in\R.$ We demand that $\cA$ works for any function $g\in\cS$, i.e. for all $g\in\cS$ fixed its output should estimate the corresponding overlap \eqref{eq:Overlapblack box}.
We say the algorithm $\cA$ has accuracy $\eps>0$ with failure probability bounded by $\delta>0$ if for all $g\in\cS$
\begin{align*}
\mathbb{P}\left(\,\left|\phi_\mathcal{A}(f,\lambda_1,y_{\lambda_1},\cdots,\lambda_N,y_{\lambda_N})- \int_{\Omega}f(\lambda)g(\lambda)d\mu(\lambda)\right| >  \eps\right) \le \delta. 
\end{align*}
The optimal sample complexity of black box overlap estimation for the function $f$ and set $\cS$ is hence defined as
\begin{align*}
N_{\eps,\delta}(f,\cS) = \inf\left\{N\in\N \Big| \,\exists  N\text{-query algorithm with accuracy $\eps$ and failure probability bounded by $\delta$}\right\}.
\end{align*}

 Theorem~\ref{thm:blackboxOverEstimation} below, with proof given in Section~\ref{sec:ProofUpperBlack}, proposes an algorithm achieving this objective and provides an upper bound on the needed number of points $\lambda$ to take samples from $Y_\lambda.$ Furthermore, Theorem~\ref{thm:LowerBoundblackbox} provides corresponding lower bounds and is proved in Section~\ref{sec:ProofLowerBlack}.
 We see that in the case of $\cS\equiv\cS_{\max}=B^2_1(0)\cap B^\infty_r(0),$ the optimal sample complexity is characterised by the smoothed $L^1$-norm of $f$ (see~\eqref{eq:SmoothedL1Intro} for the definition) as we show for all $\eps,\delta>0$
 \begin{align}
 \label{eq:SmoothedUpperLowerSmax}
\left(\frac{r\|f\|^{(2\eps)}_1}{2\eps}\right)^2\log\left(\frac{1}{3\delta}\right) \ \le\, N_{\eps,\delta}(f,\cS_{\max}) \,\le \ 2\inf_{\eps'\in [0,\eps)}\left(\frac{r\|f\|^{(\eps')}_1}{\eps-\eps'}\right)^{2}\log\left(\frac{1}{\delta}\right).
 \end{align}

\begin{theorem}[Upper bound for black box overlap estimation]\label{thm:blackboxOverEstimation}
   Let $f\in L^2(\Omega,\R),$ $r>0$ and $\cS \subseteq B^2_1(0)\cap B^\infty_r(0).$ Then we have  
   \begin{align}
   \label{eq:blackboxUpper}
       N_{\eps,\delta}(f,\cS) \le 2\inf_{\eps'\in [0,\eps)}\left(\frac{r\|f\|^{(\eps')}_1}{\eps-\eps'}\right)^{2}\log\left(\frac{1}{\delta}\right)
   \end{align}
   The upper bound holds true even when restricting to non-adaptive algorithms, i.e. where the $\lambda_t$ in \eqref{eq:AlgorihmBlackBoxOverlap} can depend on the function $f$ but not on the previous points $\lambda_1,\cdots,\lambda_{t-1}$ and the corresponding samples $y_{\lambda_1},\cdots,y_{\lambda_{t-1}}.$
\end{theorem}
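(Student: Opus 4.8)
The plan is to construct an explicit importance-sampling estimator based on a near-optimal smoothing function, and to control its error via a Chernoff–Hoeffding bound. Fix $\eps' \in [0,\eps)$ and let $\tilde f \in L^1(\Omega,\R)\cap L^2(\Omega,\R)$ be a function achieving (nearly) the infimum in the definition of $\|f\|^{(\eps')}_1$, so $\|f-\tilde f\|_2 \le \eps'$ and $\|\tilde f\|_1$ is arbitrarily close to $\|f\|^{(\eps')}_1$. First I would define the probability density $p(\lambda) = |\tilde f(\lambda)|/\|\tilde f\|_1$ on $\Omega$, draw $\lambda_1,\dots,\lambda_N$ i.i.d. from $p$ (this is non-adaptive, depending only on $f$, hence on $\tilde f$), and for each draw obtain one sample $y_{\lambda_t}$ from $Y_{\lambda_t}$. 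The estimator is
\begin{align*}
\phi_{\cA} = \frac{1}{N}\sum_{t=1}^{N} \|\tilde f\|_1 \, \sgn(\tilde f(\lambda_t))\, y_{\lambda_t}.
\end{align*}
Taking expectations and using $\mathbb{E}[y_{\lambda_t}\mid \lambda_t] = g(\lambda_t)$ gives $\mathbb{E}[\phi_{\cA}] = \int_\Omega \|\tilde f\|_1\, \sgn(\tilde f(\lambda))\, g(\lambda)\, p(\lambda)\, d\mu(\lambda) = \int_\Omega \tilde f(\lambda) g(\lambda)\, d\mu(\lambda)$, so the estimator targets the overlap with $\tilde f$ rather than $f$.

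Next I would bound the resulting bias: by Cauchy–Schwarz,
\begin{align*}
\left| \int_\Omega \tilde f g\, d\mu - \int_\Omega f g\, d\mu \right| \le \|f-\tilde f\|_2 \, \|g\|_2 \le \eps' \cdot 1 = \eps',
\end{align*}
using $g \in B^2_1(0)$. Thus it suffices to estimate $\int \tilde f g\, d\mu$ to accuracy $\eps - \eps'$ with failure probability $\delta$. For the concentration, each summand $\|\tilde f\|_1 \sgn(\tilde f(\lambda_t)) y_{\lambda_t}$ lies in $[-r\|\tilde f\|_1, r\|\tilde f\|_1]$ since $y_{\lambda_t} \in \{r,-r\}$, so the terms are bounded i.i.d. random variables of range $2r\|\tilde f\|_1$. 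Hoeffding's inequality gives
\begin{align*}
\mathbb{P}\left( \left| \phi_{\cA} - \mathbb{E}[\phi_{\cA}] \right| > \eps - \eps' \right) \le 2\exp\!\left( -\frac{N(\eps-\eps')^2}{2 r^2 \|\tilde f\|_1^2} \right),
\end{align*}
which is at most $\delta$ once $N \ge 2\left(\frac{r\|\tilde f\|_1}{\eps-\eps'}\right)^2 \log(1/\delta)$ (I would double-check the constant; if the two-sided Hoeffding bound costs an extra $\log 2$, one can absorb it or use a slightly looser constant, but the displayed bound with constant $2$ should come out cleanly, perhaps by noting $2e^{-x} \le e^{-x/2}$-type manipulations or simply that $\delta < 1$ makes the $\log(1/\delta)$ term dominate — I'd verify this is consistent with how the factor $2$ appears in \eqref{eq:blackboxUpper}). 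Combining the bias bound with the concentration bound yields total error at most $\eps$ with probability at least $1-\delta$, so $N_{\eps,\delta}(f,\cS) \le 2\left(\frac{r\|\tilde f\|_1}{\eps-\eps'}\right)^2\log(1/\delta)$; letting $\|\tilde f\|_1 \to \|f\|^{(\eps')}_1$ and then taking the infimum over $\eps' \in [0,\eps)$ gives the claim, and the whole construction was non-adaptive.

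The main subtlety — rather than a genuine obstacle — is handling the case $\|\tilde f\|_1 = 0$ (then $f = 0$ on the support, and one can take $\phi_{\cA} \equiv 0$) and, more importantly, making sure the smoothed $L^1$-norm is finite so that $p$ is a well-defined density: this is exactly the density of $L^1 \cap L^2$ in $L^2$ noted in Section~\ref{sec:LqNotation}, guaranteeing $\|f\|^{(\eps')}_1 < \infty$ for $\eps' > 0$. For $\eps' = 0$ the bound is vacuous unless $f \in L^1$, in which case $\|f\|^{(0)}_1 = \|f\|_1$ and one simply takes $\tilde f = f$. A second point to keep an eye on is measurability of $\lambda \mapsto \sgn(\tilde f(\lambda))$ and the integrability needed to justify the expectation computation, but these are routine since $\tilde f \in L^1$.
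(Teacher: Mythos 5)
Your proposal is correct and follows essentially the same route as the paper's proof: importance sampling from $p(\lambda)=|\tilde f(\lambda)|/\|\tilde f\|_1$ for a near-optimal smoothing function $\tilde f$, the bounded estimator $\|\tilde f\|_1\,\sgn(\tilde f(\lambda_t))\,y_{\lambda_t}$, Hoeffding for the statistical error $\eps-\eps'$, Cauchy--Schwarz for the bias $\eps'$, and then the infimum over $\eps'$ and $\tilde f$. The factor-of-two in the two-sided Hoeffding bound that you flag is glossed over in the paper's proof as well, so your treatment matches theirs in substance.
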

\begin{remark}
The algorithm proposed in the proof of Theorem~\ref{thm:blackboxOverEstimation} also works more generally if instead of having access to the two-outcome random variable $Y_\lambda,$ we could sample from a bounded random variable $\tilde Y_\lambda$ satisfying
\begin{align}
    |\tilde Y_\lambda|\le r\quad\quad\text{and}\quad\quad\mathbbm{E}[\tilde Y_\lambda]=g(\lambda)
\end{align}
for all $\lambda\in\Omega.$ In fact, for this slightly changed model of black box overlap estimation, we obtain the same upper bound as \eqref{eq:blackboxUpper} for the corresponding sample complexity.
\end{remark}

\begin{theorem}[Lower bound for black box overlap estimation]\label{thm:LowerBoundblackbox}
   Let $ f\in L^2(\Omega,\R),$  $r>0$ and $\cS\subseteq L^2(\Omega,\R)\cap B^{\infty}_r(0).$ Then for $\eps,\delta >0$ the sample complexity of black box overlap estimation is lower bounded as
    \begin{align}
   \label{eq:LowerBoundblackboxSup}
       N_{\eps,\delta}(f,\cS) \ge \sup_{\substack{g_1,g_2\in\cS,\\ \left|\int \! f(g_1-g_2)\right| \,>\,2\eps}}\,\left\|\frac{(g_1-g_2)^2}{r^2-g_2^2}\right\|^{-1}_{\infty}\log\left(\frac{1}{3\delta}\right).
   \end{align}
   Furthermore, for $\cS\equiv\cS_{\max}= B^2_1(0)\cap B^\infty_r(0)$ and $f\in\cS_{\max}$ with $\|f\|_2=1$ this gives
   \begin{align}
   \label{eq:LowerBoundblackbox}
       N_{\eps,\delta}(f,\cS) \ge \left(\frac{r\|f\|^{(2\eps)}_1}{2\eps}\right)^2\log\left(\frac{1}{3\delta}\right).
   \end{align}

\end{theorem}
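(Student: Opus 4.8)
The plan is to prove the general bound \eqref{eq:LowerBoundblackboxSup} by a two‑point (Le Cam) hypothesis‑testing argument, and then to derive the special case \eqref{eq:LowerBoundblackbox} from it by feeding in an explicit near‑optimal pair $(g_1,g_2)$.

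For \eqref{eq:LowerBoundblackboxSup}: fix $g_1,g_2\in\cS$ with $\left|\int_\Omega f(g_1-g_2)\,d\mu\right|>2\eps$ and let $\cA$ be an $N$‑query algorithm with accuracy $\eps$ and failure probability at most $\delta$. Running $\cA$ against the black boxes of $g_1$ and of $g_2$ induces laws $\mathbb{P}_{g_1},\mathbb{P}_{g_2}$ on the transcript $(\lambda_1,y_{\lambda_1},\dots,\lambda_N,y_{\lambda_N})$; since the two overlaps differ by more than $2\eps$, the event ``the output of $\cA$ is within $\eps$ of $\int f g_1\,d\mu$'' has probability at least $1-\delta$ under $\mathbb{P}_{g_1}$ and at most $\delta$ under $\mathbb{P}_{g_2}$, so by the data‑processing inequality and the estimate $\KL(1-\delta\|\delta)\ge\log(1/(3\delta))$ from Section~\ref{sec:Prelim} we get $\KL(\mathbb{P}_{g_1}\|\mathbb{P}_{g_2})\ge\log(1/(3\delta))$. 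To bound this from above I use the chain rule: the transcript KL splits over the $N$ rounds, the rule producing $\lambda_t$ from $f$ and the past is common to both boxes so it contributes nothing, and round $t$ contributes the $\mathbb{P}_{g_1}$‑expectation of $\KL\!\big(\tfrac{1+g_1(\lambda_t)/r}{2}\,\big\|\,\tfrac{1+g_2(\lambda_t)/r}{2}\big)$, the divergence between the two Bernoulli laws of $y_{\lambda_t}$. Bounding $\KL\le\chi^2$ and computing, this term equals $\frac{(g_1(\lambda_t)-g_2(\lambda_t))^2}{r^2-g_2(\lambda_t)^2}\le\left\|\frac{(g_1-g_2)^2}{r^2-g_2^2}\right\|_\infty$, so $\KL(\mathbb{P}_{g_1}\|\mathbb{P}_{g_2})\le N\left\|\frac{(g_1-g_2)^2}{r^2-g_2^2}\right\|_\infty$. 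Combining the two estimates and taking the supremum over admissible pairs proves \eqref{eq:LowerBoundblackboxSup}.

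To obtain \eqref{eq:LowerBoundblackbox} I would feed into this the pair $g_2=0$ (so the denominator is the full $r^2$) and $g_1$ a rescaled soft‑truncation of $f$. Writing $\widetilde f=\sgn(f)\max(|f|-\theta,0)$ for the soft‑thresholding attaining $\|f\|_1^{(2\eps)}$, with $\theta$ fixed by $\|f-\widetilde f\|_2=\|\sgn(f)\min(|f|,\theta)\|_2=2\eps$ (so that $0<\theta<\|f\|_\infty\le r$ when $\eps<1/2$), take $g_1=\frac{c}{\theta}\,\sgn(f)\min(|f|,\theta)$ for a scalar $c>0$. A short computation gives $\|g_1\|_\infty=c$, $\|g_1\|_2=\frac{2\eps}{\theta}c$ and $\int f g_1\,d\mu=\frac{c}{\theta}\big(4\eps^2+\theta\|f\|_1^{(2\eps)}\big)$, so the choice $c=\min\!\big(\frac{\theta}{2\eps},\,\frac{2\eps}{\|f\|_1^{(2\eps)}}\big)$ yields at once $\|g_1\|_2\le1$, $\int f g_1\,d\mu>2\eps$ and $\|g_1\|_\infty\le\frac{2\eps}{\|f\|_1^{(2\eps)}}$; when moreover $c\le r$ (so $g_1\in\cS_{\max}$), \eqref{eq:LowerBoundblackboxSup} delivers $N\ge\frac{r^2}{c^2}\log(1/(3\delta))\ge\big(\frac{r\|f\|_1^{(2\eps)}}{2\eps}\big)^2\log(1/(3\delta))$. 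The degenerate cases are trivial: if $\eps\ge1/2$ then $\|f\|_1^{(2\eps)}=0$ and the claim is empty, while if $c>r$ then $\|f\|_1^{(2\eps)}<2\eps/r$, so the target lies below $\log(1/(3\delta))$, which the admissible pair $(f,0)$ already supplies via $N\ge\frac{r^2}{\|f\|_\infty^2}\log(1/(3\delta))\ge\log(1/(3\delta))$ (using $\int f\cdot f\,d\mu=1>2\eps$).

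I expect the first steps to be routine — the two‑point reduction is standard and the $\chi^2$‑divergence of two Bernoullis is a one‑line computation, the only mild care being that the per‑round divergence must be controlled at the (possibly atomic, adversarially chosen) query points, which is precisely why the $L^\infty$‑norm enters. The main obstacle should be the last step: recognising that the minimiser defining $\|f\|_1^{(2\eps)}$ is a soft‑thresholding of $f$, and calibrating the scaling $c$ of $g_1$ so that $\|g_1\|_\infty$ is driven down to $2\eps/\|f\|_1^{(2\eps)}$ while $g_1$ stays in the unit $L^2$‑ball and the overlap lower bound $\int f g_1>2\eps$ is preserved.
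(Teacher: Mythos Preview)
Your argument for \eqref{eq:LowerBoundblackboxSup} is identical to the paper's: the same two-hypothesis Le Cam reduction, the same chain-rule decomposition with the query rule cancelling, and the same $\KL\le\chi^2$ step on the per-round Bernoullis yielding $\frac{(g_1-g_2)^2}{r^2-g_2^2}$.

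For \eqref{eq:LowerBoundblackbox} your route is correct but genuinely different from the paper's. The paper isolates the construction of the test function into a separate lemma (Lemma~\ref{lem:tildeFGLowerbound}), built from a \emph{hard} thresholding $\widetilde f=f\mathbf{1}_{\{|f|\ge c_\star\}}$ followed by a two-case definition of $g$; since that lemma only delivers $\int fg\ge 2\eps$ (non-strict), a further perturbation step (convex-combining with $f$) is then needed to reach the strict inequality required in the supremum. Your \emph{soft}-thresholding construction $g_1=\tfrac{c}{\theta}\sgn(f)\min(|f|,\theta)$ is cleaner: the identity $\int fg_1=\tfrac{c}{\theta}\bigl(4\eps^2+\theta\|\widetilde f\|_1\bigr)$ is strictly larger than $2\eps$ for either branch of your minimum defining $c$, so the perturbation step disappears and the two cases are absorbed into one formula. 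Your handling of the degenerate regime $c>r$ via the admissible pair $(f,0)$ coincides with the paper's. One small point: you assert that the soft-thresholded $\widetilde f$ \emph{attains} $\|f\|_1^{(2\eps)}$, but your argument never uses this---it only needs $\|\widetilde f\|_1\ge\|f\|_1^{(2\eps)}$, which holds because $\widetilde f$ is feasible in the infimum defining the smoothed norm. Either drop the claim or note that it is not needed.
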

\comment{\begin{remark}
A slightly weakened version of \eqref{eq:LowerBoundblackbox} can be proven to  hold, without assuming $f\in\cS$ with $\|f\|_2=1.$ In this case, merely assuming $\cS= B^2_1(0)\cap B^\infty_r(0)$, we see from \eqref{eq:LowerBoundblackboxSup} by using the same argument as in the proof of Theorem~\ref{thm:LowerBoundblackbox} that 
  \begin{align}
       N_{\eps,\delta}(f,\cS) \ge \sup_{\eps'>\eps}\left(\frac{r\|f\|^{(2\eps')}_1}{2\eps'}\right)^2\log\left(\frac{1}{3\delta}\right).
   \end{align}
   
\end{remark}}
For a special class of functions $f$, satisfying a certain decay of tail condition, we can show in Corollary~\ref{cor:SmaxTheta} below that the upper and lower bounds in \eqref{eq:SmoothedUpperLowerSmax} match up to constants. Here, it is important to have good control over the behaviour of the smoothed $L^1$-norm $\|f\|^{(\eps)}_1$ for different values of $\eps.$ As we see below, the sample complexity of black box overlap estimation is in this case then characterised by the usual $L^1$-norm of the function $f.$

For $\gamma,\kappa> 0$ and $\Omega_0\subseteq \Omega$ a finite measure set, we say a measurable function $f$ has a \emph{rapidly decaying tail of order $(\gamma,\kappa)$ outside $\Omega_0$} if for all $\delta>0$ we have 
\begin{align}
\label{eq:DecayingTailBlackBlox}
    \int_{\{|f|\le \delta\}\cap \Omega^c_0}\, |f(\lambda)| d\mu(\lambda) \le \kappa \,\delta^\gamma.
\end{align}
Note that a function $f\in L^2(\Omega,\R)$ satisfying this condition can be also shown to be in $L^1(\Omega,\R).$
Functions satisfying this condition are discussed in detail in Appendix~\ref{sec:Tail}.  In particular, in Lemma~\ref{lem:SmoothedLowerBound} it is shown that their smoothed $L^1$-norm satisfies the lower bound
\begin{align}
\label{eq:SmoothedLowerBlackBox}
    \|f\|^{(\eps)}_1 \ge \|f\|_1 -\kappa' \eps^{\beta} - \eps\sqrt{\mu(\Omega_0)}
\end{align}
with $\beta := \tfrac{\gamma}{\gamma+1}\in (0,1]$ and $\kappa':=2\kappa^{\frac{1}{1+\gamma}}.$
\begin{corollary}
\label{cor:SmaxTheta}
 Let $f\in\cS_{\max}= B^2_1(0)\cap B^\infty_r(0)$ with $\|f\|_2=1$ and rapidly decaying tail of order $(\gamma,\kappa)$ outside $\Omega_0$ for some $\gamma,\kappa>0$ and finite measure set $\Omega_0\subseteq\Omega$ such that $\kappa,\sqrt{\mu(\Omega_0)}\le c\|f\|_1$ for some universal constant $c\ge 0.$ Then we have 
\begin{align}
    N_{\eps,\delta}(f,\cS_{\max}) = \Theta\left(\left(\frac{r\|f\|_1}{\eps}\right)^2\log\left(\frac{1}{\delta}\right)\right)
\end{align}
for $\eps,\delta>0$ small enough, i.e. for all $\eps,\delta\in(0,c']$ where $c'$ depends on $c$ but not on $f.$
\end{corollary}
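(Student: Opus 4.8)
The plan is to combine the general upper bound of Theorem~\ref{thm:blackboxOverEstimation} with the general lower bound of Theorem~\ref{thm:LowerBoundblackbox} and then control the smoothed $L^1$-norm from both sides using the rapidly-decaying-tail hypothesis, so that in the relevant small-$\eps,\delta$ regime both the upper and lower bound reduce to $\Theta\bigl((r\|f\|_1/\eps)^2\log(1/\delta)\bigr)$.

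\textbf{Upper bound.} Theorem~\ref{thm:blackboxOverEstimation} gives $N_{\eps,\delta}(f,\cS_{\max}) \le 2\inf_{\eps'\in[0,\eps)}\bigl(r\|f\|^{(\eps')}_1/(\eps-\eps')\bigr)^2\log(1/\delta)$. I would simply pick $\eps'=0$ (or $\eps'=\eps/2$) and use the trivial monotonicity-type bound $\|f\|^{(\eps')}_1\le\|f\|_1$ — which holds since $f$ itself is an admissible competitor in the infimum defining the smoothed norm, as $f\in L^1\cap L^2$ by the remark following \eqref{eq:DecayingTailBlackBlox}. This already yields $N_{\eps,\delta}(f,\cS_{\max}) \le 8\bigl(r\|f\|_1/\eps\bigr)^2\log(1/\delta)$, i.e. the $\mathcal{O}$ direction, with no smallness restriction on $\eps,\delta$ needed.

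\textbf{Lower bound.} Here I use \eqref{eq:LowerBoundblackbox}, which gives $N_{\eps,\delta}(f,\cS_{\max}) \ge \bigl(r\|f\|^{(2\eps)}_1/(2\eps)\bigr)^2\log\bigl(1/(3\delta)\bigr)$ (valid since $\|f\|_2=1$). Now invoke \eqref{eq:SmoothedLowerBlackBox}: $\|f\|^{(2\eps)}_1 \ge \|f\|_1 - \kappa'(2\eps)^\beta - 2\eps\sqrt{\mu(\Omega_0)}$ with $\beta=\gamma/(\gamma+1)$ and $\kappa'=2\kappa^{1/(1+\gamma)}$. The hypothesis $\kappa,\sqrt{\mu(\Omega_0)}\le c\|f\|_1$ lets me bound the two correction terms by a constant multiple of $\|f\|_1$ times a small power of $\eps$: explicitly $\kappa'(2\eps)^\beta \le 2(c\|f\|_1)^{1/(1+\gamma)}(2\eps)^\beta$, and since $\|f\|_1\ge\|f\|_2\cdot(\text{something})$... actually one must be a touch careful here — $\|f\|_1$ could be large, making $\|f\|_1^{1/(1+\gamma)}$ small relative to $\|f\|_1$, which only helps. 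So for $\eps\le c'$ with $c'$ depending only on $c,\gamma$, both correction terms are at most $\tfrac14\|f\|_1$ each, giving $\|f\|^{(2\eps)}_1\ge \tfrac12\|f\|_1$. Also $\log(1/(3\delta))\ge \tfrac12\log(1/\delta)$ once $\delta\le c'$. Plugging in yields $N_{\eps,\delta}(f,\cS_{\max}) \ge \tfrac{1}{32}\bigl(r\|f\|_1/\eps\bigr)^2\log(1/\delta)$, the $\Omega$ direction.

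\textbf{Main obstacle.} The only genuinely delicate point is making the smallness threshold $c'$ depend on $c$ (and the fixed tail parameters $\gamma,\kappa$ — though $\kappa$ is already absorbed via $c$) but \emph{not} on $f$ itself, in particular not on $\|f\|_1$ which is unbounded over the family of interest. This requires writing the correction terms in \eqref{eq:SmoothedLowerBlackBox} as $\|f\|_1$ times a factor that is small uniformly in $\|f\|_1$: for the term $\kappa'(2\eps)^\beta$ one uses $\kappa'\le 2(c\|f\|_1)^{1/(1+\gamma)}$ and the elementary fact that $x^{1/(1+\gamma)}/x = x^{-\gamma/(1+\gamma)}$ is bounded on $\{x\ge x_0\}$ — so one may need a mild a priori lower bound $\|f\|_1\ge\|f\|_2=1$ (which holds when $\mu(\Omega)$ is handled appropriately, or more simply since on the Pauli/Wigner instances $\|f\|_1\ge 1$; in the abstract statement one can note $\|f\|_1\ge\|f\|_2^2/\|f\|_\infty$ is not automatic, so the cleanest route is to absorb any such normalisation into the constant $c$). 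I would state this carefully and then the rest is the routine arithmetic of choosing $c'$ so that $2(2c)^{1/(1+\gamma)}(2c')^\beta\le 1/4$ and $2c\cdot c'\le 1/4$ and $3c'\le 1$, after which the two chains of inequalities above close.
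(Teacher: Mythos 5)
Your proposal is correct and follows essentially the same route as the paper: the upper bound is Theorem~\ref{thm:blackboxOverEstimation} with $\eps'=0$ and $\|f\|_1^{(\eps')}\le\|f\|_1$, and the lower bound is \eqref{eq:LowerBoundblackbox} combined with the tail bound \eqref{eq:SmoothedLowerBlackBox} and the hypothesis $\kappa,\sqrt{\mu(\Omega_0)}\le c\|f\|_1$, exactly as in the paper's (terser) argument. Your worry about uniformity in $f$ is resolved more simply than you suggest: Hölder gives $\|f\|_2^2\le\|f\|_\infty\|f\|_1$ automatically, so $\|f\|_1\ge 1/r$ and the threshold $c'$ can be chosen depending only on $c$, $r$ and $\gamma$ (the paper's ``independent of $f$'' is implicitly this same statement).
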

\begin{proof}
We get from Theorem~\ref{thm:blackboxOverEstimation} and the definition of the smoothed $L^1$-norm in \eqref{eq:SmoothedL1Norm} the upper bound
\begin{align*}
    N_{\eps,\delta}(f,\cS_{\max}) \le 2\left(\frac{r\|f\|_1}{\eps}\right)^2\log\left(\frac{1}{\delta}\right).
\end{align*}
On the other hand from Lemma~\ref{lem:SmoothedLowerBound} and the fact that $\kappa,\sqrt{\mu(\Omega_0)}\le \mathcal{O}(\|f\|_1),$ we have
\begin{align*}
    \|f\|^{(2\eps)}_1\ge \|f\|_1\left(1-C\eps^\beta\right)
\end{align*}
for some $C\ge 0$ independent of $f$ and $\eps.$
Hence, by the 
lower bound \eqref{eq:LowerBoundblackbox} in Theorem~\ref{thm:LowerBoundblackbox} we find for all $\eps,\delta>0$ small enough
\begin{align*}
    N_{\eps,\delta}(f,\cS_{\max}) &\ge \left(\frac{r\|f\|^{(2\eps)}_1}{2\eps}\right)^2\log\left(\frac{1}{3\delta}\right)  \gtrsim \left(\frac{r\|f\|_1}{2\eps}\right)^2\log\left(\frac{1}{3\delta}\right).
\end{align*}
\end{proof}
\subsection{Proof of the upper bound on the sample complexity $N_{\eps,\delta}(f,\cS)$}
\label{sec:ProofUpperBlack}
In this section we prove the upper bound on the sample complexity of black box overlap estimation given in Theorem~\ref{thm:blackboxOverEstimation}. The proof works by providing a specific algorithm for the estimation task: Here, the tester chooses random points $\lambda\in\Omega$ by sampling from the distribution
\begin{align}\label{L1sample}
    p(\lambda) = \frac{|\widetilde f(\lambda)|}{\|\widetilde f\|_1}, 
\end{align}
where $\widetilde f$ is essentially the minimiser in the optimisation of the smoothed $L^1$-norm in \eqref{eq:SmoothedL1Norm}. For every chosen $\lambda,$ a sample of the bounded random variable $Y_\lambda\,\|\widetilde f\|_1\sgn(f(\lambda))$ is obtained. Using these samples, an empirical estimator is calculated which, provided the number of samples is high enough, is close to the overlap 
\begin{align*}
    \int_\Omega \widetilde f(\lambda)g(\lambda) d\mu(\lambda)
\end{align*}
by Hoeffding's inequality \cite{hoeff}. This overlap approximates the desired one in \eqref{eq:Overlapblack box} since $\widetilde f$ and the target function $f$ are close in $L^2$-norm.

In \cite{Flammia_DirectFidelityEstimation_2011,Silva_PracticalFidel_2011} a similar algorithm has been considered for the concrete case of fidelity estimation in the Pauli model. Formulating their approach in terms in the abstract black box overlap estimation framework, the authors consider the distribution
\begin{align}
p_2(\lambda) = \frac{f^2(\lambda)}{\|f\|^2_2},
\end{align}
instead the distribution $p(\lambda)$ above, for sampling $\lambda\in\Omega.$ For every chosen $\lambda$ they then take samples from the random variable $\|f\|^2_2Y_\lambda/f(\lambda)$ to construct an empirical estimator for the overlap \eqref{eq:Overlapblack box}. This random variable is no longer bounded, leading to a worse upper bound on the corresponding sample complexity for many functions $f.$ Recently, \cite{Guo2023May} use 
the same  sampling distribution  as ours for  the problem of  estimation the expectation values of observables in the Pauli model. However, their analysis is different than ours.

\begin{proof}[Proof of Theorem~\ref{thm:blackboxOverEstimation}]
Let $\eps>0$, $\eps'\in[0,\eps)$ \comment{and such that $\|f\|^{(\eps')}_1$ is finite. The existence of such a $\eps'$ can without loss of generality be assumed as otherwise \eqref{eq:blackboxUpper} is trivially satisfied. } and  $\widetilde f\in L^2(\Omega,\R)\cap L^1(\Omega,\R)$ such that $\|f-\widetilde f\|_2\le \eps'.$ Note that such a function always exists since $L^2(\Omega,\R)\cap L^1(\Omega,\R)$ is dense in $L^2(\Omega,\R)$ as remarked in Section~\ref{sec:LqNotation}.
Consider the random variable $\Lambda$ with outcome space $\Omega$ and distributed according to the probability density
\begin{align*}
    p(\lambda) = \frac{|\widetilde f(\lambda)|}{\|\widetilde f\|_1}.  
\end{align*}
Let $g\in\cS\subseteq  B^2_1(0)\cap B^\infty_r(0)$ for which we assume to have black box access and $Y_\lambda\equiv Y_\lambda(g)$ the corresponding random variable with outcome space $\{r,-r\}$ satisfying \eqref{eq:YAssumptions}.
Consider the random variable
\begin{align*}
    X =  Y_\Lambda\  \|\widetilde f\|_1\,\sgn(\widetilde f(\Lambda)).
\end{align*}
Note that by definition we have
\begin{align*}
    \mathbbm{E}[X] = \mathbbm{E}\big[ Y_\Lambda\  \|\widetilde f\|_1\,\sgn(\widetilde f(\Lambda)) \big]=\int_\Omega \frac{|\widetilde f(\lambda)|}{\|\widetilde f\|_1} g(\lambda) \|\widetilde f\|_1\,\sgn(\widetilde f(\Lambda)) d\mu(\lambda)=\int_\Omega \widetilde f(\lambda)g(\lambda) d\mu(\lambda).
\end{align*} 

For $N\in\N$ to be determined later we take samples $\lambda_1,\cdots,\lambda_N$ of $\Lambda$ and then for each $\lambda_i$ a sample $y_{\lambda_i}$ of $Y_{\lambda_i}.$ Furthermore, we define the empirical average 
\begin{align*}
    \overline{X} = \frac{1}{N}\sum_{i=1}^N\ y_{\lambda_i}\ \|\widetilde f\|_1\,\sgn(\widetilde f(\lambda_i)).
\end{align*}
Noting that by definition and assumption \eqref{eq:YAssumptions} the random variable $X$ is bounded as $$\left|X\right|  \le r \|\widetilde f\|_1,$$ we can employ Hoeffding's inequality \cite{hoeff} which gives
\begin{align}
\label{eq:ProbEstimateBlack}
\mathbb{P}\left(\,\left|\,\overline{X} - \int_\Omega \widetilde f(\lambda)g(\lambda) d\mu(\lambda)\right|\ge \eps -\eps'\right) \le  \exp\left(-\frac{N(\eps-\eps')^2}{2r^2\|\widetilde f\|^2_1}\right).
\end{align}
Furthermore note that by Cauchy Schwarz inequality and the assumption that $\|g\|_2\le 1$ we have
\begin{align*}
    \left|\int_\Omega f(\lambda)g(\lambda) d\mu(\lambda) - \int_\Omega \widetilde f(\lambda)g(\lambda) d\mu(\lambda)\right| \le\|f -\widetilde f\|_2 \le \eps'.
\end{align*}
Hence, we see that for $\delta>0$ a total number of    
\begin{align*}
N = 2\left(\frac{r\|\widetilde f\|_1}{\eps-\eps'}\right)^2\log\left(\frac{1}{\delta}\right)  
\end{align*}
 samples suffices to have with probability greater than $1-\delta$:
\begin{align*}
    \left|\,\overline{X} - \int_\Omega \widetilde f(\lambda)g(\lambda) d\mu(\lambda)\right| &\le \eps -\eps'
   \;\;\text{and}\;\; \left|\int_\Omega f(\lambda)g(\lambda) d\mu(\lambda) - \int_\Omega \widetilde f(\lambda)g(\lambda) d\mu(\lambda)\right| \le \eps'
\end{align*}
 and thus, by the triangle inequality, 
give a good approximation of the desired overlap as 
\begin{align*}
  \left|\int_\Omega f(\lambda)g(\lambda) d\mu(\lambda) -  \overline{X}\right| &\le  \left|\int_\Omega f(\lambda)g(\lambda) d\mu(\lambda) - \int_\Omega \widetilde f(\lambda)g(\lambda) d\mu(\lambda)\right|+ \left|\,\overline{X} - \int_\Omega \widetilde f(\lambda)g(\lambda) d\mu(\lambda)\right|
 \\&\le \eps'+ \eps-\eps'= \eps.
\end{align*}
 Since $\eps'\in[0,\eps)$ and $\widetilde f$ under the constraint above were arbitrary, this shows \eqref{eq:blackboxUpper}.

\end{proof}
\subsection{Proof of the lower bound on the sample complexity $N_{\eps,\delta}(f,\cS)$}
\label{sec:ProofLowerBlack}
Here, we prove the lower bound on the sample complexity of black box overlap estimation given in Theorem~\ref{thm:LowerBoundblackbox}. The first part given in \eqref{eq:LowerBoundblackboxSup} follows by some information theoretic argument in which we show hardness of the estimation task by embedding the problem of distinguishing two hypotheses. For the second part in \eqref{eq:LowerBoundblackbox} showing the lower bound for the choice $\cS\equiv\cS_{\max} = B^2_1(0)\cap B^{\infty}_r(0),$ we use the established \eqref{eq:LowerBoundblackboxSup} together with a specific test function $g\in\cS_{\max},$ whose existence is provided in Lemma~\ref{lem:tildeFGLowerbound} below.
We first state and prove this lemma and then continue to give the proof of Theorem~\ref{thm:LowerBoundblackbox}.
\begin{lemma}
\label{lem:tildeFGLowerbound}
Let $f\in L^2(\Omega,\R)$ and $\eps\in(0,\|f\|_2).$ Then there exists  $g\in L^2(\Omega,\R) \cap L^\infty(\Omega,\R)$ such that the following three points hold:
\begin{enumerate}
    \item $\int_\Omega f(\lambda)g(\lambda) d\mu(\lambda) \ge \eps,$ \label{it:Overlap}
    \item $\|g\|_2 \le 1,$ \label{it:L2Norm}
    \item $\|g\|_\infty \le \frac{\eps}{\| f\|^{(\eps)}_1}.$\label{it:LinftyNorm}
\end{enumerate}
\end{lemma}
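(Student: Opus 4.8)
The plan is to construct $g$ explicitly by a ``clip and rescale'' procedure applied to $f$. The guiding intuition is convex duality: with $K_t\coloneqq\{g:\|g\|_2\le1,\ \|g\|_\infty\le t\}$, the quantity $\sup_{g\in K_t}\int_\Omega fg\,d\mu$ is the support function of $K_t$ at $f$, which formally equals the infimal convolution $\inf_{f=h+k}\big(\|h\|_2+t\|k\|_1\big)$; and this infimal convolution is $\ge\eps$ as soon as $t=\eps/\|f\|^{(\eps)}_1$ (for any splitting $f=h+k$ either $\|h\|_2\ge\eps$, or else $\|f-k\|_2=\|h\|_2<\eps$ so that $k$ is admissible in~\eqref{eq:SmoothedL1Norm} and $t\|k\|_1\ge t\|f\|^{(\eps)}_1=\eps$). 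Rather than justify strong duality in this infinite-dimensional setting, where $\|\cdot\|_1$ is only lower semicontinuous on $L^2$, I would simply exhibit the maximiser $g$.

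Concretely, put $L\coloneqq\|f\|^{(\eps)}_1$, finite by Section~\ref{sec:LqNotation}, and $t\coloneqq\eps/L$; the degenerate case $L=0$ renders condition~\ref{it:LinftyNorm} vacuous and is handled by taking $g$ to be a sufficiently large value-truncation of $f/\|f\|_2$ (possible since $\eps<\|f\|_2$), so assume $L>0$ and hence $t\in(0,\infty)$. For $\mu>0$ set
\begin{align*}
g_\mu\coloneqq\sgn(f)\,\min\!\big(|f|/\mu,\,t\big),
\end{align*}
which satisfies $\|g_\mu\|_\infty\le t$ and $|g_\mu|\le|f|/\mu$, so $g_\mu\in L^2\cap L^\infty$ with $\|g_\mu\|_2\le\|f\|_2/\mu$; moreover $\mu\mapsto\|g_\mu\|_2$ is continuous (monotone/dominated convergence), non-increasing, and tends to $0$ as $\mu\to\infty$. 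Write $\tau\coloneqq\eps t$, $h\coloneqq f\mathbf{1}_{\{|f|\le\tau\}}$, $k\coloneqq f\mathbf{1}_{\{|f|>\tau\}}$; Chebyshev and Cauchy--Schwarz give $\mu(\{|f|>\tau\})\le\|f\|_2^2/\tau^2$ and $\|k\|_1\le\|f\|_2^2/\tau<\infty$, so $k\in L^1\cap L^2$.

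I then distinguish two cases. If $\|g_\eps\|_2\le1$, take $g\coloneqq g_\eps$, so conditions~\ref{it:L2Norm} and~\ref{it:LinftyNorm} hold at once. From $\|g_\eps\|_2^2=\|h\|_2^2/\eps^2+t^2\mu(\{|f|>\tau\})$ the hypothesis forces $\|h\|_2\le\eps$, i.e.\ $\|f-k\|_2\le\eps$, so $k$ is admissible in the infimum defining $L$ and $L\le\|k\|_1$; since $fg_\eps\ge0$ everywhere and $g_\eps=\sgn(f)\,t$ on $\{|f|>\tau\}$,
\begin{align*}
\int_\Omega fg_\eps\,d\mu\ \ge\ t\int_{\{|f|>\tau\}}|f|\,d\mu\ =\ t\|k\|_1\ \ge\ tL\ =\ \eps,
\end{align*}
which is condition~\ref{it:Overlap}. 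If instead $\|g_\eps\|_2>1$, continuity and monotonicity give $\mu^*>\eps$ with $\|g_{\mu^*}\|_2=1$, and I take $g\coloneqq g_{\mu^*}$. Writing $h^*\coloneqq f\mathbf{1}_{\{|f|\le\mu^*t\}}$, $k^*\coloneqq f\mathbf{1}_{\{|f|>\mu^*t\}}$, $N^*\coloneqq\mu(\{|f|>\mu^*t\})$, one computes $\|h^*\|_2^2/\mu^{*2}+t^2N^*=1$ and $\int_\Omega fg_{\mu^*}\,d\mu=\|h^*\|_2^2/\mu^*+t\|k^*\|_1$; using $\|k^*\|_1\ge\mu^*tN^*$ (since $|f|>\mu^*t$ on the support of $k^*$) and $\|h^*\|_2^2/\mu^*=\mu^*(1-t^2N^*)$ with $1-t^2N^*\ge0$,
\begin{align*}
\int_\Omega fg_{\mu^*}\,d\mu\ \ge\ \mu^*(1-t^2N^*)+t\cdot\mu^*tN^*\ =\ \mu^*\ >\ \eps.
\end{align*}

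The one genuinely delicate step is condition~\ref{it:Overlap}: in the first case it turns on the fact that $\|g_\eps\|_2\le1$ forces the ``heavy part'' $h$ of $f$ to have $L^2$-norm at most $\eps$, so that the ``light part'' $k=f-h$ is exactly the kind of $\eps$-perturbation that the smoothed $L^1$-norm infimises over; in the second case it turns on the exact cancellation displayed above, which depends on having rescaled so that $\|g_{\mu^*}\|_2$ is precisely $1$. The remaining ingredients---finiteness of the integrals, that the truncated functions lie in $L^1\cap L^2$, continuity/monotonicity of $\mu\mapsto\|g_\mu\|_2$, and the degenerate case $L=0$---are routine.
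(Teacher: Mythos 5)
Your proof is correct, and while it shares the truncation theme and two-case structure of the paper's argument, the execution is genuinely different. The paper first manufactures an explicit feasible smoothing $\tilde f=f\,\mathbf{1}_{\{|f|\ge c_\star\}}$, where $c_\star$ is the largest threshold at which the low part of $f$ has $L^2$-mass at most $\eps^2$, and then splits cases according to whether $c_\star$ lies below or above $\eps^2/\|\tilde f\|_1$: in the first case $g$ is the normalised low part $h/\|h\|_2$ with $h=f\,\mathbf{1}_{\{|f|<\eps^2/\|\tilde f\|_1\}}$, in the second it is the flat sign pattern $\tfrac{\eps}{\|\tilde f\|_1}\sgn(f)\,\mathbf{1}_{\{|f|\ge c_\star\}}$, and in both cases one even gets the slightly stronger bound $\|g\|_\infty\le\eps/\|\tilde f\|_1\le\eps/\|f\|^{(\eps)}_1$. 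You instead work with the single one-parameter family of clipped rescalings $\sgn(f)\min(|f|/\mu,t)$ with cap $t=\eps/\|f\|^{(\eps)}_1$, which interpolates the paper's two shapes (proportional to $f$ on the low part, flat on the high part), and you tune the parameter: either the choice $\mu=\eps$ already has $L^2$-norm at most $1$, in which case that very hypothesis forces $\|f-k\|_2\le\eps$ for the tail $k=f\,\mathbf{1}_{\{|f|>\eps t\}}$, so $k$ is feasible in \eqref{eq:SmoothedL1Norm}, $\|k\|_1\ge\|f\|^{(\eps)}_1$, and the overlap is at least $t\|k\|_1\ge\eps$; or you normalise via the intermediate value theorem to unit $L^2$-norm, where your exact identity yields overlap at least $\mu^*>\eps$. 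What your route buys is that it bypasses the construction of $\tilde f$ and the threshold $c_\star$ entirely and compares directly against the definition of the smoothed $L^1$-norm; the price is the (routine) continuity and monotonicity argument needed for the intermediate-value step, which the paper's hands-on construction avoids. Two small remarks: you use the symbol $\mu$ simultaneously for the measure and for the clipping parameter, which should be renamed; and the degenerate case $\|f\|^{(\eps)}_1=0$ that you set aside cannot actually occur under $\eps<\|f\|_2$ (if $\|\tilde f_n\|_1\to0$ with $\|f-\tilde f_n\|_2\le\eps$, then a.e.\ convergence of a subsequence and Fatou's lemma give $\|f\|_2\le\eps$), though your truncation fallback handles it harmlessly anyway.
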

\begin{remark}
\label{rem:GNoState}
In the context of Sections~\ref{sec:FidelWigner} and \ref{sec:PauliFidel} below we consider the specific choices of $\cS$ being either the set of Wigner functions or characteristic functions for valid quantum states on $L^2(\R^m)$ or $(\C)^{\otimes n}$ respectively. There, the considered $f$ is a Wigner or characteristic function of some pure state $\rho.$ However, the function $g$ constructed in Lemma~\ref{lem:tildeFGLowerbound} does in general not correspond the Wigner or characteristic function of some state $\sigma.$ Hence, in the context of fidelity estimation for continuous or discrete variable systems of Sections~\ref{sec:FidelWigner} and \ref{sec:PauliFidel}, we cannot use the lemma above to establish the corresponding the lower bound \eqref{eq:LowerBoundblackbox} when $\sigma$ is assumed to be a valid state.
\end{remark}
\begin{proof}[Proof of Lemma~\ref{lem:tildeFGLowerbound}]
In the following we explicitly construct a function $\widetilde f \in L^1(\Omega,\R)\cap L^2(\Omega,\R)$ with \begin{equation}
\label{eq:tildefapprox}
\|f -\widetilde f\|_2 \le \eps
\end{equation} and furthermore $g\in L^2(\Omega,\R)\cap L^\infty(\Omega,\R)$ satisfying points \ref{it:Overlap} and \ref{it:L2Norm} and furthermore \begin{align*}
    \|g\|_\infty \le \frac{\eps}{\|\widetilde f\|_1} \le \frac{\eps}{\| f\|^{(\eps)}_1}\,,
\end{align*}
where the last inequality followed by the definition of the smoothed $L^1$-norm in \eqref{eq:SmoothedL1Norm}.

For that, note first of all that by the dominated convergence theorem we have\\  $\lim_{c\to 0}\int_{\{|f| < c\}} |f(\lambda)|^2 d\mu(\lambda) =0$ and hence we can define 
\begin{align*}
    c_\star = \sup\left\{c\in[0,\infty]\,\Big|\, \int_{\{|f| < c\}} |f(\lambda)|^2 d\mu(\lambda) \le \eps^2\right\} >0 .
\end{align*}
\sloppy 
Note that since $\eps<\|f\|_2$ we have $c_\star <\infty.$
Furthermore, since for all $\lambda\in\Omega$ we have $\lim_{c\uparrow c_\star}|f(\lambda)|^2 \mathbf{1}_{\{|f| < c\}}(\lambda) = |f(\lambda)|^2 \mathbf{1}_{\{|f| < c_\star\}}(\lambda),$ we have again by the dominated convergence theorem \begin{align}
\label{eq:Alsobstar}
    \int_{\{|f| <\,  c_\star\}} |f(\lambda)|^2 d\mu(\lambda) = \lim_{c \uparrow c_\star}\int_{\{|f| < \, c\}} |f(\lambda)|^2 d\mu(\lambda) \le \eps^2.
\end{align} 
Therefore, we can define 
\begin{align}
\label{eq:tildef}
    \widetilde f = f \,\mathbf{1}_{\{|f| \ge c_\star\}} \neq 0
\end{align}
which by \eqref{eq:Alsobstar} approximates $f$ in $L^2$-norm, i.e. satisfies  \eqref{eq:tildefapprox}. Note that by definition we have that $\widetilde f\in L^1(\Omega,\R)$ as 
\begin{align*}
    \|\widetilde f\|_1 = \int_{\{|f|\ge c_\star\}} |f(\lambda)|d\mu(\lambda) \le \frac{1}{c_\star}\int_{\{|f|\ge c_\star\}} |f(\lambda)|^2d\mu(\lambda) \le \frac{\|f\|^2_2}{c_\star}.
\end{align*}
We define the function $g$ differently depending on whether $c_\star < \frac{\eps^2}{\|\widetilde f\|_1}$ and $c_\star \ge \frac{\eps^2}{\|\widetilde f\|_1}.$ In the first case $c_\star < \frac{\eps^2}{\|\widetilde f\|_1},$ we define \begin{align*}
    h = f\,\mathbf{1}_{\left\{|f|\, < \,\frac{\eps^2}{\|\widetilde f\|_1}\right\} }
\end{align*} 
and the normalised function $g = h/\|h\|_2.$ By definition of $c_\star$ and using $\frac{\eps^2}{\|\widetilde f\|_1} > c_\star$ we have
\begin{align}
\label{eq:LowerBoundTail}
    \|h\|_2 = \sqrt{\int_{\left\{|f|\, < \,\frac{\eps^2}{\|\widetilde f\|_1}\right\}} |f(\lambda)|^2 \, d\mu(\lambda) } >  \eps
\end{align}
and hence \begin{align*}
    \|g\|_\infty = \frac{1}{\|h\|_2}\left\||f|\,\mathbf{1}_{\left\{|f|\, < \,\frac{\eps^2}{\|\widetilde f\|_1}\right\} }\right\|_\infty \le \frac{\eps}{\|\widetilde f\|_1}. 
\end{align*}
Furthermore, using \eqref{eq:LowerBoundTail} again we see
\begin{align*}
    \int_\Omega f(\lambda) g(\lambda) d\mu(\lambda) = \sqrt{\int_{\left\{|f|\, < \,\frac{\eps^2}{\|\widetilde f\|_1}\right\}} |f(\lambda)|^2 d\mu(\lambda) } \ge \eps.
\end{align*}
Therefore, the function $g$ satisfies the points \ref{it:Overlap} - \ref{it:LinftyNorm}.

On the other hand, in the case $ c_\star \ge \frac{\eps^2}{\|\widetilde f\|_1}$ define 
\begin{align*}
    g =  \frac{\eps}{\|\widetilde f\|_1}\,\sgn(f)\,\mathbf{1}_{\{|f| \ge c_\star\}}.
\end{align*}
Clearly, we have $\|g\|_\infty \le \frac{\eps}{\|\widetilde f\|_1}$ and furthermore by definition of $\widetilde f$ in \eqref{eq:tildef}
\begin{align*}
    \int_\Omega f(\lambda)g(\lambda) d\mu(\lambda) = \frac{\eps}{\|\widetilde f\|_1}\int_{\{|f|  \ge c_\star\}} |f(\lambda)| d\mu(\lambda) = \eps.
\end{align*}
Lastly, we have using $c_\star \ge \frac{\eps^2}{\|\widetilde f\|_1}$ that also
\begin{align*}
    \|g\|^2_2 = \frac{\eps^2}{\|\widetilde f\|^2_1}\,\mu\left(\left\{|f|  \ge c_\star\right\}\right) \le   \frac{\eps^2}{\|\widetilde f\|^2_1 \,c_\star} \int_{\{|f|  \ge c_\star\}}  |f(\lambda)| d\mu(\lambda) = \frac{\eps^2}{\|\widetilde f\|_1 \, c_\star} \le 1
\end{align*}
and hence, also in this case, the defined function $g$ satisfies the points \ref{it:Overlap} - \ref{it:LinftyNorm}.
\end{proof}

\begin{proof}[Proof of Theorem~\ref{thm:LowerBoundblackbox}]
Let $g_1,g_2\in\cS$ such that
 \begin{align}
 \label{eq:BigOverlap}
     \left|\int_\Omega f(\lambda)\left(g_1(\lambda)-g_2(\lambda)\right)d\mu(\lambda)\right| > 2\eps.
 \end{align}
 If such functions do not exist, \eqref{eq:LowerBoundblackboxSup} is trivially satisfied.  
 Consider for $i=1,2$ and $\lambda\in\Omega$ the random variable $Y^{(i)}_\lambda$ with outcome space $\{r,-r\},$ with $r>0$ being the number which was fixed at the beginning of Section~\ref{sec:BlackBox}, and distributed according to
\begin{align}
\label{eq:DefProbY}
    \mathbb{P}\left( Y^{(i)}_\lambda =r\right) = \frac{1+g_i(\lambda)/r}{2},\quad\quad\mathbb{P}\left( Y^{(i)}_\lambda =-r\right) = \frac{1-g_i(\lambda)/r}{2}.
\end{align}
Note, both pairs $(g_1,(Y^{(1)}_\lambda)_{\lambda\in\Omega})$ and $(g_2,(Y^{(2)}_\lambda)_{\lambda\in\Omega})$ satisfy the assumption \eqref{eq:YAssumptions}. 
We consider two situations:
\begin{enumerate}
	\item[$\cH_1:$] The tester has black box access to $g_1$ through the random variable $Y^{(1)}_\lambda,$
	\item[$\cH_2:$] The tester has black box access to $g_2$ through the random variable $Y^{(2)}_\lambda.$
\end{enumerate}

Hence, if \begin{align*}\Big(g,\left(Y_\lambda\right)_{\lambda\in\Omega}\Big)\in \left\{\left(g_1,\left(Y^{(1)}_\lambda\right)_{\lambda\in\Omega}\right)\,,\, \left(g_2,\left(Y^{(2)}_\lambda\right)_{\lambda\in\Omega}\right)\right\},\end{align*} and using \eqref{eq:BigOverlap} a black box overlap estimation algorithm with success probability $1-\delta$ can  find which pair is present only by estimating the overlap $\int_\Omega f(\lambda)g(\lambda)d\mu(\lambda)$ up to additive error $\eps$.  Let $N$ be a sufficient number of steps for this test and let $\lambda_1, \dots, \lambda_N\in\Omega$ be the points the tester chooses to sample the random variable $Y_\lambda$. Let $y_{\lambda_1}, \dots, y_{\lambda_N}\in\{r,-r\}$ be the corresponding obtained samples. As explained around \eqref{eq:AlgorihmBlackBoxOverlap}, we allow for the algorithm to be adaptive, i.e. at each step $t$ the variable $\lambda_t$ can depend on the previous points $\lambda_1,\dots,\lambda_{t-1},$ outcomes $y_{\lambda_1}, \dots, y_{\lambda_{t-1}}$ as well as on the function $f,$ i.e.
\begin{align*}
    \lambda_t\equiv \lambda_t(f,\lambda_1,y_{\lambda_1},\dots, \lambda_{t-1},y_{\lambda_{t-1}}).
\end{align*}
We denote $Z_t = (\lambda_t,Y_{\lambda_t})$ and let $\mathbb{P}_{\cH_1}^{Z_{1}, \dots, Z_{N}}$ be the distribution of $Z_1, \dots, Z_N$ under $\cH_1$ and $\mathbb{P}_{\cH_2}^{Z_{1}, \dots, Z_{N}}$ the law of $Z_1, \dots, Z_N$ under $\cH_2$. Let $\cE$ be the event that the tester returns the first hypothesis.  We have by the data processing inequality of the KL divergence:
\begin{align}
	\KL\left(\mathbb{P}_{\cH_1}^{Z_{1}, \dots, Z_{N}} \Big\| \mathbb{P}_{\cH_2}^{Z_{1}, \dots, Z_{N}}\right)&\ge 	\KL\left(\mathbb{P}_{\cH_1} (\cE)\big\| \mathbb{P}_{\cH_2}(\cE)\right) \notag
	\\&\ge \KL\left(1-\delta\big\| \delta\right) \notag
	\\&\ge  \log\left(\frac{1}{3\delta}\right). \label{LB-KL}
\end{align} 
On the other hand, we have %by the chain rule of the KL divergence, the relations \eqref{eq:KLDomChi}  and \eqref{eq:DefProbY} and 
using the notation $\bm{Z}_{<t}= (Z_{1},\dots, Z_{t-1})$ for $t\in[N]$ that
\begin{align}
	\KL\left(\mathbb{P}_{\cH_1}^{Z_{1}, \dots, Z_{N}} \Big\| \mathbb{P}_{\cH_2}^{Z_{1}, \dots, Z_{N}}\right)&\overset{(a)}{=}  \sum_{t=1}^N 	\mathbb{E}_{\bm{Z}_{<t}\sim \mathbb{P}_{\cH_1}}\left[\KL\left(\mathbb{P}_{\cH_1}^{Z_{t}|\bm{Z}_{<t}} \Big\| \mathbb{P}_{\cH_2}^{Z_{t}|\bm{Z}_{<t}}\right)\right]\notag
 \\&\overset{(b)}{=}  \sum_{t=1}^N 	\mathbb{E}_{(\lambda_t,\bm{Z}_{<t}) \sim \mathbb{P}_{\cH_1} }\left[\KL\left(\mathbb{P}_{\cH_1}^{Y_{\lambda_t}|(\lambda_t,\bm{Z}_{<t})} \Big\| \mathbb{P}_{\cH_2}^{Y_{\lambda_t}|(\lambda_t,\bm{Z}_{<t})}\right)\right]\notag
	\\&\overset{(c)}{\le}   \sum_{t=1}^N 	\mathbb{E}_{(\lambda_t,\bm{Z}_{<t}) \sim \mathbb{P}_{\cH_1} }\left[\chi^2\left(\mathbb{P}_{\cH_1}^{Y_{\lambda_t}|(\lambda_t,\bm{Z}_{<t})} \Big\| \mathbb{P}_{\cH_2}^{Y_{\lambda_t}|(\lambda_t,\bm{Z}_{<t})}\right)\right]\notag
	\\& \overset{(d)}{=}   \sum_{t=1}^N \mathbb{E}_{(\lambda_t,\bm{Z}_{<t})\sim \mathbb{P}_{\cH_1} }\left[\frac{1}{2}\left(\frac{\left(1+g_1(\lambda_t)/r\right)^2}{1+g_2(\lambda_t)/r}+ \frac{\left(1-g_1(\lambda_t)/r\right)^2}{1-g_2(\lambda_t)/r}\right)-1\right]\notag
			\\&= \sum_{t=1}^N \mathbb{E}_{(\lambda_t,\bm{Z}_{<t}) \sim \mathbb{P}_{\cH_1} }\left[\frac{\left(g_1(\lambda_t)-g_2(\lambda_t)\right)^2}{r^2-g^2_2(\lambda_t)}\right] \notag
   \\& \le N\, \left\|\frac{\left(g_1-g_2\right)^2}{r^2-g^2_2}\right\|_\infty, \label{UB-KL}
\end{align}
where in $(a)$, we have used the chain rule of the KL divergence; in $(b)$, we have used that the distribution of $\lambda_t$ conditioned on $\mathbf{Z}_{<t}$ does not depend on $\cH_1$ or $\cH_2$ combined again with the chain rule; in $(c)$, we have used the well known inequality $\KL \le \chi^2$ (c.f.  \eqref{eq:KLDomChi}); in $(d)$ we have used the definition of $Y_\lambda$ (c.f. \eqref{eq:DefProbY}) and the definition of the $\chi^2$-divergence (c.f. \eqref{def:chi2}).  

Therefore, combining both inequalities \eqref{LB-KL} and \eqref{UB-KL}, we obtain
\begin{align*}
   \log\left(\frac{1}{3\delta}\right)\le  \KL\left(\mathbb{P}_{\cH_1}^{Z_{1}, \dots, Z_{N}} \Big\| \mathbb{P}_{\cH_2}^{Z_{1}, \dots, Z_{N}}\right)\le  N\, \left\|\frac{\left(g_1-g_2\right)^2}{r^2-g^2_2}\right\|_\infty
\end{align*}
hence
\begin{align*}
N\ge \left\|\frac{\left(g_1-g_2\right)^2}{r^2-g^2_2}\right\|^{-1}_\infty\log\left(\frac{1}{3\delta}\right). 
\end{align*}
Since $g_1,g_2\in\cS$ under the constraint \eqref{eq:BigOverlap} were arbitrary this shows \eqref{eq:LowerBoundblackboxSup}.
\comment{
\begin{align*}
    N \ge \sup_{\substack{\eps'>\eps,\,g\in\cS,\\ \left|\int \! fg\right| \,\ge \,\eps'}}\,\frac{1}{\|g\|^2_\infty}\log\left(\frac{1}{3\delta}\right).
\end{align*}
To conclude \eqref{eq:LowerBoundblackboxSup}, note that for any $g\in \cS$ with $|\int f(\lambda)g(\lambda) d\mu(\lambda)| \ge \eps$ we can pick a sequence $\left(\eps_n\right)_{n\in\N}$ of positive numbers such that $\eps_n>\eps$ and $\lim_{n\to\infty}\eps_n =\eps$ and define functions  
\begin{align*}
    g_n = (1-q_n) g + q_n f 
\end{align*}
with $q_n = \frac{\eps_n-\eps}{1-\eps}.$ Note by the assumption $f\in\cS$ and convexity of $\cS$ we have $g_n\in\cS$ and furthermore}

\comment{\begin{align}
\label{eq:OhMan2}
    \sup_{\substack{g_1,g_2\in\cS,\\ \left|\int \! f(g_1-g_2)\right| \,>\,2\eps}}\,\left\|\frac{(g_1-g_2)^2}{r^2-g_2^2}\right\|^{-1}_{\infty}\log\left(\frac{1}{3\delta}\right) = \sup_{\substack{g_1,g_2\in\cS,\\ \left|\int \! f(g_1-g_2)\right| \,\ge\,2\eps}}\,\left\|\frac{(g_1-g_2)^2}{r^2-g_2^2}\right\|^{-1}_{\infty}\log\left(\frac{1}{3\delta}\right).
\end{align}
Note that the left hand side of \eqref{eq:OhMan2} is clearly less or equal than the right hand side, so we only need to show the opposite inequality. 
Let $(\eps_n)_{n\in\N}\subseteq(0,\infty)$ be a sequence such that $\eps_n>\eps$ and $\lim_{n\to\infty} \eps_n =\eps.$ Define furthermore 
\begin{align*}
    \kappa_n = \frac{2(\eps_n-\eps)}{1-2\eps}
\end{align*}
which for $n\in\N$ large enough satisfies $\kappa_n\in(0,1]$ and to which we shall restrict in the following. Let $g_1,g_2\in \cS$ such that $\left|\int_{\Omega}f(\lambda)(g_1(\lambda)-g_2(\lambda)d\mu(\lambda) \right| \ge2\eps$ and define
\begin{align*}
    g^{(n)}_1 &= (1-\kappa_n) g_1 + \kappa_n \sgn\left(\int_{\Omega}f(\lambda)(g_1(\lambda)-g_2(\lambda))d\mu(\lambda)\right)f,\\g^{(n)}_2 &= (1-\kappa_n) g_2 - \kappa_n \sgn\left(\int_{\Omega}f(\lambda)(g_1(\lambda)-g_2(\lambda))d\mu(\lambda)\right)f.
\end{align*}
Note first of all that since $f,g_1,g_2\in \cS =B^2_1(0)\cap B^\infty_r(0)$ we also have by triangle inequality that $g^{(n)}_1,g^{(n)}_2\in\cS.$ Moreover, we have by $\|f\|_2 =1$ that
\begin{align*}
    \left|\int_\Omega f(\lambda)\left(g^{(n)}_1(\lambda)-g^{(n)}_1(\lambda)\right)d\mu(\lambda)\right| = (1-\kappa_n) \left|\int_\Omega f(\lambda)\left(g_1(\lambda)-g_1(\lambda)\right) d\mu(\lambda)\right| + \kappa_n \ge 2\eps_n>2\eps.
\end{align*}
Since, we also have
\begin{align*}
    \|g_n\|_\infty \le (1-\kappa_n)\|g\|_\infty + \kappa_n\|f\|_\infty \xrightarrow[n\to\infty]{}\|g\|_\infty,
\end{align*}
this shows \eqref{eq:OhMan2}.}

To prove \eqref{eq:LowerBoundblackbox} in the case of $\cS=\cS_{\max}=B^2_1(0)\cap B^\infty_r(0)$ and $f\in\cS_{\max}$ with $\|f\|_2=1,$ we first use that since $g_2 = 0 \in \cS_{\max}$ we can conclude from \eqref{eq:LowerBoundblackboxSup} that
\begin{align*}
    N_{\eps,\delta}(f,\cS_{\max}) \ge \sup_{\substack{ g\in\cS_{\max},\\ \left|\int \! fg\right| \,> \,2\eps}}\,\frac{r^2}{\|g\|^2_\infty}\log\left(\frac{1}{3\delta}\right).
\end{align*}
Furthermore, we can, without loss of generality, restrict to $2\eps <\|f\|_2=1$ as otherwise $\|f\|^{(2\eps)}_1= 0$ and hence \eqref{eq:LowerBoundblackbox} is trivially satisfied.

Under this constraint, we want to show 
\begin{align}
\label{eq:OhMan}
    \sup_{\substack{\, g\in\cS_{\max},\\ \left|\int \! fg\right| \,> \,2\eps}}\,\frac{1}{\|g\|^2_\infty}=\sup_{\substack{g\in\cS_{\max},\\ \left|\int \! fg\right| \,\ge \,2\eps}}\,\frac{1}{\|g\|^2_\infty}.
\end{align}
Note that the left hand side of \eqref{eq:OhMan} is clearly less or equal than the right hand side, so we only need to show the opposite inequality. 
Let $(\eps_n)_{n\in\N}\subseteq(0,\infty)$ be a sequence such that $\eps_n>\eps$ and $\lim_{n\to\infty} \eps_n =\eps.$ Define furthermore 
\begin{align}
    \kappa_n = \frac{2(\eps_n-\eps)}{1-2\eps}
\end{align}
which for $n\in\N$ large enough satisfies $\kappa_n\in(0,1]$ and to which we shall restrict in the following. Let $g\in \cS_{\max}$ such that $\left|\int_{\Omega}f(\lambda)g(\lambda)d\mu(\lambda) \right| \ge2\eps$ and define
\begin{align}
\label{eq:g_nOhMan}
    g_n = (1-\kappa_n) g + \kappa_n \sgn\left(\int_{\Omega}f(\lambda)g(\lambda)d\mu(\lambda)\right)f
\end{align}
Note first of all that since $f,g\in \cS_{\max} =B^2_1(0)\cap B^\infty_r(0)$ we also have by triangle inequality that $g_n\in\cS_{\max}.$ Moreover, we have
\begin{align*}
    \left|\int_\Omega f(\lambda)g_n(\lambda)d\mu(\lambda)\right| = (1-\kappa_n) \left|\int_\Omega f(\lambda)g(\lambda) d\mu(\lambda)\right| + \kappa_n \ge 2\eps_n>2\eps.
\end{align*}
Since, we also have
\begin{align*}
    \|g_n\|_\infty \le (1-\kappa_n)\|g\|_\infty + \kappa_n\|f\|_\infty \xrightarrow[n\to\infty]{}\|g\|_\infty,
\end{align*}
this shows \eqref{eq:OhMan}. 

To conclude \eqref{eq:LowerBoundblackbox} note that we can restrict to $\delta\le 1/3$ as otherwise \eqref{eq:LowerBoundblackbox} is trivially satisfied. For the case $\frac{2\eps}{\|f\|^{(2\eps)}_1}>r$ the statement follows by simply lower bounding the supremum in \eqref{eq:OhMan} by choosing the feasible function $g=f$. On the other hand, for the case $\frac{2\eps}{\|f\|^{(2\eps)}_1}\le r$, we can employ Lemma~\ref{lem:tildeFGLowerbound} and use the function $g$ (with 
parameter $2\eps$) from there as a feasible function which gives
 \begin{align*}
       N_{\eps,\delta}(f,\cS_{\max}) \ge \sup_{\substack{g\in\cS_{\max},\\ \left|\int \! fg\right| \,\ge \,2\eps}}\,\frac{r^2}{\|g\|^2_\infty}\log\left(\frac{1}{3\delta}\right) \ge \left(\frac{r\|f\|^{(2\eps)}_1}{2\eps}\right)^2\log\left(\frac{1}{3\delta}\right)
   \end{align*}
   and finishes the proof.
\end{proof}

\section{Fidelity estimation of continuous variable systems}\label{sec:FidelWigner}
In this section we consider for $m\in\N$ an $m$-mode harmonic oscilator on the Hilbert space $\cH = L^2(\R^m).$ We study the task of estimating the fidelity between a desired pure state $\rho$ on $L^2(\R^m),$ of which we have a theoretical description, and a general state $\sigma,$ which is prepared experimentally, by performing measurements on $\sigma.$ 
Since $\rho$ is a pure state, we can write the fidelity by \eqref{eq:WignerOverlap} as
\begin{align}
\label{eq:FidelWigner}
    F(\rho,\sigma) = \pi^m\int_{\C^{m}} W_\rho(\alpha)W_\sigma(\alpha)d \alpha\end{align}
Assume that we are able to perform measurements of the displaced parity operator \\ $\left(2/\pi\right)^m D(\alpha)\Par D(\alpha)$ with outcomes in $\{(2/\pi)^m,-(2/\pi)^m\}$ on the state $\sigma$ for chosen values of $\alpha\in \C^m$. Here, the values of $\alpha$ can be chosen adaptively and hence depend on the previous values and corresponding measurement outcomes. In the following we refer to this model as the \emph{Wigner model of fidelity estimation.}

For this model, $\eps>0$ and $\delta>0$ being the allowed additive error and probability of failure respectively and pure state $\rho$ fixed, which we call the \emph{instanced based setup,} we denote the \emph{sample complexity of fidelity estimation in the Wigner model} by $N^{\scalebox{0.65}{$W$}}_{\eps,\delta}(\rho)$: That is $N^{\scalebox{0.65}{$W$}}_{\eps,\delta}(\rho)$ is the optimal number of measurements on different copies of the unknown state $\sigma$ needed to estimate $F(\rho,\sigma)$ in the worst case over all states $\sigma.$ 

 Due to the relation \eqref{eq:WignerParity}, the assumption that we are able to perform measurements of the displaced parity operator on $\sigma$ exactly means that we have black box access to the Wigner function $W_\sigma.$ Hence, estimating the fidelity \eqref{eq:FidelWigner} in this setup  becomes a special instance of black box overlap estimation as discussed in Section~\ref{sec:BlackBox}: In particular we can make the identifications
\begin{align}
\label{eq:IdentificationsWigner}
    \nn\Omega &\mapsto \C^m\cong \R^{2m} &&\text{and } \quad d\mu(\lambda) \mapsto  \pi^m d\alpha \\\nn
    r &\mapsto \left(2/\pi\right)^m \\\nn
    f(\lambda) &\mapsto W_\rho(\alpha) &&\text{to which we have full access} \\ \nn
    g(\lambda) &\mapsto W_\sigma(\alpha)&& \text{to which we have black box access} \\ 
      Y_\lambda &\mapsto Y_\alpha &&\text{measurement outcome of observable $\left(2/\pi\right)^m D(\alpha)\Par D(\alpha)$ on state $\sigma.$}
\end{align}
Since $\mu\equiv \mu_W$ is given by the Lebesgue measure rescaled by $\pi^m,$ we introduce the corresponding $L^q$-norms for $q\in[1,\infty)$ for notational clarity and to distinguish them from the usual $L^q$-norms by
\begin{align}
\label{eq:WignerLqnormConvention}
    \|W_\rho\|_{L^q(\C^m,\mu_W)} = \left(\int_{\C^m} |W_\rho(\alpha)|^q d\mu_W(\alpha)\right)^{1/q} = \pi^{m/q} \|W_\rho\|_q. 
\end{align}
Note in particular, with that convention, we have from \eqref{eq:L2Wigner} for all states $\rho$
\begin{align*}
\|W_\rho\|_{L^2(\C^m,\mu_W)} \le 1
\end{align*}
with equality if the state is pure.

For the subset $\cS$ employed in Section~\ref{sec:BlackBox}, we are interested here in the particular choice $\cS_W$ consisting of all Wigner functions $W_\sigma$ for valid states $\sigma.$ Using \eqref{eq:L2Wigner} and \eqref{eq:LinftyWigner}, we see that $\cS_W\subseteq B^2_1(0)\cap B^\infty_{(2/\pi)^m}(0),$ again with balls with respect to the rescaled Lebesgue measure $d\mu(\alpha)=\pi^md\alpha.$ 

With the notation above, the sample complexity of fidelity estimation can be written as
\begin{align}
\label{eq:SampleComplexityFidelity}
    N^{\scalebox{0.65}{$W$}}_{\eps,\delta}(\rho) \equiv  N_{\eps,\delta}(W_\rho,\cS_W)
\end{align}
with $N_{\eps,\delta}(W_\rho,\cS_W)$ being defined in Section~\ref{sec:BlackBox}. 

Formulating the achievability result of black box estimation, Theorem~\ref{thm:blackboxOverEstimation}, for this special case leads to following result:
\begin{theorem}[Upper bound on fidelity estimation for CV systems] \label{thm:WignerUpperBound}
Let $\rho$ be a pure state on $L^2(\R^{m})$ to which we have a theoretical description. Then, for any state $\sigma,$ we can estimate the fidelity $F(\rho,\sigma)$ with precision $\eps>0$ and failure probability at most $\delta>0$ by measuring the displaced parity operator on a number of
\begin{equation}
\label{eq:WignerFidelyyUpperBound}
    N^{\scalebox{0.65}{$W$}}_{\eps,\delta}(\rho) \le 2^{2m+1}\inf_{\eps'\in [0,\eps)}\left(\frac{\|W_\rho\|^{(\eps')}_1}{\eps-\eps'}\right)^{2}\log\left(\frac{1}{\delta}\right)  
\end{equation}
copies of the state $\sigma.$
Note that by the argument around \eqref{eq:SmoothedL1Norm}, this in particular shows $N^{\scalebox{0.65}{$W$}}_{\eps,\delta}(\rho)<\infty.$
 Furthermore, the upper bound holds true even when restricting to non-adaptive algorithms.
\end{theorem}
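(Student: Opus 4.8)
The plan is to obtain Theorem~\ref{thm:WignerUpperBound} as a direct specialisation of the abstract achievability result Theorem~\ref{thm:blackboxOverEstimation} through the dictionary \eqref{eq:IdentificationsWigner}. First I would confirm that Wigner-model fidelity estimation really is an instance of black box overlap estimation: take $\Omega=\C^m$ with $d\mu(\alpha)=\pi^m d\alpha$, $f=W_\rho$, and $\cS=\cS_W$, the set of Wigner functions of states $\sigma$. By \eqref{eq:WignerParity}, a single measurement of the observable $(2/\pi)^m D(\alpha)\Par D^*(\alpha)$ on a fresh copy of $\sigma$ produces exactly the two-outcome random variable $Y_\alpha$ valued in $\{(2/\pi)^m,-(2/\pi)^m\}$ with $\mathbb{E}[Y_\alpha]=W_\sigma(\alpha)$, so one has black box access to $W_\sigma$ with constant $r=(2/\pi)^m$; moreover $\int_\Omega fg\,d\mu=\pi^m\int_{\C^m}W_\rho W_\sigma\,d\alpha=F(\rho,\sigma)$ by \eqref{eq:FidelWigner}, giving the identification $\NW(\rho)=N_{\eps,\delta}(W_\rho,\cS_W)$.

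Next I would check the two hypotheses of Theorem~\ref{thm:blackboxOverEstimation}: that $W_\rho\in L^2(\Omega,\mu_W)$ and that $\cS_W\subseteq B^2_1(0)\cap B^\infty_{(2/\pi)^m}(0)$, with balls taken with respect to $\mu_W$. Both follow from the universal bounds \eqref{eq:L2Wigner} and \eqref{eq:LinftyWigner}: every state $\sigma$ satisfies $\|W_\sigma\|_{L^2(\C^m,\mu_W)}=\|\sigma\|_2\le 1$ and $\|W_\sigma\|_\infty\le(2/\pi)^m$, and likewise for the pure state $\rho$. Feeding $r=(2/\pi)^m$ into \eqref{eq:blackboxUpper} then yields an upper bound on $N_{\eps,\delta}(W_\rho,\cS_W)$ of the form $2\inf_{\eps'\in[0,\eps)}\big(r\,\|W_\rho\|^{(\eps')}_{1,\mu_W}/(\eps-\eps')\big)^2\log(1/\delta)$, where the smoothed $L^1$ norm is computed with respect to $\mu_W$; the algorithm achieving it is the non-adaptive one from the proof of Theorem~\ref{thm:blackboxOverEstimation}, i.e.\ drawing phase-space points $\alpha$ from $|\widetilde f(\alpha)|/\|\widetilde f\|_1$ for a near-optimal smoother $\widetilde f$ of $W_\rho$ and averaging the suitably rescaled parity outcomes.

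The remaining step is purely bookkeeping of the measure normalisation. Since $d\mu_W=\pi^m d\alpha$, the weighted norms relate to the plain Lebesgue ones by $\|h\|_{L^1(\mu_W)}=\pi^m\|h\|_1$ and $\|h\|_{L^2(\mu_W)}=\pi^{m/2}\|h\|_2$; the $L^2(\mu_W)$-distance controlling the smoothing parameter is the same quantity that measures the accuracy of fidelity estimation, so the $\mu_W$-smoothed $L^1$ norm of $W_\rho$ equals $\pi^m\,\|W_\rho\|^{(\eps')}_1$ in the normalisation fixed in this section. Hence $r\,\|W_\rho\|^{(\eps')}_{1,\mu_W}=(2/\pi)^m\pi^m\|W_\rho\|^{(\eps')}_1=2^m\|W_\rho\|^{(\eps')}_1$, and after squaring and absorbing the leading factor $2$ the prefactor becomes $2^{2m+1}$, which is exactly \eqref{eq:WignerFidelyyUpperBound}. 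Finiteness $\NW(\rho)<\infty$ then follows since $W_\rho\in L^2$ and the smoothed $L^1$ norm $\|W_\rho\|^{(\eps')}_1$ is finite for every $\eps'>0$ (density of $L^1\cap L^2$ in $L^2$, as recalled in Section~\ref{sec:LqNotation}), and the non-adaptivity statement is inherited verbatim from Theorem~\ref{thm:blackboxOverEstimation}. I do not expect any genuine obstacle here: all the content lives in Theorem~\ref{thm:blackboxOverEstimation}, and the single point demanding care is keeping the rescaled measure $\mu_W$ straight so that the constant comes out as $2^{2m+1}$ and not $2^{2m+1}\pi^{-2m}$.
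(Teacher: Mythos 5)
Your proposal is correct and follows essentially the same route as the paper: the paper's proof of Theorem~\ref{thm:WignerUpperBound} is exactly the specialisation of Theorem~\ref{thm:blackboxOverEstimation} via the identifications \eqref{eq:IdentificationsWigner} with $r=(2/\pi)^m$, using $\|W_\rho\|^{(\eps')}_{L^1(\C^m,\mu_W)}=\pi^m\|W_\rho\|^{(\eps')}_1$ to turn the prefactor $2\,(2/\pi)^{2m}\pi^{2m}$ into $2^{2m+1}$. Your bookkeeping of the rescaled measure and the finiteness/non-adaptivity remarks match the paper's argument.
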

\begin{proof}
   Here, with the conventions from \eqref{eq:IdentificationsWigner}, the smoothed $L^1$-norm in Theorem~\ref{thm:blackboxOverEstimation} is given by $\|W_\rho\|^{(\eps')}_{L^1(\C^m,\mu_W)} = \pi^m \|W_\rho\|^{(\eps')}_1.$ 
  Hence, we can immediately read off from \eqref{eq:blackboxUpper} that
   \begin{equation*}
    N^{\scalebox{0.65}{$W$}}_{\eps,\delta}(\rho) \le 2\left(\frac{2}{\pi}\right)^{2m}\inf_{\eps'\in [0,\eps)}\left(\frac{\|W_\rho\|^{(\eps')}_{L^1(\C^m,\mu)}}{\eps-\eps'}\right)^{2}\log\left(\frac{1}{\delta}\right)  = 2^{2m+1}\inf_{\eps'\in [0,\eps)}\left(\frac{\|W_\rho\|^{(\eps')}_1}{\eps-\eps'}\right)^{2}\log\left(\frac{1}{\delta}\right).
\end{equation*} 
\end{proof}

Furthermore, from the lower bound on the sample complexity of black box estimation in Theorem~\ref{thm:LowerBoundblackbox}, we find the corresponding lower bound on $N^{\scalebox{0.65}{$W$}}_{\eps,\delta}(\rho):$
\begin{theorem}[Lower bound on fidelity estimation for CV systems]
\label{thm:LowerBoundWigner}
    Let $\rho$ be a pure state on $L^2(\R^{m})$ to which we have a theoretical description. Then for $0<\eps<1/2$ and $0<\delta\le1/3$ we have  that the sample complexity of fidelity estimation in the Wigner model is lower bounded as
\begin{align}
\label{eq:InstanceBasedWigner} 
    \nn N^{\scalebox{0.65}{$W$}}_{\eps,\delta}(\rho) &\ge \sup_{\substack{\sigma_1,\sigma_2 \text{ state,}\\ |F(\rho,\sigma_1)-F(\rho,\sigma_2)| \,> \,2\eps}}\left\|\frac{\left(W_{\sigma_1}-W_{\sigma_2}\right)^2}{r^2 - W^2_{\sigma_2}}\right\|^{-1}_\infty\log\left(\frac{1}{3\delta}\right)\\&\ge\left(\frac{2}{\pi}\right)^{2m}\sup_{\substack{\sigma\text{ state,}\\ F(\rho,\sigma) \,\ge \,2\eps}}\,\frac{1}{\|W_\sigma\|^2_\infty}\log\left(\frac{1}{3\delta}\right)
\end{align}
\end{theorem}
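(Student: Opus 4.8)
The plan is to obtain both inequalities by specialising the abstract black-box lower bound, Theorem~\ref{thm:LowerBoundblackbox}, to the Wigner model via the dictionary~\eqref{eq:IdentificationsWigner} and the identification $N^{\scalebox{0.65}{$W$}}_{\eps,\delta}(\rho)=N_{\eps,\delta}(W_\rho,\cS_W)$ of~\eqref{eq:SampleComplexityFidelity}.

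For the first inequality I would simply transcribe~\eqref{eq:LowerBoundblackboxSup} with $f=W_\rho\in L^2(\C^m,\mu_W)$, $\cS=\cS_W\subseteq L^2(\C^m,\mu_W)\cap B^\infty_{(2/\pi)^m}(0)$, and $r=(2/\pi)^m$. The overlap formula~\eqref{eq:FidelWigner} gives $\int_{\C^m}W_\rho(W_{\sigma_1}-W_{\sigma_2})\,d\mu_W=F(\rho,\sigma_1)-F(\rho,\sigma_2)$, so the constraint $|\int f(g_1-g_2)|>2\eps$ becomes $|F(\rho,\sigma_1)-F(\rho,\sigma_2)|>2\eps$; and since an essential supremum is unaffected by rescaling Lebesgue measure by $\pi^m$, the factor $\|(g_1-g_2)^2/(r^2-g_2^2)\|_\infty^{-1}$ is literally $\|(W_{\sigma_1}-W_{\sigma_2})^2/(r^2-W_{\sigma_2}^2)\|_\infty^{-1}$. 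This reproduces the first line verbatim.

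For the second inequality I would lower-bound that supremum by a careful choice of the pair $(\sigma_1,\sigma_2)$. In the abstract setting the optimal second hypothesis is $g_2=0$, but here the zero function is not a legitimate Wigner function (Wigner functions integrate to $1$), so it must be approximated by genuine states. The idea: fix a state $\sigma$ with $F(\rho,\sigma)>2\eps$, set $\sigma_1=\sigma$ and let $\sigma_2=\tau_\nu$ be the $m$-mode thermal state of mean photon number $\nu$ per mode, then send $\nu\to\infty$. Two elementary facts make this work: (i) $W_{\tau_\nu}$ is a centred Gaussian with $\|W_{\tau_\nu}\|_\infty=\big(\tfrac{2}{\pi(2\nu+1)}\big)^m\to 0$; and (ii) the eigenvalues of $\tau_\nu$ are at most $(\nu+1)^{-m}$ (i.e. $\tau_\nu\le(\nu+1)^{-m}\openone$ as operators), so $F(\rho,\tau_\nu)=\langle\psi|\tau_\nu|\psi\rangle\le(\nu+1)^{-m}\to 0$ for $\rho=\kb\psi$. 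Hence for $\nu$ large the pair is feasible, $|F(\rho,\sigma)-F(\rho,\tau_\nu)|>2\eps$, and from the pointwise estimate $\|(W_\sigma-W_{\tau_\nu})^2/(r^2-W_{\tau_\nu}^2)\|_\infty\le\|W_\sigma-W_{\tau_\nu}\|_\infty^2/(r^2-\|W_{\tau_\nu}\|_\infty^2)$ together with $\|W_\sigma-W_{\tau_\nu}\|_\infty\to\|W_\sigma\|_\infty$ and $\|W_{\tau_\nu}\|_\infty\to 0$, the first line yields $N^{\scalebox{0.65}{$W$}}_{\eps,\delta}(\rho)\ge\big((2/\pi)^{2m}/\|W_\sigma\|_\infty^2\big)\log(1/3\delta)$ for every $\sigma$ with $F(\rho,\sigma)>2\eps$. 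To relax the constraint to $F(\rho,\sigma)\ge 2\eps$, I would replace $\sigma$ by $(1-t)\sigma+t\rho$ and let $t\downarrow 0$: since $\eps<1/2$ this strictly increases the fidelity while $\|W_{(1-t)\sigma+t\rho}\|_\infty\to\|W_\sigma\|_\infty$, so the bound passes to the limit, exactly as in the argument around~\eqref{eq:OhMan}.

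The hard part is the construction in the third paragraph: one has to produce a sequence of honest quantum states whose Wigner functions vanish uniformly \emph{and} whose overlap with the fixed target $\rho$ vanishes, so as to emulate the forbidden choice $g_2=0$; thermal states of diverging temperature do both, but verifying this — and that the denominator $r^2-W_{\tau_\nu}^2$ stays bounded away from $0$ — is where the (modest) work lies. Everything else is bookkeeping, and the hypotheses $0<\eps<1/2$, $0<\delta\le1/3$ serve only to keep the statement meaningful ($2\eps<1$ so the fidelity constraint is satisfiable, $\delta\le 1/3$ so $\log(1/3\delta)\ge 0$).
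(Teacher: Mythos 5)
Your proposal is correct, and its skeleton coincides with the paper's proof: the first inequality is obtained exactly as you say, by transcribing \eqref{eq:LowerBoundblackboxSup} through the dictionary \eqref{eq:IdentificationsWigner}, and the second by choosing $\sigma_2$ to be a sequence of genuine states that emulates the forbidden hypothesis $g_2=0$, followed by the same convex-combination argument as around \eqref{eq:OhMan} to relax the strict constraint $F(\rho,\sigma)>2\eps$ to $F(\rho,\sigma)\ge 2\eps$ (the mixture with $\rho$ stays a valid state, so feasibility is preserved). The one genuine difference is the choice of that sequence: the paper takes the spike states $(\rho_n)$ of Lemma~\ref{lem:SpreadedWigner}, using $\|W_{\rho_n}\|_\infty\to 0$ together with the weak convergence $\psi_n\rightharpoonup 0$ to get $F(\rho,\rho_n)\to 0$, whereas you take high-temperature thermal states $\tau_\nu$, for which both required facts are elementary and quantitative: $\|W_{\tau_\nu}\|_\infty=\bigl(\tfrac{2}{\pi(2\nu+1)}\bigr)^m\to 0$ and $F(\rho,\tau_\nu)\le(\nu+1)^{-m}\to 0$ from the operator bound $\tau_\nu\le(\nu+1)^{-m}\openone$; the denominator control $\|W_{\tau_\nu}\|_\infty<r$ is immediate. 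Your route is more self-contained for this theorem, since it avoids invoking the spike-state construction and its weak-convergence lemma; the paper's choice is natural in context because the spike states (which, unlike thermal states, are \emph{pure}) are needed anyway for Proposition~\ref{prop:SpikeStates} and the worst-case Theorem~\ref{thm:WorstCaseWigener}, so reusing them here costs nothing. Both instantiations yield the identical bound, including the passage $\|W_\sigma-W_{\sigma_2}\|_\infty\to\|W_\sigma\|_\infty$ via the triangle inequality.
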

\begin{remark}
Note that the restriction $0<\eps<1/2$ in Theorem~\ref{thm:LowerBoundWigner} is sensible as for allowed additive error $\eps\ge1/2$ one can always output $1/2$ to estimate $F(\rho,\sigma)$ with demanded precision without performing any measurements on $\sigma.$ In other words, for $\eps\ge 1/2$ we have the trivial result
\begin{align*}
    \NW(\rho) = 0.
\end{align*}
\end{remark}
\begin{proof}[Proof of Theorem~\ref{thm:LowerBoundWigner}]

Using \eqref{eq:FidelWigner} and the identifications \eqref{eq:IdentificationsWigner}, we immediately get from \eqref{eq:LowerBoundblackboxSup} in Theorem~\ref{thm:LowerBoundblackbox} that
\begin{align}
\label{eq:FirstLowerWignerproof}
    N^{\scalebox{0.65}{$W$}}_{\eps,\delta}(\rho) \ge \sup_{\substack{\sigma_1,\sigma_2 \text{ state,}\\ |F(\rho,\sigma_1)-F(\rho,\sigma_2)| \,> \,2\eps}}\left\|\frac{\left(W_{\sigma_1}-W_{\sigma_2}\right)^2}{r^2 - W^2_{\sigma_2}}\right\|^{-1}_\infty\log\left(\frac{1}{3\delta}\right).
\end{align}
We now focus on proving the second inequality in \eqref{eq:InstanceBasedWigner}.
Consider for that $\sigma_2$ in \eqref{eq:FirstLowerWignerproof} to be the sequence $\left(\rho_n\right)_{n\in\N}$ from Lemma~\ref{lem:SpreadedWigner} below. Since $\rho$ is pure, \eqref{eq:WeakConvergenceTo0} yields $\lim_{n\to\infty}F(\rho,\rho_n)=0.$ Combining this with the fact that $\lim_{n\to\infty}\|W_{\rho_n}\|_{\infty} = 0,$ c.f. \eqref{eq:LinftyL1SpredWigner}, this gives for every state $\sigma$ that
\begin{align*}
    \left\|\frac{\left(W_{\sigma}-W_{\rho_n}\right)^2}{r^2 - W^2_{\rho_n}}\right\|_\infty \le \frac{\left\|W_{\sigma}-W_{\rho_n}\right\|^2_\infty}{r^2 -\|W_{\rho_n}\|^2_\infty} \xrightarrow[
   n\to\infty ]{} \left(\frac{\|W_{\sigma}\|_\infty}{r}\right)^2
\end{align*}
and hence
\begin{align}
    N^{\scalebox{0.65}{$W$}}_{\eps,\delta}(\rho) \ge \left(\frac{2}{\pi}\right)^{2m} \sup_{\substack{\sigma_\text{ state},\\ F(\rho,\sigma)  \,> \,2\eps}}\,\frac{1}{\|W_{\sigma}\|^2_\infty}\log\left(\frac{1}{3\delta}\right) = \left(\frac{2}{\pi}\right)^{2m} \sup_{\substack{\sigma_\text{ state},\\ F(\rho,\sigma)  \,\ge \,2\eps}}\,\frac{1}{\|W_{\sigma}\|^2_\infty}\log\left(\frac{1}{3\delta}\right),
\end{align}
Here, the last equality follows due to $\eps < 1/2$ and the same argument as for \eqref{eq:OhMan}. Note that the corresponding function $g_n$ defined in \eqref{eq:g_nOhMan} in the context here is the convex combination of two Wigner functions and hence a valid Wigner function itself.
\end{proof}
\subsection{Matching upper and lower bounds for $\NW(\rho)$ for examples of states}
\label{sec:MatchWigner}
In this section we discuss two examples of families of states 
for which the upper and lower bounds on $\NW(\rho)$ provided in Theorem~\ref{thm:WignerUpperBound} and Theorem~\ref{thm:LowerBoundWigner} match. In particular we give a full characterisation of the sample complexity of fidelity estimation in the Wigner model for 
\begin{enumerate}
    \item the Fock states $\left(\kb{n}\right)_{n\in\N}$ corresponding to the eigenstates of the (1-mode) harmonic oscilator,
    \item the \emph{spike states,} which we define in Section~\ref{sec:SpreadedSpikeStates},
    \item Gaussian states,
\end{enumerate} 
in terms of the $L^1$-norms of the corresponding Wigner functions.

In the case of the Fock states the key insight is that the corresponding Wigner functions can explicitly be expressed in terms of Laguerre polynomials, c.f.~\eqref{eq:WignerFockSec} below.
Combining this with the results on the asymptotic behaviour of the Laguerre polynomials in \cite{Askey_LaguerreExpansion_1965,szegho_orthogonal_1975,Markett_ScalingLpLaguerre_1982,landau_bessel_2000}, we show in Proposition~\ref{prop:FockStates} that the upper bound in Theorem~\ref{thm:WignerUpperBound} and the first lower bound in Theorem~\ref{thm:LowerBoundWigner} match, giving that
\begin{align*}
%\label{eq:FockThetaSec}
   \NW(\kb{n}) = \Theta\left(\Bigg(\frac{\left\|W_{\kb{n}}\right\|_1}{\eps}\Bigg)^2\log\left(\frac{1}{\delta}\right)\right)=\Theta\left(\frac{n}{\eps^2}\log\left(\frac{1}{\delta}\right)\right). 
\end{align*}

The spike states on the other hand are a sequence of pure states $\left(\rho_n\right)_{n\in\N}$ on $L^2(\R^m)$ constructed in such a way, that their Wigner functions uniformly vanish as $n$ increases, i.e.
\begin{align}
\label{eq:VanishWignerSec}
    \|W_{\rho_n}\|_\infty\xrightarrow[n\to\infty]{}0.
\end{align}
Furthermore, we show in Lemma~\ref{lem:SpreadedWigner} that the $L^1$-norms of their Wigner functions blow up and exactly scale reciprocally to the $L^\infty$-norms, i.e.
\begin{align}
\label{eq:ReciScalWignerSec}
    \|W_{\rho_n}\|_1 \sim \|W_{\rho_n}\|^{-1}_\infty.
\end{align}
Hence, we can use the upper bound in Theorem~\ref{thm:WignerUpperBound} and the second lower bound in Theorem~\ref{thm:LowerBoundWigner} showing that
\begin{align}
\label{eq:FockThetaSec1}
   \NW(\rho_{n}) = \Theta\left(\Bigg(\frac{\left\|W_{\rho_n}\right\|_1}{\eps}\Bigg)^2\log\left(\frac{1}{\delta}\right)\right). 
\end{align}
We believe the existence of such a sequence of pure states satisfying \eqref{eq:VanishWignerSec} and \eqref{eq:ReciScalWignerSec} to be potentially of independent interest.

Lastly, for Gaussian states $\rho,$ the $L^1$-norm of the corresponding Wigner function is by the normalisation \eqref{eq:WignerNormalisation} equal to 1, i.e. $\|W_\rho\|_1=1.$ Combining this with the lower bound provided in Theorem~\ref{thm:LowerBoundWigner} we see in Proposition~\ref{prop:Gaussian}
\begin{align}
 \NW(\rho) = \Theta\left(\frac{1}{\eps^2}\log\left(\frac{1}{\delta}\right)\right)
\end{align}
for all pure Gaussian states $\rho.$

As a consequence of these results we find the following worst case characterisation of the sample complexity of fidelity estimation in the Wigner model: 
\begin{theorem}[Sample complexity of fidelity estimation in Wigner model in worst case]
\label{thm:WorstCaseWigener}
Let $0<\eps<1/2$ and $0<\delta<1/4$. Then we have for all $t\ge 1$
\begin{align}
\label{eq:WorstCaseWigner}
    \sup_{\substack{\rho \text{ pure state,}\\\|W_\rho\|_1\le t}} N^{\scalebox{0.65}{$W$}}_{\eps,\delta}(\rho) = \Theta\left(\left(\frac{t}{\eps}\right)^2\log\left(\frac{1}{\delta}\right)\right),
\end{align}
where the constants in the $\Theta(\placeholder)$ can be $m$-dependent. 
In particular, this shows that in worst case over all pure states $\rho$ the sample complexity of fidelity estimation in the Wigner model is infinite, i.e.
\begin{align}
    \sup_{\rho \text{ pure state}} N^{\scalebox{0.65}{$W$}}_{\eps,\delta}(\rho) = \infty.
\end{align}
\end{theorem}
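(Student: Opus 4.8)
The plan is to get the upper bound immediately from Theorem~\ref{thm:WignerUpperBound} and the lower bound by exhibiting, for every threshold $t\ge 1$, a near-extremal pure competitor: a Gaussian state when $t$ is bounded, and a spike state when $t$ is large, fed into Theorem~\ref{thm:LowerBoundWigner} (alternatively, one can directly invoke Propositions~\ref{prop:Gaussian} and~\ref{prop:SpikeStates}).

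\emph{Upper bound.} For any pure $\rho$ with $\|W_\rho\|_1\le t$ we have $W_\rho\in L^1\cap L^2$, so $\widetilde f=W_\rho$ is feasible in \eqref{eq:SmoothedL1Norm} and $\|W_\rho\|_1^{(0)}=\|W_\rho\|_1\le t$. Taking $\eps'=0$ in \eqref{eq:WignerFidelyyUpperBound} gives $N^{\scalebox{0.65}{$W$}}_{\eps,\delta}(\rho)\le 2^{2m+1}(t/\eps)^2\log(1/\delta)$, and the supremum over such $\rho$ is the claimed $\mathcal O(\cdot)$ bound.

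\emph{Lower bound.} I would split according to whether $t\le t_0$ or $t>t_0$ for a fixed $t_0$ (chosen depending only on the spike-state construction, hence at worst on $m$). For $1\le t\le t_0$, take a pure Gaussian state $\rho_G$: by \eqref{eq:WignerNormalisation} its Wigner function is a probability density, so $\|W_{\rho_G}\|_1=1\le t$, and Proposition~\ref{prop:Gaussian} gives $N^{\scalebox{0.65}{$W$}}_{\eps,\delta}(\rho_G)=\Theta(\eps^{-2}\log(1/\delta))\gtrsim t_0^{-2}(t/\eps)^2\log(1/\delta)$. For $t>t_0$, use the spike states $(\rho_n)_n$ of Lemma~\ref{lem:SpreadedWigner}; since $\|W_{\rho_n}\|_1\to\infty$ with controlled growth (from the explicit construction of Section~\ref{sec:SpreadedSpikeStates}), for $t_0$ large enough there is an index $n$ with $t/C\le\|W_{\rho_n}\|_1\le t$ for a constant $C$. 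Pick such $n$, then $n'>n$ so large that $\|W_{\rho_{n'}}\|_\infty\le\eps\|W_{\rho_n}\|_\infty$ (possible as $\|W_{\rho_k}\|_\infty\to 0$), and set $\sigma=2\eps\,\rho_n+(1-2\eps)\rho_{n'}$, a legitimate state since $0<2\eps<1$. Then $F(\rho_n,\sigma)=2\eps+(1-2\eps)F(\rho_n,\rho_{n'})\ge 2\eps$ while $\|W_\sigma\|_\infty\le 2\eps\|W_{\rho_n}\|_\infty+(1-2\eps)\eps\|W_{\rho_n}\|_\infty\le 3\eps\|W_{\rho_n}\|_\infty$. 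Feeding $\sigma$ into the second lower bound of Theorem~\ref{thm:LowerBoundWigner}, and using $\|W_{\rho_n}\|_1\lesssim\|W_{\rho_n}\|_\infty^{-1}$ from \eqref{eq:ReciScalWignerSec} together with $\delta<1/4$ (so $\log(1/3\delta)=\Theta(\log(1/\delta))$), yields $N^{\scalebox{0.65}{$W$}}_{\eps,\delta}(\rho_n)\gtrsim (2/\pi)^{2m}(9\eps^2\|W_{\rho_n}\|_\infty^2)^{-1}\log(1/3\delta)\gtrsim (\|W_{\rho_n}\|_1/\eps)^2\log(1/\delta)\gtrsim (t/\eps)^2\log(1/\delta)$, with $m$-dependent implied constants. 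Combining the two regimes gives \eqref{eq:WorstCaseWigner}; the ``in particular'' claim follows by letting $t\to\infty$, equivalently $N^{\scalebox{0.65}{$W$}}_{\eps,\delta}(\rho_n)\to\infty$ as $n\to\infty$.

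\emph{Main obstacle.} The only genuinely nontrivial point is that the discrete value set $\{\|W_{\rho_n}\|_1:n\in\N\}$ meets every sufficiently large $t$ up to a constant factor; this requires the quantitative form of the spike-state construction (or can be bypassed by directly citing the $\Theta$-characterisation of $\NW(\rho_n)$ in Proposition~\ref{prop:SpikeStates}, plus Proposition~\ref{prop:Gaussian} for bounded $t$). The rest is assembling inequalities already established, with care about the ranges $0<\eps<1/2$ (needed for $2\eps<1$ in the mixture and for the second bound of Theorem~\ref{thm:LowerBoundWigner}) and $0<\delta<1/4$.
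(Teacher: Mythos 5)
Your proposal is correct and follows essentially the paper's own route: the upper bound via Theorem~\ref{thm:WignerUpperBound} with $\eps'=0$, and the lower bound by plugging a spike state, mixed as $\sigma=2\eps\,\rho_n+(1-2\eps)\rho_{n'}$ with a more spread-out spike state, into the second bound of Theorem~\ref{thm:LowerBoundWigner}. The only (harmless) differences are that the paper simply takes $n=\floor{t}$ and $n'=\ceil{t/\eps}$ and uses only the bounds $\|W_{\rho_k}\|_1\le k$ and $\|W_{\rho_k}\|_\infty\lesssim 1/k$ from Lemma~\ref{lem:SpreadedWigner}, so it needs neither your index-matching step $t/C\le\|W_{\rho_n}\|_1\le t$ via the reciprocal relation \eqref{eq:ReciScalWignerSec} nor the separate Gaussian-state regime for small $t$.
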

In the following sections we formally state and prove the above mentioned results for the Fock, spike states and Gaussian states and then provide the proof of Theorem~\ref{thm:WorstCaseWigener}.

\subsubsection{Sample complexity for Fock states}
\label{sec:Fockstates}
\begin{proposition}[Sample complexity for Fock states]
\label{prop:FockStates} Let $0<\eps,\delta<1/4$ and $\left(\kb{n}\right)_{n\in\N}$ be the 1-mode Fock states. Then we have the following characterisation of the corresponding sample complexity of fidelity estimation in the Wigner model:
\begin{align}
\label{eq:FockTheta}
   \NW\left(\kb{n}\right) = \Theta\left(\Bigg(\frac{\left\|W_{\kb{n}}\right\|_1}{\eps}\Bigg)^2\log\left(\frac{1}{\delta}\right)\right)=\Theta\left(\frac{n}{\eps^2}\log\left(\frac{1}{\delta}\right)\right). 
\end{align}
\end{proposition}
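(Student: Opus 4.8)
The plan is to sandwich $\NW(\kb n)$ between the general upper bound of Theorem~\ref{thm:WignerUpperBound} and the first (two-hypothesis) lower bound of Theorem~\ref{thm:LowerBoundWigner}, showing that both are of order $\big(\|W_{\kb n}\|_1/\eps\big)^2\log(1/\delta)$, and then to evaluate $\|W_{\kb n}\|_1=\Theta(\sqrt n)$ via classical asymptotics of Laguerre polynomials. For the upper bound: since $W_{\kb n}$ is a Laguerre polynomial times a Gaussian it lies in $L^1(\C)$, so $\|W_{\kb n}\|^{(0)}_1=\|W_{\kb n}\|_1<\infty$, and taking $\eps'=0$ in Theorem~\ref{thm:WignerUpperBound} (with $m=1$) gives $\NW(\kb n)\le 8\big(\|W_{\kb n}\|_1/\eps\big)^2\log(1/\delta)$. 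Passing to polar coordinates in $W_{\kb n}(\alpha)=\tfrac2\pi(-1)^nL_n(4|\alpha|^2)e^{-2|\alpha|^2}$ turns $\|W_{\kb n}\|_1$ into the weighted $L^1$-norm $\tfrac12\int_0^\infty|L_n(t)|e^{-t/2}\,dt$; the $L^p$-estimates of \cite{Markett_ScalingLpLaguerre_1982} together with the uniform / Plancherel--Rotach asymptotics in \cite{szegho_orthogonal_1975,Askey_LaguerreExpansion_1965,landau_bessel_2000} give $\int_0^\infty|L_n(t)|e^{-t/2}\,dt=\Theta(\sqrt n)$, hence $\|W_{\kb n}\|_1=\Theta(\sqrt n)$ and the upper bound becomes $O\big((n/\eps^2)\log(1/\delta)\big)$.

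For the lower bound I would feed into the first inequality of Theorem~\ref{thm:LowerBoundWigner} the pair of (valid, mixed) states
\[
\sigma_1=(1-\lambda)\tau+\lambda\,\kb n,\qquad \sigma_2=(1-\lambda)\tau+\lambda\,\kb{n+2},
\]
where $\lambda$ is chosen slightly above $2\eps$ (so $\lambda=\Theta(\eps)$ and $\lambda<1$ since $\eps<1/4$) and $\tau$ is a fixed state with $\|W_\tau\|_\infty\le r/2$, for instance the thermal state of mean photon number $1$. Using orthogonality of Fock states, $|F(\kb n,\sigma_1)-F(\kb n,\sigma_2)|=\lambda\,|F(\kb n,\kb n)-F(\kb n,\kb{n+2})|=\lambda>2\eps$, while $\|W_{\sigma_2}\|_\infty\le(1-\lambda)\|W_\tau\|_\infty+\lambda r<r$, so that $r^2-W_{\sigma_2}^2\gtrsim r^2$ uniformly. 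Since $W_{\sigma_1}-W_{\sigma_2}=\lambda(W_{\kb n}-W_{\kb{n+2}})$, Theorem~\ref{thm:LowerBoundWigner} then gives
\[
\NW(\kb n)\ \gtrsim\ \frac{r^2}{\lambda^2\,\|W_{\kb n}-W_{\kb{n+2}}\|_\infty^{2}}\,\log\Big(\frac{1}{3\delta}\Big)\ \gtrsim\ \frac{1}{\eps^2\,\|W_{\kb n}-W_{\kb{n+2}}\|_\infty^{2}}\,\log\Big(\frac{1}{\delta}\Big),
\]
using $\log(1/(3\delta))=\Omega(\log(1/\delta))$ for $\delta<1/4$.

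The crux is then the estimate $\|W_{\kb n}-W_{\kb{n+2}}\|_\infty=O(n^{-1/2})$, which is strictly smaller than $\|W_{\kb n}\|_\infty=2/\pi$; the extra decay reflects that the phase-space oscillations of $W_{\kb n}$ and $W_{\kb{n+2}}$ are nearly synchronized except near the turning point $|\alpha|^2\approx n$. I would derive it from the contiguous-polynomial identity $L_k(t)-L_{k+1}(t)=\tfrac{t}{k+1}L_k^{(1)}(t)$, which yields
\[
W_{\kb n}(\alpha)-W_{\kb{n+2}}(\alpha)=\tfrac2\pi(-1)^n e^{-t/2}\Big(\tfrac{t}{n+1}L_n^{(1)}(t)+\tfrac{t}{n+2}L_{n+1}^{(1)}(t)\Big),\qquad t=4|\alpha|^2,
\]
combined with the standard uniform envelope bound $|L_k^{(1)}(t)|e^{-t/2}\lesssim t^{-3/4}k^{1/4}$ on the oscillatory range $0<t<4k$ (with exponential decay beyond $t\approx 4k$ and an Airy transition of width $\sim k^{1/3}$ there), so that $\tfrac tk|L_k^{(1)}(t)|e^{-t/2}\lesssim t^{1/4}k^{-3/4}\lesssim k^{-1/2}$ uniformly in $t$. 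Plugging $\|W_{\kb n}-W_{\kb{n+2}}\|_\infty\lesssim n^{-1/2}$ into the displayed lower bound gives $\NW(\kb n)\gtrsim(n/\eps^2)\log(1/\delta)$, which matches the upper bound; combining with $\|W_{\kb n}\|_1=\Theta(\sqrt n)$ then upgrades this to the claimed two-sided estimate in both forms. The main obstacle is assembling the Laguerre asymptotics uniformly across the three relevant regimes — the Mehler--Heine behaviour near $t=0$, the oscillatory bulk, and the Airy regime at $t\approx 4n$ — both for the $L^1$-asymptotic $\int_0^\infty|L_n(t)|e^{-t/2}dt=\Theta(\sqrt n)$ and for the weighted pointwise control of $L_n^{(1)}$; only standard results are needed, but stitching the regimes together (and noting that only the \emph{upper} bound on $\|W_{\kb n}-W_{\kb{n+2}}\|_\infty$ is required) needs care.
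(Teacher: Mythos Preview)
Your proposal is correct and the overall architecture matches the paper's: upper bound via Theorem~\ref{thm:WignerUpperBound} plus $\|W_{\kb n}\|_1=\Theta(\sqrt n)$ from \cite{Markett_ScalingLpLaguerre_1982}, lower bound via the first inequality of Theorem~\ref{thm:LowerBoundWigner} with test states built from $\kb{n}$ and $\kb{n+2}$. The lower-bound argument, however, takes a genuinely different route.

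The paper picks $\sigma_1=(1-3\eps)\kb{n}+3\eps\kb{n+2}$ and $\sigma_2=\kb{n}$. Because $|W_{\kb n}(0)|=r$, the denominator $r^2-W_{\sigma_2}^2$ vanishes at the origin, so one cannot simply bound numerator and denominator separately; the paper must show that the ratio stays bounded there. This forces a region-by-region analysis: near $0$ they use the Hilb-type formula \cite[Theorem 8.22.4]{szegho_orthogonal_1975} to show the numerator is $O(|\alpha|^4)$ while the denominator is $\gtrsim n|\alpha|^2$, and in the bulk/tail they invoke \cite{Askey_LaguerreExpansion_1965,landau_bessel_2000}. Your choice $\sigma_2=(1-\lambda)\tau+\lambda\kb{n+2}$ with a thermal $\tau$ keeps $\|W_{\sigma_2}\|_\infty$ uniformly below $r$, so $r^2-W_{\sigma_2}^2\gtrsim r^2$ everywhere and the whole origin analysis disappears. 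You then only need $\|W_{\kb n}-W_{\kb{n+2}}\|_\infty=O(n^{-1/2})$, which is strictly less than what the paper establishes (and is in fact one of the ingredients the paper extracts from \cite{Askey_LaguerreExpansion_1965}). Your derivation of this bound via the contiguous identity $L_k-L_{k+1}=\tfrac{t}{k+1}L_k^{(1)}$ and the envelope for $L_k^{(1)}$ is a clean alternative to citing the Askey--Wainger table directly. The trade-off: your approach is shorter and avoids the Bessel/Hilb machinery near the origin, while the paper's approach yields slightly sharper pointwise information on the ratio (e.g.\ the $O(1/n^2)$ bound in the inner region), which is not needed for the final statement.
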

\begin{remark}
As seen in Proposition~\ref{prop:WignerSmoothedLower} in the Appendix, the smoothed $L^1$-norm of the Wigner functions of the Fock states satisfy
\begin{align*}
    \left\| W_{\kb{n}}\right\|^{(\eps)}_1 \ge \left\|W_{\kb{n}}\right\|_1(1-C\eps^\beta)
\end{align*}
for all $\eps>0,\,\beta\in(0,1/2)$ and some $C>0$ possibly dependent on $\beta$ but independent of $n$ and $\eps.$ As discussed in the beginning of Appendix~\ref{sec:Tail}, this gives
\begin{align*}
 \inf_{\eps' \in[0,\eps)}\,\frac{\|W_{\kb{n}}\|_1^{(\eps')}}{\eps-\eps'}=\Theta\left(\frac{\|W_{\kb{n}}\|_1}{\eps}\right).
\end{align*}
Hence, for the Fock states, the upper bound on the sample complexity in terms of the smoothed $L^1$-function in Theorem~\ref{thm:WignerUpperBound} does essentially not become weaker when expressing it directly through the non-smoothed $L^1$-norm. 
This further justifies that we find in Proposition~\ref{prop:FockStates}  a characterisation of the sample complexity of fidelity estimation for the Fock states in terms of the $L^1$-norm of the corresponding Wigner functions. 
\end{remark}
For the proof of 
Proposition~\ref{prop:FockStates},
we use the fact that the corresponding Wigner functions can be expressed in terms of the Laguerre polynomials $L_n$ as
\cite[Equation 36]{leonhardt_measuring_1995}\cite[Equation 7.3]{Scheel_QuantumOptics}
\begin{align}
\label{eq:WignerFockSec}
    W_{\kb{n}}(\alpha) = \left(\frac{2}{\pi}\right)(-1)^n e^{-2|\alpha|^2}L_n(4|\alpha|^2).
\end{align}
Using the results of \cite{Markett_ScalingLpLaguerre_1982} on the $L^p$-norms of the Laguerre polynomials, we can deduce the following scaling of the $L^1$-norms of the Wigner function of the Fock states:
\begin{align}
\label{eq:WignerL1Scaling}
    \left\|W_{\kb{n}}\right\|_1 \sim \sqrt{n}.
\end{align}
This can be combined with Theorem~\ref{thm:WignerUpperBound} yielding the desired upper bound on $\NW\left(\kb{n}\right).$  
On the other hand, using the multiple asymptotic expressions of the Laguerre polynomials from \cite{szegho_orthogonal_1975,Askey_LaguerreExpansion_1965,landau_bessel_2000} for different regions in phase space, we show explicitly below that
\begin{align*}
    \left\|\frac{\left(W_{\kb{n+2}}-W_{\kb{n}}\right)^2}{(2/\pi)^2 - W^2_{\kb{n}}}\right\|_\infty\lesssim \frac{1}{n}.
\end{align*}
Applying this to the first lower bound in Theorem~\ref{thm:LowerBoundWigner} for specific choices of $\sigma_1$ and $\sigma_2$ provides the corresponding lower bound of $\NW\left(\kb{n}\right)$ 
in Proposition~\ref{prop:FockStates}.

\begin{proof}[Proof of Proposition~\ref{prop:FockStates}]
We first start by proving the corresponding upper bound in \eqref{eq:FockTheta}. We use the upper bound of Theorem~\ref{thm:WignerUpperBound} which gives
\begin{align*}
    \NW(\kb{n}) \lesssim \left(\frac{\left\|W_{\kb{n}}\right\|_1}{\eps}\right)^2\log\left(\frac{1}{\delta}\right).
\end{align*}
The Wigner function of the Fock states $\kb{n}$ can be written explicitly as \cite[Equation 36]{leonhardt_measuring_1995}\cite[Equation 7.3]{Scheel_QuantumOptics}
\begin{align}
\label{eq:WignerFock}
    W_{\kb{n}}(\alpha/2) = r(-1)^n e^{-|\alpha|^2/2}L_n(|\alpha|^2)
\end{align}
with $L_n$ denoting the \emph{Laguerre polynomial} of $n^{th}$ order and $r=2/\pi.$
From that we see that the scaling of the corresponding $L^1$-norms is given as
\begin{align}
\label{eq:FockWignerL1}
    \left\|W_{\kb{n}}\right\|_1 \sim \int^\infty_0 e^{-s^2/2}|L_n(s^2)| sds = \int^\infty_0 e^{-s/2}|L_n(s)| ds \sim \sqrt{n}, 
\end{align}
where we used \cite[Lemma 1]{Markett_ScalingLpLaguerre_1982} for the last step. 
Hence, we have shown
\begin{align}
\label{eq:FockUpperProof}
    \NW(\kb{n}) \lesssim \frac{n}{\eps^2}\log\left(\frac{1}{\delta}\right).
\end{align}

Let us finish the proof by providing the corresponding lower bound in \eqref{eq:FockTheta}: Since $\eps<1/4$ we can pick states $\sigma^{(n)}_1=(1-3\eps)\kb{n}+3\eps\kb{n+2}$ and $\sigma^{(n)}_2 = \kb{n}$, which satisfy \begin{align*}
    \left|F(\kb{n},\sigma^{(n)}_1)-F(\kb{n},\sigma^{(n)}_2)\right| =3\eps>2\eps
\end{align*}
Hence, we see from the first lower bound in Theorem~\ref{thm:LowerBoundWigner} that
\begin{align}
\label{eq:FockLowerBoundProof}
   \nn \NW(\kb{n}) &\gtrsim \left\|\frac{\left(W_{\sigma^{(n)}_1}-W_{\sigma^{(n)}_2}\right)^2}{r^2 - W^2_{\sigma^{(n)}_2}}\right\|^{-1}_\infty\log\left(\frac{1}{3\delta}\right)\\&= \frac{1}{9\eps^2}\left\|\frac{\left(W_{\kb{n+2}}-W_{\kb{n}}\right)^2}{r^2 - W^2_{\kb{n}}}\right\|^{-1}_\infty\log\left(\frac{1}{3\delta}\right).
\end{align}

In the following, we use  estimate the quotient appearing in infinity norm in the second line of \eqref{eq:FockLowerBoundProof} to find the desired lower bound on the sample complexity. We do this by using the explicit form of the Wigner functions in \eqref{eq:WignerFock} and provide bounds in three different regions of phase space, i.e. for phase space points $\alpha$ satisfying \begin{enumerate}
    \item $|\alpha|^2\le 1/\tilde n,$  \item $|\alpha|^2\in[\tilde n^{-1},1]$\quad or \item $|\alpha|^2>1.$
\end{enumerate}
Here, we introduced the short hand notation $\tilde n=n+1/2.$
For the first two regions we use the \emph{Hilb's type formula} \cite[Theorem 8.22.4]{szegho_orthogonal_1975} which gives for $|\alpha|^2\in[0,1]$
\begin{align}
\label{eq:HilbsLaguerre}
    e^{-|\alpha|^2/2}L_n(|\alpha|^2) = J_0(2\sqrt{\tilde n}|\alpha|) + \begin{cases}\mathcal{O}\left(|\alpha|^{4}\log(|\alpha|^{-2}n^{-1})\right),&\text{for $|\alpha|^2\in[0,\tilde n^{-1}]$}\vspace{0.2cm}\\\mathcal{O}\left(n^{-3/4}\right),&\text{for $|\alpha|^2\in [\tilde n^{-1},1]$}
    \end{cases}
\end{align}
 Here, the \emph{Bessel function} of order $0$ was denoted by
\begin{align}
\label{eq:DefBessel}
    J_0(y) = \sum_{k=0}^{\infty} \frac{(-1)^k\left(y/2\right)^{2k}}{(k!)^2} = 1- \left(\frac{y}{2}\right)^2 + R(y),
\end{align}
with the last equation holding by Taylor's theorem including up to third order for small $y,$ say $y\in[0,3],$ with $R(y)$ being the corresponding remainder. The remainder satisfies
\begin{align}
\label{eq:RemainderTaylorBessel}
    |R(y)| \le \frac{y^4}{4!} 
\end{align}
which follows by the fact\footnote{This can be seen by denoting the Bessel function of order $n\in\N$ by $J_n$ and using that $J'_0=-J_1$ and the recurrence relation $J'_n= \frac{1}{2}(J_{n-1}-J_{n+1})$ \cite[Equation 9.1.27]{abramowitz+stegun}  combined with the fact that $|J_n(y)|\le 1$ for all $y\in\R$ \cite[Equation 9.1.60]{abramowitz+stegun}.} that $|J^{(4)}_0(y)|\le 1$ together with the standard estimate for the remainder in Taylor's theorem \cite[Theorem 7.7]{apostol_calculus_1991}. Let us focus on the first phase space region close to the origin, i.e. $|\alpha|^2\le 1/\tilde n,$ in which case we
get from the series expression of the Bessel function
\begin{align*}
  \left|J_0(2\sqrt{\tilde n+2}|\alpha|)-J_0(2\sqrt{\tilde n}|\alpha|)\right| &= \left|\sum_{k=0}^{\infty}\frac{(-1)^k\left((\tilde n+2)^k-\tilde n^k\right)|\alpha|^{2k}}{(k!)^2}\right| \lesssim \sum_{k=1}^{\infty} \frac{\tilde n^{k-1} |\alpha|^{2k}}{k!} \\&\le |\alpha|^2 e^{\tilde n |\alpha|^2} \lesssim |\alpha|^2,
\end{align*}
where we used for the first inequality that $|(\tilde n+2)^k-\tilde n^k| =|\sum_{l=0}^{k-1}\tilde n^l2^{k-l}\binom{k}{l}| \le \tilde n^{k-1} 3^{k} $ together with the fact that $3^k/k! \le C$ for some $C>0.$ From this and \eqref{eq:HilbsLaguerre} we see
\begin{align*}
    &r^{-2}\left(W_{\kb{n+2}}(\alpha/2)-W_{\kb{n}}(\alpha/2)\right)^2 =\left(e^{-|\alpha|^2/2}L_{n+2}(|\alpha|^2)-e^{-|\alpha|^2/2}L_n(|\alpha|^2)\right)^2 \\&\lesssim \left(|\alpha|^2 + \mathcal{O}(|\alpha|^{4}\log(|\alpha|^{-2}n^{-1}))\right)^2 \lesssim |\alpha|^4,
\end{align*}
where we used in the last inequality that $|\alpha|^2\log(|\alpha|^{-2}) \le C$ and $|\alpha|^2\log(n) \le C$ for some $C>0.$
Furthermore, again for $|\alpha|^2\le 1/\tilde n,$ we have
\begin{align*}
1 - r^{-2}W^2_{\kb{n}}(\alpha/2) &=1-e^{-|\alpha|^2}L^2_n(|\alpha|^2)\\&=1-(J_0(2\sqrt{\tilde n}|\alpha|)+\mathcal{O}(|\alpha|^{4}\log(|\alpha|^{-2}n^{-1}))^2\\&=1-(1-n|\alpha|^2+ R(\sqrt{\tilde n}|\alpha|)+\mathcal{O}(|\alpha|^{4}\log(|\alpha|^{-2}n^{-1})))^2\\&\ge 1- (1-cn|\alpha|^2)^2 \ge cn|\alpha|^2
\end{align*}
where the first inequality holds for $n$ large enough and some $0<c<1$ and we used \eqref{eq:RemainderTaylorBessel}. Dividing both we see
\begin{align}
\label{eq:PhaseSpaceBound1}
  \sup_{0\le|\alpha|^2\le1/\tilde n}\frac{\left(W_{\kb{n+2}}(\alpha/2)-W_{\kb{n}}(\alpha/2)\right)^2}{r^2 - W^2_{\kb{n}}(\alpha/2)} \lesssim \sup_{0\le|\alpha|^2\le\frac{1}{\tilde n}}\frac{|\alpha|^2}{n}  \lesssim \frac{1}{n^2}.
\end{align}

For the other two regions in phase space we use \cite{Askey_LaguerreExpansion_1965}, in particular the table on page 699, which gives 
\begin{align}
\label{eq:AskeyBoundLaguerre}
\sup_{\alpha\in\C}\left(e^{-|\alpha|^2/2}\left(L_{n+2}(|\alpha|^2)-L_n(|\alpha|^2)\right)\right) = \mathcal{O}\left(n^{-1/2}\right).
\end{align}

Let us focus first on second region in phase space, i.e.  $|\alpha|^2\in[\tilde n^{-1},1]$: 
For those phase space points we use the fact that 
$|J_0(y)|\le \frac{0.8}{y^{1/3}}$ \cite{landau_bessel_2000} for all $y\in\R$ which gives by \eqref{eq:WignerFock} and \eqref{eq:HilbsLaguerre} that 
\begin{align}
\label{eq:PhaseSpace2Denom}
  1 - r^{-2}W^2_{\kb{n}}(\alpha/2)  =1-\left(J_0(2\sqrt{\tilde n}|\alpha|)+\mathcal{O}(n^{-3/4})\right)^2 \ge c
\end{align}
for $n$ large enough and some $c>0.$ Dividing now \eqref{eq:AskeyBoundLaguerre} by \eqref{eq:PhaseSpace2Denom} this shows 
\begin{align}
\label{eq:PhaseSpaceBound2}
     \nn&\sup_{|\alpha|^2\in[\tilde n^{-1},1]}\frac{\left(W_{\kb{n+2}}(\alpha/2)-W_{\kb{n}}(\alpha/2)\right)^2}{r^2 - W_{\kb{n}}(\alpha/2)} \\&\le \sup_{|\alpha|^2\in[\tilde n^{-1},1]} \frac{\left(e^{-|\alpha|^2/2}\left(L_{n+2}(|\alpha|^2)-L_n(|\alpha|^2)\right)\right)^2}{c} \lesssim \frac{1}{n}.
\end{align}
For the last phase space region, we use again \cite{Askey_LaguerreExpansion_1965} which gives for $|\alpha|>1,$ 
\begin{align*}
e^{-|\alpha|^2/2}L_n(|\alpha|^2) = \mathcal{O}\left(n^{-1/4}\right)
\end{align*}
and hence using again \eqref{eq:WignerFock}
\begin{align*}
    1 - r^{-2}W^2_{\kb{n}}(\alpha/2) \ge c
\end{align*}
for $n$ large enough and some $c>0.$
Hence, combining this again with \eqref{eq:AskeyBoundLaguerre} we see \begin{align}
\label{eq:PhaseSpaceBound3}
     &\sup_{|\alpha|>1}\frac{\left(W_{\kb{n+2}}(\alpha/2)-W_{\kb{n}}(\alpha/2)\right)^2}{r^2 - W^2_{\kb{n}}(\alpha/2)} \lesssim \frac{1}{n}
\end{align}
Therefore, combining \eqref{eq:PhaseSpaceBound1},
\eqref{eq:PhaseSpaceBound2} and \eqref{eq:PhaseSpaceBound3} we see
\begin{align*}
    \left\|\frac{\left(W_{\kb{n+2}}-W_{\kb{n}}\right)^2}{r^2 - W^2_{\kb{n}}}\right\|_\infty\lesssim \frac{1}{n}
\end{align*}
and therefore, from \eqref{eq:FockLowerBoundProof} 
\begin{align*}
    \NW(\kb{n}) \gtrsim \frac{n}{\eps^2}\log\left(\frac{1}{3\delta}\right).
\end{align*}
which together with the established \eqref{eq:FockUpperProof} finishes the proof.
\end{proof}
\subsubsection{Sample complexity for spike states}
\label{sec:SpreadedSpikeStates}
In this section we define the \emph{spike states} and analyse the scaling behaviour of the $L^1$-\, and $L^\infty$-norms of their Wigner functions.
In particular for every number of modes $m\in\N$ the spike states are a sequence of pure states $\left(\rho_n\right)_{n\in\N} =\left(\kb{\psi_n}\right)_{n\in\N}$ on $L^2(\R^m)$ whose Wigner functions satisfy
\begin{align}
\label{eq:VanishingWigner}    \|W_{\rho_n}\|_\infty\xrightarrow[n\to\infty]{}0
\end{align} 
Here, the idea of the construction is to pick wave functions $\psi_n$ with increasing uncertainty in both position and momentum which then leads to a further and further spreaded out Wigner function. We pick such wave functions by super positioning `spikes' in position space, which each have a high momentum uncertainty. In the particular construction of Lemma~\ref{lem:SpreadedWigner} these `spikes' are simply given by Gaussian functions. Super-positioning now sufficiently many at increasing distance also leads to high position uncertainty and to a point wise uniformly vanishing Wigner function. 

More precisely, the spike states are defined as follows: Let $\phi\in L^2(\R^m)$ be the normalised (within $L^2(\R^m)$, i.e. $\|\phi\|_2=1$) Gaussian function centered at the origin, i.e. for $x\in\R^m$
\begin{align}
\label{eq:L2Gaussian}
    \phi(x) = \left(\frac{2}{\pi}\right)^{m/4}\,e^{-x^2}.
\end{align}
Consider for $n\in\N$ the  wave function defined for $x\in\R^m$ by
\begin{align}
\label{eq:DefSpikeStates}
    \psi_n(x) = \frac{1}{\sqrt{c_n}}\sum_{k=1}^n \phi(x-\mu_k),
\end{align}
with \begin{align}
\label{eq:muKDef}
\mu_k= n\,3^k e_1,
\end{align} and $e_1=(1,0,\cdots,0)\in\R^m$ and where we drop the $n$ dependence for brevity. Furthermore, $c_n$ being the normalisation constant such that $\|\psi_n\|_2 =1$ which in particular satisfies
\begin{align}
\label{eq:cnLowerBound}
    c_n = \sum_{k,l=1}^n\int_{\R^m} \phi(x-\mu_k)\phi(x-\mu_l)dx \ge \sum_{k=1}^n\int_{\R^m} |\phi(x-\mu_k)|^2dx = n.
\end{align}
The $n^{th}$ \emph{spike state} is defined to be the corresponding pure state $\rho_n =\kb{\psi_n}.$ 

The particular spreading of the spikes parameterised by the $\mu_k$ in \eqref{eq:muKDef} is chosen in such a way that  for every fixed point in phase space essentially at most only two spikes contribute to the Wigner function in the formula \eqref{eq:WignerPure} (see the proof of Lemma~\ref{lem:SpreadedWigner} for more details).

Notably, we provide control of the speed of convergence for the limit \eqref{eq:VanishingWigner} and also on the blow up of the corresponding $L^1$-norms. In particular, we see that the the sequence of pure states satisfies
\begin{align*}
    \|W_{\rho_n}\|^{-1}_\infty \sim \|W_{\rho_n}\|_1
\end{align*}
(c.f.~Remark~\ref{rem:SpreadedWigner}). This is then used as  key insight for proving the worst case characterisation \eqref{eq:WorstCaseWigner} of the sample complexity in Theorem~\ref{thm:LowerBoundWigner}.

In the following lemma we provide the precise scaling behaviour of the $L^1$- and $L^\infty$-norms of the spike states.
\begin{lemma}[Asymptotic behaviour of spike states]
\label{lem:SpreadedWigner}
The pure states $\left(\rho_n\right)_{n\in\N}=\left(\kb{\psi_n}\right)_{n\in\N}$ on $L^2(\R^m)$ defined by \eqref{eq:DefSpikeStates} have Wigner functions satisfying $W_{\rho_n}\in L^1(\C^m)$ and 
\begin{align}
\label{eq:LinftyL1SpredWigner}
\|W_{\rho_n}\|_\infty\le  4\left(\frac{2}{\pi}\right)^m\frac{1}{n}\quad\text{ and }\quad\|W_{\rho_n}\|_1\le \, n.
\end{align}
Moreover, the corresponding sequence $\left(\psi_n\right)_{n\in\N}$ converges weakly to $0$ in $L^2(\R^m),$ i.e. for all $\varphi \in L^2(\R^m)$ we have
\begin{align}
\label{eq:WeakConvergenceTo0}
    \lim_{n\to\infty}\langle\varphi,\psi_n\rangle  = 0.
\end{align}
\end{lemma}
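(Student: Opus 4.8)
The plan is to exploit the sesquilinearity of the Wigner function in the wave function, together with the fact that cross‑Wigner functions of displaced Gaussians have a closed form. Writing $\phi_k:=\phi(\,\cdot\,-\mu_k)$ for the $k$‑th spike, \eqref{eq:DefSpikeStates} and \eqref{eq:WignerPure} give
\[
W_{\rho_n}(x,p)=\frac{1}{c_n}\sum_{k,l=1}^{n}W_{k,l}(x,p),\qquad
W_{k,l}(x,p):=\Big(\tfrac{2}{\pi}\Big)^m\!\int_{\R^m}\overline{\phi_k(x+y)}\,\phi_l(x-y)\,e^{2ip\cdot y}\,dy .
\]
For the Gaussian $\phi$ of \eqref{eq:L2Gaussian}, completing the square via the substitution $u=x-\tfrac{\mu_k+\mu_l}{2}$, $v=y-\tfrac{\mu_k-\mu_l}{2}$ yields
\[
W_{k,l}(x,p)=\Big(\tfrac{2}{\pi}\Big)^m e^{-2\left(x-\frac{\mu_k+\mu_l}{2}\right)^{2}}\,e^{-p^2/2}\,e^{\,ip\cdot(\mu_k-\mu_l)} .
\]
I would then read off two properties: (i) $\|W_{k,l}\|_1=1$ for every pair $(k,l)$, since $\lvert W_{k,l}\rvert$ is a Gaussian that, w.r.t. $d\alpha=2^{-m}dx\,dp$, integrates to $1$ independently of $\mu_k-\mu_l$; and (ii) $\lvert W_{k,l}(x,p)\rvert\le(2/\pi)^m e^{-2(x-\bar\mu_{k,l})^{2}}$, which depends on the index pair only through the midpoint $\bar\mu_{k,l}:=\tfrac12(\mu_k+\mu_l)$. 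From (i) and the triangle inequality, $\|W_{\rho_n}\|_1\le c_n^{-1}\sum_{k,l}\|W_{k,l}\|_1=n^2/c_n\le n$ using $c_n\ge n$ from \eqref{eq:cnLowerBound}; being a finite sum of $L^1$‑functions, $W_{\rho_n}\in L^1(\C^m)$, which gives the $L^1$‑part of \eqref{eq:LinftyL1SpredWigner}.

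For the $L^\infty$‑bound I would use (ii) and the spacing \eqref{eq:muKDef}. Writing $x_1$ for the first coordinate of $x$ and $\bar\mu_{k,l}=\tfrac{n}{2}(3^k+3^l)e_1$,
\[
\lvert W_{\rho_n}(x,p)\rvert\le\frac{(2/\pi)^m}{c_n}\sum_{k,l=1}^{n}e^{-2\left(x_1-\frac{n}{2}(3^k+3^l)\right)^{2}} .
\]
The key fact is that the numbers $\tfrac12(3^k+3^l)$, $1\le k\le l\le n$, are positive integers, each arising from at most two ordered pairs, and any two distinct such numbers differ by a nonzero multiple of $3$ (since $3\mid 3^k,3^l$), hence by at least $3$; so the distinct values of $\bar\mu_{k,l}$ are at mutual distance $\ge 3n$ along the $e_1$‑axis. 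Consequently the sum is dominated, uniformly in $x$ and $n$, by twice a spread‑out Gaussian series $\sum_{j\in\Z}e^{-2(t-3j)^2}\le 1+2\sum_{j\ge1}e^{-18j^2}<2$, hence it is at most $4$; dividing by $c_n\ge n$ yields $\|W_{\rho_n}\|_\infty\le 4(2/\pi)^m/n$, completing \eqref{eq:LinftyL1SpredWigner}.

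For the weak convergence \eqref{eq:WeakConvergenceTo0}, fix $\varphi\in L^2(\R^m)$. Cauchy–Schwarz and $c_n\ge n$ give
\[
\lvert\langle\varphi,\psi_n\rangle\rvert=\frac{1}{\sqrt{c_n}}\left\lvert\sum_{k=1}^{n}\langle\varphi,\phi_k\rangle\right\rvert\le\sqrt{\sum_{k=1}^{n}\lvert\langle\varphi,\phi_k\rangle\rvert^{2}} .
\]
Since $\lvert\mu_k\rvert=n3^k\ge 3n$ and $\lvert\mu_k-\mu_l\rvert\ge 6n$ for $k\ne l$, the balls $B(\mu_k,n)$ are pairwise disjoint and contained in $\{\lvert x\rvert\ge 2n\}$. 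Splitting each inner product at $B(\mu_k,n)$ and using Cauchy–Schwarz together with the Gaussian tail estimate for $\|\phi\|_{L^2(\{\lvert y\rvert>n\})}$, I would obtain
\[
\sum_{k=1}^{n}\lvert\langle\varphi,\phi_k\rangle\rvert^{2}\le 2\,\|\varphi\|_{L^2(\{\lvert x\rvert\ge 2n\})}^{2}+2n\,\|\varphi\|_2^{2}\,\|\phi\|_{L^2(\{\lvert y\rvert>n\})}^{2}\xrightarrow[n\to\infty]{}0 ,
\]
by dominated convergence for the first term and by the super‑exponential decay of the Gaussian tail for the second; hence $\langle\varphi,\psi_n\rangle\to 0$.

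I expect the $L^\infty$‑bound to be the main obstacle: one must show that although $W_{\rho_n}$ is a superposition of $n^2$ Gaussian cross‑terms, at every phase‑space point only $O(1)$ of them are appreciable, and this is exactly what the geometric growth of the displacements $\mu_k$ buys us, through the number‑theoretic separation of the midpoints $\tfrac12(3^k+3^l)$. The $L^1$‑bound and the weak convergence should be routine once the closed form for $W_{k,l}$ is in hand.
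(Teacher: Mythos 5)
Your proposal is correct and follows essentially the same route as the paper: the same closed-form Gaussian cross-Wigner decomposition, the $L^1$ bound via the triangle inequality and $c_n\ge n$, the $L^\infty$ bound by exploiting the geometric spacing of the $\mu_k$ so that only $O(1)$ midpoints $\tfrac12(\mu_k+\mu_l)$ are appreciable near any given $x$, and weak convergence by splitting the inner products and using Gaussian tails. The only slightly informal step is dominating the sum over the $3n$-separated distinct midpoints by the periodized lattice sum $\sum_{j\in\Z}e^{-2(t-3j)^2}$ (the paper instead shows at most one unordered pair has midpoint within $n/2$ of $x$ and bounds the remaining $n^2$ terms by $n^2e^{-n^2/2}\le 1$), but your variant is easily made rigorous and lands comfortably below the constant $4$.
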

\begin{remark}
\label{rem:SpreadedWigner}
Note that the $n$-dependence in of the bounds \eqref{eq:LinftyL1SpredWigner} is tight. This can be seen by using that since $\rho_n$ is pure we have $\|W_{\rho_n}\|^2_2=\pi^{-m}$ by \eqref{eq:L2Wigner} and hence
\begin{align*}
    \|W_{\rho_n}\|^{-1}_\infty \le  \pi^{m} \|W_{\rho_n}\|_1. 
\end{align*}
Using this and the established bounds in \eqref{eq:LinftyL1SpredWigner} shows
\begin{align}
     \|W_{\rho_n}\|_\infty\sim \frac{1}{n}\quad\text{ and }\quad\|W_{\rho_n}\|_1\sim  n
\end{align}
\end{remark}
Before giving the proof of Lemma~\ref{lem:SpreadedWigner}, we show how we can use it to obtain a characterisation of the sample complexity $\NW(\rho_n)$ from the upper bound and lower bounds in Theorems~\ref{thm:WignerUpperBound} and~\ref{thm:LowerBoundWigner}:
\begin{proposition}[Sample complexity for spike states]
\label{prop:SpikeStates} Let $0<\eps<1/2,$ $0<\delta<1/4$ and $\left(\rho_n\right)_{n\in\N}=\left(\kb{\psi_n}\right)_{n\in\N}$ be the spike states defined in \eqref{eq:DefSpikeStates}. Then we have the following characterisation of the corresponding sample complexity of fidelity estimation in the Wigner model:
\begin{align*}
%\label{eq:FockTheta}
   \NW\left(\rho_n\right) = \Theta\left(\Bigg(\frac{\left\|W_{\rho_n}\right\|_1}{\eps}\Bigg)^2\log\left(\frac{1}{\delta}\right)\right),
\end{align*}
where the constants in the $\Theta(\placeholder)$ can be $m$-dependent.  
\end{proposition}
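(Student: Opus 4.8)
The plan is to specialise the general bounds on $\NW(\rho)$ from Theorems~\ref{thm:WignerUpperBound} and~\ref{thm:LowerBoundWigner} to the spike states $\rho_n$, and then to close the gap between them using the reciprocal scaling of $\|W_{\rho_n}\|_1$ and $\|W_{\rho_n}\|_\infty$ recorded in Lemma~\ref{lem:SpreadedWigner}. For the upper bound, Lemma~\ref{lem:SpreadedWigner} gives $W_{\rho_n}\in L^1(\C^m)$, hence $\|W_{\rho_n}\|_1^{(0)}=\|W_{\rho_n}\|_1$, so taking $\eps'=0$ in Theorem~\ref{thm:WignerUpperBound} immediately yields
\begin{align*}
\NW(\rho_n)\le 2^{2m+1}\left(\frac{\|W_{\rho_n}\|_1}{\eps}\right)^2\log\left(\frac{1}{\delta}\right).
\end{align*}

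For the lower bound I would use the second inequality of Theorem~\ref{thm:LowerBoundWigner}, which asks for states $\sigma$ with $F(\rho_n,\sigma)\ge 2\eps$ whose Wigner function has as small an $L^\infty$-norm as possible. The step I expect to be the main obstacle is precisely the choice of this $\sigma$: taking $\sigma=\rho_n$ is feasible (since $\eps<1/2$) but only gives $\NW(\rho_n)\gtrsim (2/\pi)^{2m}\|W_{\rho_n}\|_\infty^{-2}\log(1/\delta)$, which lacks the factor $1/\eps^2$ and so cannot match the upper bound. The fix is to use the convex combination $\sigma_j=(1-2\eps)\rho_j+2\eps\,\rho_n$ with $\rho_j$ another spike state; this is a valid density operator, it satisfies $F(\rho_n,\sigma_j)=(1-2\eps)|\langle\psi_n,\psi_j\rangle|^2+2\eps\ge 2\eps$, and by linearity of the Wigner transform $W_{\sigma_j}=(1-2\eps)W_{\rho_j}+2\eps W_{\rho_n}$, so that $\|W_{\sigma_j}\|_\infty\le (1-2\eps)\|W_{\rho_j}\|_\infty+2\eps\|W_{\rho_n}\|_\infty$. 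Since $\|W_{\rho_j}\|_\infty\to 0$ as $j\to\infty$ by Lemma~\ref{lem:SpreadedWigner}, the supremum in Theorem~\ref{thm:LowerBoundWigner} dominates $\limsup_{j\to\infty}\|W_{\sigma_j}\|_\infty^{-2}\ge (2\eps\|W_{\rho_n}\|_\infty)^{-2}$, giving
\begin{align*}
\NW(\rho_n)\ge \left(\frac{2}{\pi}\right)^{2m}\frac{1}{4\eps^2\,\|W_{\rho_n}\|_\infty^2}\log\left(\frac{1}{3\delta}\right).
\end{align*}

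To conclude, I would substitute the estimates of Lemma~\ref{lem:SpreadedWigner}: from $\|W_{\rho_n}\|_\infty\le 4(2/\pi)^m/n$ and $\|W_{\rho_n}\|_1\le n$ in \eqref{eq:LinftyL1SpredWigner} we get $\|W_{\rho_n}\|_\infty^{-1}\ge \tfrac14(\pi/2)^m\,\|W_{\rho_n}\|_1$, hence $\NW(\rho_n)\ge \tfrac{1}{64}(\|W_{\rho_n}\|_1/\eps)^2\log(1/3\delta)$; since $0<\delta<1/4$ one has $\log(1/3\delta)=\Theta(\log(1/\delta))$, and together with the upper bound this gives the claimed $\Theta$-characterisation (with the $m$-dependence sitting only in the upper constant). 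Beyond the remark above about the choice of $\sigma_j$, the only minor points to check are that $\sigma_j$ is indeed a state (a convex mixture of pure states) so its Wigner function is legitimate, and that the $j\to\infty$ limit is taken correctly inside the supremum (using that a supremum exceeds the $\limsup$ of feasible values).
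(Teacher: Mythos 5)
Your proposal is correct and follows essentially the same route as the paper: the upper bound is the direct specialisation of Theorem~\ref{thm:WignerUpperBound} with $\eps'=0$, and the lower bound applies the second inequality of Theorem~\ref{thm:LowerBoundWigner} to a mixture putting weight $2\eps$ on $\rho_n$ and weight $1-2\eps$ on another spike state, exactly as in the paper. The only (immaterial) difference is that you send the index of the second spike state to infinity and pass to a $\limsup$ inside the supremum, whereas the paper fixes it once and for all as $n'=\ceil{n/\eps}$; both choices give $\|W_{\sigma}\|_\infty\lesssim (2/\pi)^m\eps/n$ and hence the matching $\left(\|W_{\rho_n}\|_1/\eps\right)^2\log(1/\delta)$ lower bound.
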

\begin{proof}
By the upper bound provided in Theorem~\ref{thm:WignerUpperBound} we immediately see
\begin{align*}
    \NW(\rho_n) \le 2^{2m+1}\left(\frac{\|W_{\rho_n}\|_1}{\eps}\right)^{2}\log\left(\frac{1}{\delta}\right)  \le 2^{2m+1}\left(\frac{n}{\eps}\right)^{2}\log\left(\frac{1}{\delta}\right),
\end{align*}
where for the second inequality we have used Lemma~\ref{lem:SpreadedWigner}.

To prove the corresponding lower bound we take $n' = \ceil{n/\eps}$ and state $\sigma_n = 2\eps \rho_n + (1-2\eps)\rho_{n'},$ which satisfies $F(\rho_n,\sigma_n) \ge 2\eps.$
Furthermore, we have by Lemma~\ref{lem:SpreadedWigner} that
\begin{align*}
    \|W_{\sigma_n}\|_\infty \le 2\eps\|W_{\rho_n}\|_\infty + (1-2\eps) \|W_{\rho_{n'}}\|_{\infty} \le 16\left(\frac{2}{\pi}\right)^m\frac{\eps}{n}.     \end{align*}
    Therefore, using the second lower bound in Theorem~\ref{thm:LowerBoundWigner} we see that
    \begin{align*}
    \NW(\rho_n)\ge \left(\frac{2}{\pi}\right)^{2m}\,\frac{1}{\|W_{\sigma_n}\|^2_\infty}\log\left(\frac{1}{3\delta}\right) \ge \frac{1}{16}\left(\frac{n}{\eps}\right)^2\log\left(\frac{1}{3\delta}\right)
    \end{align*}
    which finishes the proof.
\end{proof}

\begin{proof}[Proof of Lemma~\ref{lem:SpreadedWigner}]
We proof the result with phase space being parametrised with position and momentum variables $(x,p)\in\R^{2m}$ which corresponds to the change of coordinates given by $\alpha = \frac{1}{\sqrt{2}}\left(x+ip\right).$

Using \eqref{eq:WignerPure} and the explicit form of the wave functions in \eqref{eq:DefSpikeStates}, we can calculate the Wigner function of $\rho_n$ at phase space point $(x,p)\in\R^{2m}$ by
\begin{align}
\label{eq:WignerGaussSum}
    \nn&W_{\rho_n}(x,p) = \frac{1}{c_n}\left(\frac{2}{\pi}\right)^{3m/2}  \sum_{k,l=1}^n\int_{\R^m} e^{-(x-\mu_k+y)^2-(x-\mu_l-y)^2} \,e^{i2p\cdot y}\,dy\\&\nn=  \frac{1}{c_n}\left(\frac{2}{\pi}\right)^{3m/2}\sum_{k,l=1}^n e^{-(x-\mu_k)^2-(x-\mu_l)^2+\frac{(\mu_k-\mu_l)^2}{2}}\int_{\R^m} e^{-2(y-(\mu_k-\mu_l)/2)^2}e^{i2p\cdot y}dy \\&
     =\frac{1}{c_n}\left(\frac{2}{\pi}\right)^m\sum_{k,l=1}^ne^{-2\big(x-(\mu_k+\mu_l)/2\big)^2} \ e^{ip\cdot (\mu_k-\mu_l)-p^2/2},
     \end{align} 
where for the last line the standard relation for Fourier transforming a Gaussian function (see e.g. \cite[Lemma 7.3]{Teschl_mathematical_2014}).

Note that because of the specific choice of $\mu_k$ in \eqref{eq:muKDef}, we have for $x\in\R^m$ fixed that there exists at most one $(k_x,l_x)\in \N^2$ with $k_x\ge l_x$ such that for the euclidean norm we have \begin{align}
\label{eq:kxlx}
    \left\|x-\frac{\mu_{k_x}+\mu_{l_x}}{2}\right\| \le \frac{n}{2},
\end{align} 
which can be seen by the following argument: Assume for contradiction that there exists $(k,l)\neq(i,j)$ with $k\ge l$ and $i\ge j$ and such that $\|x-(\mu_k+\mu_l)/2\| \le n/2$ and $\|x-(\mu_i+\mu_j)/2\| \le n/2.$ This hence gives 
\begin{align}
\label{eq:MuCloseness2}
    \big\|\mu_k+\mu_l - \mu_i-\mu_j\big\| = n\left|3^k+3^l - 3^i- 3^j\right|\le 2n
\end{align}
Without loss of generality we can assume that $i>k$ as for $k=i$ we immediately also have $l=j$ from \eqref{eq:MuCloseness2}.
Using $k\ge l$ this gives 
\begin{align*}
    3^i +3^j \ge 3^i \ge 3^{k+1} \ge 3^k +3^l +3,
\end{align*}
which contradicts \eqref{eq:MuCloseness2} and hence proves the desired claim.

Using hence for $(x,p)\in\R^{2m}$ the uniqueness of $(k_x,l_x)\in \N^2$ with $k_x\ge l_x$ and satisfying \eqref{eq:kxlx} and plugging this into \eqref{eq:WignerGaussSum} gives together with \eqref{eq:cnLowerBound}
\begin{align*}
    |W_{\rho_n}(x,p)| &\le \frac{1}{n}\left(\frac{2}{ \pi }\right)^m\sum_{k,l=1}^n e^{-2\big(x-(\mu_k+\mu_l)/2\big)^2} \\&\le \frac{2}{n}\left(\frac{2}{\pi }\right)^m \left(e^{-2\big(x-(\mu_{k_x}+\mu_{l_x})/2\big)^2}+ \sum_{\substack{k,l=1\\k\ge l, (k,l)\neq(k_x,l_x)} }^n e^{-2\big(x-(\mu_k+\mu_l)/2\big)^2}\right) \\& \le \frac{2}{n}\left(\frac{2}{\pi}\right)^m\left(1+n^2 e^{- n^2/2}\right) \le \frac{4}{n}\left(\frac{2}{\pi}\right)^m,
     \end{align*} 
   where we used $t^2e^{-t^2/2}\le1$ for all $t\in\R$ and which hence shows the $L^\infty$-bound in \eqref{eq:LinftyL1SpredWigner}.

 On the other hand, using again the explicit form of the Wigner function of $\rho_n$ in \eqref{eq:WignerGaussSum}, we see
\begin{align*}
    \|W_{\rho_n}\|_1 & = \int_{\C^m}|W_\rho(\alpha)|d\alpha = 2^{-m}\int_{\R^{2m}}|W_\rho(x,p)|dxdp
    \\&\le \frac{1}{n\pi^m  }
    \sum_{k,l=1}^n\int_{\R^{2m}}e^{-2\big(x-(\mu_k+\mu_l)/2\big)^2} \ e^{-p^2/2}dxdp \\&= \frac{n}{\pi^m}
    \int_{\R^{2m}}e^{-x^2} e^{-p^2} dxdp =n.
\end{align*}

Lastly we want to show the weak convergence of the sequence $\left(\psi_n\right)_{n\in\N}$ to $0,$ i.e. \eqref{eq:WeakConvergenceTo0}. For that let $\varphi\in L^2(\R)$ and $\eps>0$ and pick $R\ge 0$ such that
\begin{align*}
    \int_{\|x\|> R}|\varphi(x)|^2 dx < \eps,
\end{align*}
which is possible due to the dominated convergence theorem. Furthermore, note that we have for the Gaussian function $\phi$ defined in \eqref{eq:L2Gaussian},$k\in[n]$ and $n3^k\ge R$ 
\begin{align*}
\int_{\|x\|\le R} |\phi(x-\mu_k)|^2dx \le \sqrt{\frac{2}{\pi}}\int_{-R}^{R} e^{-2(x_1 -n3^k)^2} dx_1 \le \sqrt{\frac{2}{\pi}}\int_{-R+n3^k}^{\infty}e^{-2x_1^2}dx_1 \le e^{-2(n3^k-R)^2} 
\end{align*}
where the first inequality follows by definition of $\mu_k$ in \eqref{eq:muKDef} and second inequality follows by standard Gaussian tail bounds.
Hence, using the Cauchy Schwarz inequality and the fact that $\psi_n$ is normalised in $L^2$, we see for $n$ large enough
\begin{align*}
    \left|\langle \varphi,\psi_n\rangle\right| &\le \sqrt{\int_{\|x\|> R}|\varphi(x)|^2dx} \ + \ \frac{\|\varphi\|_2}{\sqrt{c_n}}\sum_{k=1}^n\sqrt{\int_{\|x\|\le R}|\phi(x-\mu_k)|^2dx } \\&\le \sqrt{\eps} \ +\  \frac{\|\varphi\|_2}{\sqrt{c_n}}\sum_{k=1}^ne^{-(n3^k-R)^2} \xrightarrow[n\to\infty]{} \sqrt{\eps}.
\end{align*}
Since, $\eps>0$ was arbitrary, this shows \eqref{eq:WeakConvergenceTo0}.
\end{proof}
\subsubsection{Sample complexity for Gaussian states}
\label{sec:Gaussian}
\begin{proposition}
\label{prop:Gaussian}
Let $\rho$ be a pure Gaussian state on $L^2(\R^m)$ and $0<\eps,\delta<1/4.$ Then
\begin{align}
\NW(\rho) = \Theta\left(\frac{1}{\eps^2}\log\left(\frac{1}{\delta}\right)\right),
\end{align}
where the constants in the $\Theta(\placeholder)$ can be $m$-dependent.
\end{proposition}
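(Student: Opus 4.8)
The plan is to read both bounds directly off the general machinery already in place: the upper bound from Theorem~\ref{thm:WignerUpperBound} and the lower bound from Theorem~\ref{thm:LowerBoundWigner}. The only genuine step is picking a suitable comparison state in the lower bound; everything else is specialising formulas.

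For the upper bound I would first record that a pure Gaussian state has a non-negative Wigner function which is an honest Gaussian, hence lies in $L^1(\C^m)\cap L^2(\C^m)$ and, by the normalisation \eqref{eq:WignerNormalisation}, satisfies $\|W_\rho\|_1 = 1$. Then $\|W_\rho\|_1^{(0)}=\|W_\rho\|_1=1$ (the only admissible smoothing at $\eps'=0$ is $\widetilde f=W_\rho$), so taking $\eps'=0$ in Theorem~\ref{thm:WignerUpperBound} gives $\NW(\rho)\le 2^{2m+1}\eps^{-2}\log(1/\delta)$, i.e.\ $\mathcal{O}(\eps^{-2}\log(1/\delta))$ with an $m$-dependent constant.

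For the lower bound I would invoke the second inequality of Theorem~\ref{thm:LowerBoundWigner} (applicable since $0<\eps<1/2$ and $0<\delta\le 1/3$), which reduces the problem to exhibiting a single state $\sigma$ with $F(\rho,\sigma)\ge 2\eps$ and $\|W_\sigma\|_\infty=\mathcal{O}(\eps)$. The natural candidate is a convex combination $\sigma = 2\eps\,\rho + (1-2\eps)\,\tau$, where $\tau$ is a state whose Wigner function is uniformly tiny; for $\tau$ I would take a spike state $\rho_n$ from Lemma~\ref{lem:SpreadedWigner} with $n=\ceil{4/\eps}$ (a high-temperature thermal state would serve equally well). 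Writing $\rho=\kb{\psi}$, one gets $F(\rho,\sigma)=\Tr(\rho\sigma)=2\eps+(1-2\eps)\Tr(\rho\rho_n)\ge 2\eps$ since $\Tr(\rho\rho_n)\ge 0$, while \eqref{eq:LinftyL1SpredWigner} and $4/n\le\eps$ give $\|W_\sigma\|_\infty\le 2\eps(2/\pi)^m+4(2/\pi)^m/n\le 3\eps(2/\pi)^m$. Substituting into Theorem~\ref{thm:LowerBoundWigner}, the factor $(2/\pi)^{2m}$ cancels and one obtains $\NW(\rho)\ge \tfrac{1}{9}\eps^{-2}\log(1/(3\delta))$; since $\log(1/(3\delta))=\Theta(\log(1/\delta))$ on $(0,1/4)$ (both are bounded away from $0$ with bounded ratio), this is $\Omega(\eps^{-2}\log(1/\delta))$, and combining the two bounds finishes the proof.

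I do not expect a real obstacle here: the proposition is essentially the observation that when $f=W_\rho$ is an honest probability density, the smoothed $L^1$-norm degenerates to $1$, so the black-box bounds collapse to the $\eps$-only scaling. The single point requiring a moment's thought is that the lower bound of Theorem~\ref{thm:LowerBoundWigner} is stated through $\|W_\sigma\|_\infty$, so one must remember that a state with vanishingly small Wigner-$L^\infty$-norm yet constant overlap with $\rho$ exists — exactly what the spike states (or thermal states) provide.
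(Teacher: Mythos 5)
Your proposal is correct and matches the paper's own proof: the upper bound via $\|W_\rho\|_1=1$ and Theorem~\ref{thm:WignerUpperBound}, and the lower bound via the second inequality of Theorem~\ref{thm:LowerBoundWigner} applied to $\sigma=2\eps\,\rho+(1-2\eps)\,\rho_n$ with spike states $\rho_n$. The only cosmetic difference is that you fix $n=\ceil{4/\eps}$ with explicit constants, whereas the paper simply takes the limit $n\to\infty$ along the spike-state sequence; both are fine.
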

\begin{proof}
For $\rho$ being a Gaussian state, we have that $W_\rho\ge0$ and hence by normalisation \eqref{eq:WignerNormalisation} 
\begin{align}
    \|W_\rho\|_1 = 1.
\end{align}
Therefore, we see from Theorem~\ref{thm:WignerUpperBound}
the upper bound
\begin{align*}
\NW(\rho) = \mathcal{O}\left(\frac{1}{\eps^2}\log\left(\frac{1}{\delta}\right)\right).
\end{align*}
For the corresponding lower bound we use the second inequality in \eqref{eq:InstanceBasedWigner} of Theorem~\ref{thm:LowerBoundWigner} for the sequence of feasible states given by $\sigma_n=2\eps \rho+(1-2\eps)\rho_n$ where $\left(\rho_n\right)_{n\in\N}$ is the sequence of spike states defined in Section~\ref{sec:SpreadedSpikeStates} which satisfies $\lim_{n\to\infty}\|W_{\rho_n}\|_\infty=0$ and hence  $\lim_{n\to\infty}\|W_{\sigma_n}\|_\infty=2\eps\|W_\rho\|_\infty \le 2\eps.$ 
\end{proof}
\subsubsection{Proof of Theorem~\ref{thm:WorstCaseWigener}}
In this section we give the proof of Theorem~\ref{thm:WorstCaseWigener}, which essentially follows the same argument as the proof of Proposition~\ref{prop:SpikeStates}.

\begin{proof}[Proof of Theorem~\ref{thm:WorstCaseWigener}]
Let $t\ge 1$ and use  Theorem~\ref{thm:WignerUpperBound} for the corresponding upper bound as
\begin{align*}
     \sup_{\substack{\rho \text{ pure state,}\\\|W_\rho\|_1\le t}} N_{\eps,\delta}(W_{\rho},\cS_W) &\le 2^{2m+1}\sup_{\substack{\rho \text{ pure state,}\\\|W_\rho\|_1\le t}}\left(\frac{\|W_\rho\|_1}{\eps}\right)^{2}\log\left(\frac{1}{\delta}\right)\\& \le 2^{2m+1}\left(\frac{t}{\eps}\right)^{2}\log\left(\frac{1}{\delta}\right) .
\end{align*}
For the corresponding lower bound in \eqref{eq:WorstCaseWigner} we take the second lower bound in Theorem~\ref{thm:LowerBoundWigner} for the specific choice $\rho \equiv \rho_n$ where $n = \floor{t}$ and $\left(\rho_{n}\right)_{n\in \N}$ being the spike states defined and discusses in Section~\ref{sec:SpreadedSpikeStates}. Note that $\rho_{n}$ is feasible in \eqref{eq:WorstCaseWigner} as by \eqref{eq:LinftyL1SpredWigner} we have $\|W_{\rho_{n}}\|_1\le  n\le t.$ Furthermore, let $\sigma_n = 2\eps \rho_n +(1-2\eps) \rho_{n'}$ where $n' = \ceil{\frac{t}{\eps}}.$ Note that for that choice we have $F(\rho_n,\sigma_n)\ge 2\eps$ and furthermore by \eqref{eq:LinftyL1SpredWigner} and the fact that $1/n\le 2/t$ since $t\ge1$ that
\begin{align*}
    \|W_{\sigma_n}\|_\infty \le 2\eps\|W_{\rho_n}\|_\infty + (1-2\eps) \|W_{\rho_{n'}}\|_{\infty} \le C\left(\frac{2}{\pi}\right)^m\frac{ \eps}{t} 
    \end{align*}
for some $C\ge 0$ independent of $t,\eps$ and $m.$ Hence, $\sigma_n$ is a feasible state in \eqref{eq:InstanceBasedWigner} and we obtain 
\begin{align*}
  \sup_{\substack{\rho \text{ pure state,}\\\|W_\rho\|_1\le t}} N_{\eps,\delta}(W_{\rho},\cS_W)  &\ge N_{\eps,\delta}(W_{\rho_n},\cS_W) \ge\left(\frac{2}{\pi}\right)^{2m} \frac{1}{\|W_{\sigma_n}\|^2_\infty}\log\left(\frac{1}{3\delta}\right) \\ &\ge C'\left(\frac{t}{\eps}\right)^2\log\left(\frac{1}{3\delta}\right)
\end{align*}
for some $C'\ge 0.$

\end{proof}

\section{Fidelity estimation for discrete variable systems}
\label{sec:PauliFidel}
We consider for $n\in\N$ an $n$-qubit quantum system on the Hilbert space $\cH= \left(\C^{2}\right)^{\otimes n}$ with dimension $d = 2^n.$ The corresponding set of Pauli strings is denoted by $\cP_n = \{\1,X,Y,Z\}^{\otimes n}.$
Note that $\cP_n$ forms an orthogonal basis on the space of operators on $ \left(\C^{2}\right)^{\otimes n}$ and we have for all $P,Q\in\cP_n$
\begin{align}
\label{eq:PauliONB}
    \frac{1}{d}\Tr(PQ) = \delta_{PQ}.
\end{align}
For $\rho$ a state on $\left(\C^{2}\right)^{\otimes n}$ we can define its \emph{characteristic function} as  
\begin{align*}
    \chi_\rho : \cP_n&\to [-1,1]\\
    P&\mapsto \Tr(P\rho).
\end{align*}

In the following we consider the task of estimating the fidelity between a desired $n$-qubit pure state $\rho,$  of which we have a theoretical description, and a general state $\sigma,$ which we prepared experimentally, by performing measurements on $\sigma.$ Since $\rho$ is a pure state, we can write the fidelity by using \eqref{eq:PauliONB} as
\begin{align}
\label{eq:FidelPauli}
    F(\rho,\sigma) = \frac{1}{d}\sum_{P\in \cP_n}\chi_\rho(P)\chi_\sigma(P) = \frac{1}{d} + \sum_{P\in \cP_n\setminus\{\1\}}\chi_\rho(P)\chi_\sigma(P)
    \end{align}

Assume that we are able to measure any observable $P\in\cP_n$ with outcome in $\{1,-1\}$ on the state $\sigma$. Here, the observables $P$ can be chosen adaptively and hence depend on the previous ones and their corresponding measurement outcomes. 
Since $\chi_\sigma(\1)=1$ we can restrict to Pauli measurements with $P\in\cP_n\setminus\{\1\}$ as for $P=\1$ a tester does not gain any information. 
In the following we refer to this model as the \emph{Pauli-parity model of fidelity estimation.}

For this model, $\eps>0$ and $\delta>0$ being the allowed additive error and probability of failure respectively and pure state $\rho$ fixed, which we call the \emph{instanced based setup,} we denote the \emph{sample complexity of fidelity estimation in the Pauli-parity model} by $N^{\scalebox{0.65}{$P$}}_{\eps,\delta}(\rho)$: That is $N^{\scalebox{0.65}{$P$}}_{\eps,\delta}(\rho)$ is the optimal number of measurements on different copies of the unknown state $\sigma$ needed to estimate $F(\rho,\sigma)$ in the worst case over all states $\sigma.$ 

 Due to the definition of the characteristic function $\chi_\sigma$, the assumption that we are able to perform measurements of the observables $P\in\cP_n$ on $\sigma$ exactly means that we have black box access to the $\chi_\sigma.$ Hence, estimating the fidelity \eqref{eq:FidelPauli} in this setup  becomes special instance of black box overlap estimation as discussed in Section~\ref{sec:BlackBox}: In particular we can make the identifications
\begin{align}
\label{eq:IdentificationsPauli}
    \nn\Omega &\mapsto \cP_n\setminus\{\1\} &&\text{and } \quad d\mu(\lambda) \mapsto \frac{1}{d}d\mu_{\text{count}}(P) \text{ with $\mu_{\text{count}}$ being the counting measure on $\cP_n\setminus\{\1\}$}  \\\nn
    r &\mapsto 1 \\\nn
    f(\lambda) &\mapsto \chi_\rho(P) &&\text{to which we have full access} \\ \nn
    g(\lambda) &\mapsto \chi_\sigma(P)&& \text{to which we have black box access} \\ 
      Y_\lambda &\mapsto Y_P &&\text{measurement outcome of observable $P\in\cP_n\setminus\{\1\}$ on state $\sigma.$}
\end{align}
In particular this leads to the following convention of $L^q$-norms for $q\in[1,\infty)$
\begin{align}
\label{eq:PauliLqNorm}
    \|\chi_\sigma\|_q := \left(\frac{1}{d}\sum_{P\in\cP_n\setminus\{\1\}}|\chi_\rho(P)|^q\right)^{1/q}.
\end{align}
and 
\begin{align}
    \|\chi_\sigma\|_\infty := \max_{P\in\cP_n\setminus\{\1\}}|\chi_\rho(P)|.
\end{align}
With these definition we have $\|\chi_\sigma\|_2 = \sqrt{\Tr(\sigma^2)-1/d}\le 1$ and hence by the Cauchy-Schwarz inequality
\begin{align}
\label{eq:L1PauliBound}
\|\chi_\rho\|_1 = \frac{1}{d}\sum_{P\in\cP_n\setminus\{\1\}}|\chi_\rho(P)| \le \sqrt{d}\|\chi_\rho\|_2 \le \sqrt{d}.
\end{align}

For the subset $\cS$ employed in Section~\ref{sec:BlackBox}, we are interested here in the particular choice $\cS_P$ consisting of all characteristic functions $\chi_\sigma$ for valid states $\sigma.$ We see that $\cS_P\subseteq B^2_1(0)\cap B^\infty_{1}(0).$

With the notation above, the sample complexity of fidelity estimation can be written as
\begin{align}
\label{eq:SampleComplexityFidelityPauli}
    N^{\scalebox{0.65}{$P$}}_{\eps,\delta}(\rho) \equiv  N_{\eps,\delta}(\chi_\rho,\cS_P)
\end{align}
with $N_{\eps,\delta}(\chi_\rho,\cS_P)$ being defined in Section~\ref{sec:BlackBox}.

 Formulating the achievability result Theorem~\ref{thm:blackboxOverEstimation} for this special case leads to the following theorem:

\begin{theorem}[Upper bound on fidelity estimation with Pauli measurements] \label{thm:PauliUpperBound}
Let $\rho$ be a pure state on $\left(\C^2\right)^{\otimes n}$ to which we have a theoretical description. Then, for any state $\sigma,$ we can estimate the fidelity $F(\rho,\sigma)$ with precision $\eps>0$ and failure probability at most $\delta>0$ by performing Pauli measurements on a number of
\begin{equation}
\label{eq:PauliFidelyyUpperBound}
    \NP(\rho) \le 2\inf_{\eps'\in [0,\eps)}\left(\frac{\|\chi_\rho\|^{(\eps')}_1}{\eps-\eps'}\right)^{2}\log\left(\frac{1}{\delta}\right)  
\end{equation}
copies of the state $\sigma.$  Furthermore, the upper bound holds true even when restricting to non-adaptive algorithms.
\end{theorem}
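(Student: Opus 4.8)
The plan is to deduce this as a direct specialisation of the black box overlap estimation upper bound, Theorem~\ref{thm:blackboxOverEstimation}, via the dictionary recorded in \eqref{eq:IdentificationsPauli}. Concretely, I would take $\Omega = \cP_n\setminus\{\1\}$ equipped with the measure $\mu_P = \tfrac1d\mu_{\text{count}}$, set $r = 1$, let $f = \chi_\rho$ and let $\cS = \cS_P$ be the set of characteristic functions of valid $n$-qubit states. All that then needs checking is that the hypotheses of Theorem~\ref{thm:blackboxOverEstimation} are met under this choice.

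First, $\chi_\rho$ is real valued (since each $P\in\cP_n$ is Hermitian and $\rho$ is a state, $\Tr(P\rho)\in\R$) and bounded, hence $\chi_\rho\in L^2(\cP_n\setminus\{\1\},\mu_P)$ because $\mu_P$ is a finite measure. Second, I would invoke the bounds established around \eqref{eq:L1PauliBound}: for every state $\sigma$ one has $\|\chi_\sigma\|_2 = \sqrt{\Tr(\sigma^2)-1/d}\le 1$ and $\|\chi_\sigma\|_\infty = \max_{P\neq\1}|\chi_\sigma(P)|\le 1$, so $\cS_P\subseteq B^2_1(0)\cap B^\infty_1(0)$ with balls taken in $L^q(\cP_n\setminus\{\1\},\mu_P)$. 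This is exactly the inclusion $\cS\subseteq B^2_1(0)\cap B^\infty_r(0)$ required by the theorem with $r=1$. Third, black box access to $\chi_\sigma$ in the present model is literally the ability to measure an observable $P\in\cP_n\setminus\{\1\}$ on $\sigma$ with outcomes in $\{1,-1\}$ and expectation $\chi_\sigma(P)$, so the random variable $Y_P$ satisfies \eqref{eq:YAssumptions} with $r=1$.

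With these identifications in place, the conclusion is immediate: by \eqref{eq:SampleComplexityFidelityPauli} we have $\NP(\rho) = N_{\eps,\delta}(\chi_\rho,\cS_P)$, and the smoothed $L^1$-norm appearing in \eqref{eq:blackboxUpper}, computed with respect to $\mu_P$ using the convention \eqref{eq:PauliLqNorm}, is precisely $\|\chi_\rho\|^{(\eps')}_1$. Substituting $r = 1$ into \eqref{eq:blackboxUpper} yields
\begin{align*}
    \NP(\rho) = N_{\eps,\delta}(\chi_\rho,\cS_P) \le 2\inf_{\eps'\in[0,\eps)}\left(\frac{\|\chi_\rho\|^{(\eps')}_1}{\eps-\eps'}\right)^2\log\left(\frac{1}{\delta}\right),
\end{align*}
which is \eqref{eq:PauliFidelyyUpperBound}. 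Since Theorem~\ref{thm:blackboxOverEstimation} already gives this bound using only non-adaptive algorithms (the sampling points $\lambda_t$ there are drawn i.i.d.\ from the distribution $p(\lambda) = |\widetilde f(\lambda)|/\|\widetilde f\|_1$ depending only on $f$), the non-adaptivity claim transfers verbatim.

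I do not expect any genuine obstacle here: the entire content is the translation through \eqref{eq:IdentificationsPauli}, and the only points requiring a line of justification are (i) that $\chi_\rho$ is real valued and square-integrable on the finite measure space $\cP_n\setminus\{\1\}$, and (ii) the already-proven inclusion $\cS_P\subseteq B^2_1(0)\cap B^\infty_1(0)$. If anything, one might remark that restricting $\Omega$ to $\cP_n\setminus\{\1\}$ rather than all of $\cP_n$ is harmless because $\chi_\sigma(\1)=1$ is the same for every state, so the term $\tfrac1d$ in \eqref{eq:FidelPauli} is a known constant and contributes no samples; this is exactly why the footnoted choice of $\Omega_P$ was made in the introduction.
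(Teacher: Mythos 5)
Your proposal is correct and follows essentially the same route as the paper, which presents Theorem~\ref{thm:PauliUpperBound} as an immediate specialisation of Theorem~\ref{thm:blackboxOverEstimation} through the identifications \eqref{eq:IdentificationsPauli}, with no rescaling needed since $r=1$ and the $L^q$-norm convention \eqref{eq:PauliLqNorm} already carries the factor $\tfrac1d$. Your verification of the hypotheses (real-valuedness of $\chi_\rho$, the inclusion $\cS_P\subseteq B^2_1(0)\cap B^\infty_1(0)$, and the black box access via the $\{1,-1\}$-valued Pauli measurement outcomes) and the non-adaptivity remark match the paper's treatment.
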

\begin{remark}[Worst case bound]
\label{rem:WorstCasePauli}
Using \eqref{eq:L1PauliBound}, we see that the sample complexity in the Pauli model satisfies for all pure states $\rho$
\begin{align}
\label{eq:WorstCasePauli}
\NP(\rho) = \mathcal{O}\left(\frac{d}{\eps^2}\log\left(\frac{1}{\delta}\right)\right) .
\end{align}
As we see below, this worst case bound is essentially tight as Theorems~\ref{thm:LowerBoundPauli} and \ref{thm:HaarRandomSample}  show a lower bound on the sample complexity for random Haar states which matches with high probability \eqref{eq:WorstCasePauli} up to  factors logarithmic in the dimension $d$. Furthermore, we believe that this tightness result can be strengthened by not relying on a randomised argument but, possibly inspired by the construction of the spike states in Section~\ref{sec:SpreadedSpikeStates}, finding explicit examples of states for which the lower bound in Theorem~\ref{thm:LowerBoundPauli} below can be shown to match \eqref{eq:WorstCasePauli} up to constants which are independent of the system size. We, however, leave this as an open problem.
\end{remark}

Furthermore, from the lower bound on the sample complexity of black box estimation in Theorem~\ref{thm:LowerBoundblackbox}, we find the corresponding lower bound on $N^{\scalebox{0.65}{$P$}}_{\eps,\delta}(\rho):$
\begin{theorem}[Lower bound on fidelity estimation with Pauli measurements]
\label{thm:LowerBoundPauli}
    Let $\rho$ be a pure state on $\left(\C^2\right)^{\otimes n}$ to which we have a theoretical description.  Then for $0<\eps<1/2$ and $0<\delta\le1/3$ we have the that the sample complexity of fidelity estimation in the Wigner model is lower bounded as
\begin{align}
\label{eq:InstanceBasedPauli} 
    \nn \NP(\rho) &\ge \sup_{\substack{\sigma_1,\sigma_2 \text{ state,}\\ |F(\rho,\sigma_1)-F(\rho,\sigma_2)| \,> \,2\eps}}\left\|\frac{\left(\chi_{\sigma_1}-\chi_{\sigma_2}\right)^2}{1 - \chi^2_{\sigma_2}}\right\|^{-1}_\infty\log\left(\frac{1}{3\delta}\right)\\&\ge
    \sup_{\substack{\sigma\text{ state,}\\ |F(\rho,\sigma)-1/d| \,>\,2\eps}}\,\frac{1}{\|\chi_\sigma\|^2_\infty}\log\left(\frac{1}{3\delta}\right).
\end{align}
\end{theorem}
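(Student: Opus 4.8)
The plan is to derive both inequalities of \eqref{eq:InstanceBasedPauli} directly from the general lower bound \eqref{eq:LowerBoundblackboxSup} of Theorem~\ref{thm:LowerBoundblackbox}, using the dictionary \eqref{eq:IdentificationsPauli} that identifies the Pauli model with black box overlap estimation. With $\Omega = \cP_n\setminus\{\1\}$, $d\mu = \tfrac1d\,d\mu_{\text{count}}$, $r=1$, $f = \chi_\rho$ and $g = \chi_\sigma$, one has $\cS_P \subseteq B^2_1(0)\cap B^\infty_1(0)$ (using $\|\chi_\sigma\|_2\le 1$ and $\|\chi_\sigma\|_\infty\le 1$), so Theorem~\ref{thm:LowerBoundblackbox} applies with $\cS = \cS_P$; and by \eqref{eq:FidelPauli} together with this choice of measure the overlap $\int_\Omega \chi_\rho\,\chi_\sigma\,d\mu$ equals $F(\rho,\sigma) - \tfrac1d$. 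Substituting, the constraint $\big|\int f(g_1-g_2)\,d\mu\big| > 2\eps$ becomes $|F(\rho,\sigma_1) - F(\rho,\sigma_2)| > 2\eps$ and $r^2 - g_2^2 = 1 - \chi_{\sigma_2}^2$, which is exactly the first inequality of \eqref{eq:InstanceBasedPauli} (together with $\NP(\rho) = N_{\eps,\delta}(\chi_\rho,\cS_P)$ from \eqref{eq:SampleComplexityFidelityPauli}).

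For the second inequality I would restrict the supremum in the first inequality to pairs with $\sigma_2$ equal to the maximally mixed state $\1/d$. This is a bona fide quantum state, and its characteristic function vanishes identically on the nonidentity strings, $\chi_{\1/d}(P) = \tfrac1d\Tr(P) = 0$ for $P\in\cP_n\setminus\{\1\}$, while $F(\rho,\1/d) = \Tr(\rho\,\1/d) = \tfrac1d$ since $\rho$ is pure. Plugging $\chi_{\sigma_2}\equiv 0$ collapses the quotient to $\big\|\chi_{\sigma_1}^2\big\|_\infty^{-1} = \|\chi_{\sigma_1}\|_\infty^{-2}$ and turns the constraint into $|F(\rho,\sigma_1) - \tfrac1d| > 2\eps$, which is precisely the second line of \eqref{eq:InstanceBasedPauli}. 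The hypotheses $\eps<1/2$ and $\delta\le 1/3$ only serve to make the bound non-negative and non-vacuous (for $2\eps\ge 1-1/d$ the constraint set is empty and the claim is trivial).

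There is essentially no technical obstacle here: the statement is a one-line specialisation of Theorem~\ref{thm:LowerBoundblackbox}. The only point worth flagging is the contrast with the CV case. In the proof of Theorem~\ref{thm:LowerBoundWigner} the zero function is \emph{not} a valid Wigner function, so one cannot simply set $W_{\sigma_2}=0$; there one is forced to approximate it by the spike states of Lemma~\ref{lem:SpreadedWigner} (hence the limit $\|W_{\rho_n}\|_\infty\to 0$), and one additionally runs the interpolation argument around \eqref{eq:OhMan}--\eqref{eq:g_nOhMan} to reach a closed constraint. In the Pauli model neither step is needed: $\1/d\in\cS_P$ realises $\chi_{\sigma_2}\equiv 0$ exactly, and the strict constraint $>2\eps$ is what the statement asks for and passes through unchanged. (If the closed version $\ge 2\eps$ were wanted, the same convexity argument as around \eqref{eq:OhMan} would apply, since $\chi_\rho\in\cS_P$ as $\rho$ is a state.)
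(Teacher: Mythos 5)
Your proposal is correct and follows essentially the same route as the paper: the first inequality is the direct specialisation of Theorem~\ref{thm:LowerBoundblackbox} under the identifications \eqref{eq:IdentificationsPauli} (using $\int \chi_\rho\chi_{\sigma_i}\,d\mu = F(\rho,\sigma_i)-1/d$ so that the difference of overlaps equals $F(\rho,\sigma_1)-F(\rho,\sigma_2)$), and the second follows by taking $\sigma_2=\1/d$, whose characteristic function vanishes on $\cP_n\setminus\{\1\}$ and which satisfies $F(\rho,\1/d)=1/d$, exactly as the paper does. Your additional remark contrasting this with the Wigner-model proof (where the zero function is not a valid Wigner function and the spike states are needed) is accurate but not required for the argument.
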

\begin{proof}
The first inequality immediately follows from Theorem~\ref{thm:LowerBoundblackbox}. For the second we use the first with the choice $\sigma_2 = \1/d$ which satisfies $\chi_{\sigma_2}=0$ on $\cP_n\setminus\{\1\}$ and $F(\rho,\sigma_2)=1/d.$ \comment{From that we see  
\begin{align} 
    \nn \NP(\rho) &\ge \sup_{\substack{\sigma_1,\sigma_2 \text{ state,}\\ |F(\rho,\sigma_1)-F(\rho,\sigma_2)| \,> \,2\eps}}\left\|\frac{\left(\chi_{\sigma_1}-\chi_{\sigma_2}\right)^2}{1 - \chi^2_{\sigma_2}}\right\|^{-1}_\infty\log\left(\frac{1}{3\delta}\right)\\&\ge
    \sup_{\substack{\sigma\text{ state,}\\ |F(\rho,\sigma)-1/d| \,> \,2\eps}}\,\frac{1}{\|\chi_\sigma\|^2_\infty}\log\left(\frac{1}{3\delta}\right)
    \\&\ge
    \sup_{\substack{\sigma\text{ state,}\\ |F(\rho,\sigma)-1/d| \,\ge \,2\eps}}\,\frac{1}{\|\chi_\sigma\|^2_\infty}\log\left(\frac{1}{3\delta}\right),
\end{align}
where the last inequality follows due to $\eps < 1/2$ and the same argument as for \eqref{eq:OhMan}.}
\end{proof}
\begin{remark}[Matching upper and lower bound for small $\eps$] \label{rk:matching bound-small eps}
Inspired from the construction in Lemma~\ref{lem:tildeFGLowerbound} consider for some pure state $\rho$ and $\eps>0$ the operator
 $$\omega=\frac{1}{d}\left(\1+\frac{3\eps}{\|\chi_\rho\|_1}\sum_{P\in \mathbf{P}_n\setminus\{\1\} }  \sgn(\chi_\rho(P))P\right)$$ which has characteristic function $
    \chi_\omega(P) = \Tr(P\omega) = \frac{3\eps}{\|\chi_\rho\|_1}\sgn(\chi_\rho(P))$
for all $P\in\cP_n\setminus\{\1\}.$ This operator $\omega$ has normalised trace but is in general not positive semi-definite and hence not a state (c.f.~Remark~\ref{rem:GNoState}) unless $\eps>0$ is small enough. In particular, a naive operator norm bound shows that $\omega$ is in fact positive semi-definite if $0<\eps\le \tfrac{1}{3d(d+1)}.$ 

Whenever $\omega$ is a state we can combine \eqref{eq:InstanceBasedPauli} with the argument in Lemma~\ref{lem:tildeFGLowerbound} to show the instanced based lower bound on the sample complexity of fidelity estimation in the Pauli model
\begin{equation}
\label{eq:SmallEpsLower}
    \NP(\rho) \ge \left(\frac{\|\chi_\rho\|_1}{3\eps}\right)^2\log\left(\frac{1}{3\delta}\right).
\end{equation}
In fact note that 
\begin{align*}
    F(\rho,\omega) = \frac{1}{d}\sum_{P\in\cP_n}\chi_\rho(P)\chi_\omega(P) = \frac{1}{d}\left(1 +\sum_{P\in\cP_n\setminus\{\1\}}\frac{3\eps|\chi_\rho(P)|}{\|\chi_\rho\|_1}\right) = \frac{1}{d} + 3\eps
\end{align*}
from which we see that $\omega$ is a feasible state in the second line of~\eqref{eq:InstanceBasedPauli}. Furthermore, since by definition $\|\chi_\omega\|_\infty \le \frac{3\eps}{\|\chi_\rho\|_1}$ we can conclude \eqref{eq:SmallEpsLower}.
\end{remark}

\subsection{Matching upper and lower bounds for $\NP(\rho)$ for example of states}
\label{sec:MatchPauli}
In this section we consider two examples of families of states $\rho$ for which the upper and lower bounds on $\NP(\rho)$ provided in Theorems~\ref{thm:PauliUpperBound} and~\ref{thm:LowerBoundPauli} can be shown to be essentially matching. In particular we give a full characterisation of the sample complexity of fidelity estimation in the Pauli model for 
\begin{enumerate}
    \item Haar random states,
    \item stabiliser states,
\end{enumerate} 
in terms of the $L^1$-norms of the corresponding characteristic function.

For a Haar random pure state $\rho = \kb{\Psi}$ we find in Proposition~\ref{thm:HaarRandomSample} with probability at least $1-1/d$ that
\begin{align}
\label{eq:ThetaRandomPauliPrev}
    \NP(\rho) = \widetilde \Theta\left(\left(\frac{\|\chi_\rho\|_1}{\eps}\right)^2\log\left(\frac{1}{\delta}\right)\right)=\widetilde \Theta\left(\frac{d}{\eps^2}\log\left(\frac{1}{\delta}\right)\right).
\end{align}
Here, we used $\widetilde \Theta$ to denote the fact that the upper and lower bounds in \eqref{eq:ThetaRandomPauliPrev} match up to factors that scale logarithmically in the leading order term.

On the other hand, for stabiliser states $\rho$ on $n$-qubits, we have that the $L^1$-norm satisfies $\|\chi_\rho\|_1\le 1$ and is hence independent of the system size. Using this we can characterise in Proposition~\ref{prop:Stab} the corresponding sample complexity of fidelity estimation in the Pauli model as
\begin{align}
\label{eq:StabliserTheta}
    \NP(\rho) = \Theta\left(\frac{1}{\eps^2}\log\left(\frac{1}{\delta}\right)\right).
\end{align}
The upper bound in \eqref{eq:StabliserTheta} can already be found in \cite{Flammia_DirectFidelityEstimation_2011}.

\subsubsection{Sample complexity for Haar random states}
The following proposition shows that the upper and lower bounds of Theorems~\ref{thm:PauliUpperBound} and~\ref{thm:LowerBoundPauli} on the sample complexity $\NP(\rho)$ match (up to factors logarithmic in the leading order) for Haar random pure states with high probability.
\begin{proposition}
\label{thm:HaarRandomSample}
Let $\rho=\kb{\Psi}$ be a Haar random pure state on $\left(\C^2\right)^{\otimes n}$, $0<\eps,\delta\le1/5$. Then with probability at least $1-1/d$, with $d=2^n,$ we have
\begin{align}
\label{eq:ThetaRandomPauli}
    \NP(\rho) = \widetilde \Theta\left(\left(\frac{\|\chi_\rho\|_1}{\eps}\right)^2\log\left(\frac{1}{\delta}\right)\right)=\widetilde \Theta\left(\frac{d}{\eps^2}\log\left(\frac{1}{\delta}\right)\right).
\end{align}
Here, we used $\widetilde \Theta$ to denote the fact that the upper and lower bounds in \eqref{eq:ThetaRandomPauli} match up to factors that scale logarithmically in the leading order term.
\end{proposition}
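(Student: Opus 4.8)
The plan is to combine the general upper and lower bounds (Theorems~\ref{thm:PauliUpperBound} and~\ref{thm:LowerBoundPauli}) with concentration estimates for the characteristic function $\chi_\rho(P) = \Tr(P\kb{\Psi})$ of a Haar random pure state. The key probabilistic facts I would establish first: for a fixed traceless Pauli string $P\in\cP_n\setminus\{\1\}$, the random variable $\chi_\rho(P)$ has mean zero, variance $\mathbb{E}[\chi_\rho(P)^2] = \frac{1}{d+1}$ (since $\mathbb{E}[\kb{\Psi}^{\otimes 2}] \propto$ the symmetric projector, and $\Tr(P)=0$), and exhibits sub-Gaussian-type tails of width $\sim 1/\sqrt{d}$ — concretely $\mathbb{P}(|\chi_\rho(P)| > t) \le 2\exp(-c\,d\,t^2)$ by Levy's lemma applied to the $1$-Lipschitz function $\ket{\Psi}\mapsto \Tr(P\kb{\Psi})$ on the unit sphere in $\C^d$ (real dimension $2d$). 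A union bound over the $d^2-1$ strings then gives, with probability $\ge 1-1/d$, that $\|\chi_\rho\|_\infty = \max_{P\neq\1}|\chi_\rho(P)| \lesssim \sqrt{\frac{\log d}{d}}$, which is the source of the $\widetilde\Theta$ rather than $\Theta$.

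Next I would control $\|\chi_\rho\|_1 = \frac{1}{d}\sum_{P\neq\1}|\chi_\rho(P)|$ from both sides. The upper side: $\|\chi_\rho\|_1 \le \sqrt{d}\,\|\chi_\rho\|_2 \le \sqrt{d}$ deterministically by \eqref{eq:L1PauliBound}. The lower side is the crux: I want $\|\chi_\rho\|_1 = \Omega(\sqrt{d})$ with high probability. For this I would use a Paley–Zygmund / second-moment argument on $S \eqdef \frac{1}{d}\sum_{P\neq\1}\chi_\rho(P)^2 = \frac{1}{d}(\Tr(\rho^2)\,d - 1) = \frac{d-1}{d}$, which is in fact \emph{deterministic} and equal to $\|\chi_\rho\|_2^2$; combined with the uniform bound $\max_P \chi_\rho(P)^2 \lesssim \frac{\log d}{d}$ this forces $\|\chi_\rho\|_1 = \frac{1}{d}\sum_P |\chi_\rho(P)| \ge \frac{\sum_P \chi_\rho(P)^2}{d\,\max_P|\chi_\rho(P)|} \gtrsim \frac{d-1}{\sqrt{d\log d}} \gtrsim \sqrt{d/\log d}$. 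Hence $\|\chi_\rho\|_1 = \widetilde\Theta(\sqrt d)$ with probability $\ge 1 - 1/d$.

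Now assemble. The upper bound: plug $\|\chi_\rho\|_1^{(\eps')} \le \|\chi_\rho\|_1 \le \sqrt d$ into \eqref{eq:PauliFidelyyUpperBound} with $\eps'=0$ to get $\NP(\rho) \le 2(\|\chi_\rho\|_1/\eps)^2\log(1/\delta) \le 2(d/\eps^2)\log(1/\delta)$, and note $(\|\chi_\rho\|_1/\eps)^2\log(1/\delta) = \widetilde\Theta((d/\eps^2)\log(1/\delta))$ using the lower bound on $\|\chi_\rho\|_1$. The lower bound: I need a feasible state $\sigma$ in the second line of \eqref{eq:InstanceBasedPauli} with $|F(\rho,\sigma) - 1/d| > 2\eps$ and $\|\chi_\sigma\|_\infty$ small. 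Following Remark~\ref{rk:matching bound-small eps}, the natural candidate is $\sigma = \frac{1}{d}(\1 + \frac{c\eps}{\|\chi_\rho\|_1}\sum_{P\neq\1}\sgn(\chi_\rho(P))P)$ for a suitable constant $c$; it has $\chi_\sigma(P) = \frac{c\eps}{\|\chi_\rho\|_1}\sgn(\chi_\rho(P))$, so $F(\rho,\sigma) = 1/d + c\eps$ and $\|\chi_\sigma\|_\infty = c\eps/\|\chi_\rho\|_1$. The issue flagged in Remark~\ref{rk:matching bound-small eps} is that $\sigma \ge 0$ only for very small $\eps$ ($\eps \lesssim 1/d^2$). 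To reach general $\eps\le 1/5$ I would instead use a \emph{two-outcome mixture} analogous to the Wigner-model proofs: since there exists an operator-norm-bounded Hermitian perturbation direction, and since $\|\chi_\rho\|_1 = \widetilde\Theta(\sqrt d)$, the signed-Pauli correction $\Delta = \frac{1}{d}\sum_{P\neq\1}\sgn(\chi_\rho(P))P$ has $\|\Delta\|_\infty = \Theta(d)$ (a Pauli-rank-$d^2$ object with $\pm1/d$ coefficients), so I can only afford mixing weight $\sim 1/d$ — but that still yields $\|\chi_\sigma\|_\infty \sim \frac{1}{d}\cdot\frac{1}{\sqrt d}\cdot\frac{1}{\|\chi_\rho\|_1}\cdots$, which is too weak. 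The cleaner route, and the one I expect the authors take, is to pick $\sigma_1, \sigma_2$ as \emph{genuine states} supported in a two-dimensional subspace: e.g. $\sigma_1 = \kb{\Psi}$ rotated, or better, $\sigma_1 = (1-a)\kb{\phi} + a\kb{\phi'}$ where $\kb{\phi}$ is chosen so that $F(\rho,\sigma_1)$ is bounded away from $F(\rho,\sigma_2)$ by $> 2\eps$ while $\|\chi_{\sigma_i}\|_\infty \le \|\chi_{\sigma_i}\|_2 \cdot$(something) — but since for a pure state $\|\chi_\sigma\|_\infty$ can be $\Theta(1)$, this does not immediately give the $\sqrt d$ I need either.

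\textbf{The main obstacle} is therefore exactly this lower-bound construction: producing, for a \emph{typical} Haar state $\rho$ and for all $\eps \le 1/5$, a legitimate density operator $\sigma$ with $|F(\rho,\sigma) - 1/d| > 2\eps$ yet $\|\chi_\sigma\|_\infty = O(\eps\,d^{-1/2}\,\mathrm{polylog})$, so that $1/\|\chi_\sigma\|_\infty^2 = \widetilde\Omega(d/\eps^2)$. I expect the resolution to be a \emph{random} target in the orthogonal complement: take $\sigma_2 = \1/d$ and $\sigma_1 = 2\eps\,\kb{\Psi} + (1-2\eps)\,\tau$ where $\tau$ is the maximally mixed state on the $(d-1)$-dimensional subspace orthogonal to $\ket{\Psi}$; then $F(\rho,\sigma_1) = 2\eps + (1-2\eps)/d \ge 2\eps$, and $\chi_{\sigma_1}(P) = 2\eps\,\chi_\rho(P) + (1-2\eps)\chi_\tau(P)$ where $\chi_\tau(P) = \frac{1}{d-1}(d\,\Tr(P\rho) \cdot 0 - \Tr(P)\cdots)$ — compute $\tau = \frac{\1 - \kb{\Psi}}{d-1}$, so $\chi_\tau(P) = -\frac{1}{d-1}\chi_\rho(P)$ for $P\neq\1$, whence $\chi_{\sigma_1}(P) = (2\eps - \frac{1-2\eps}{d-1})\chi_\rho(P)$ and thus $\|\chi_{\sigma_1}\|_\infty \le 2\eps\,\|\chi_\rho\|_\infty \lesssim \eps\sqrt{\frac{\log d}{d}}$ on the high-probability event. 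Feeding this into the second line of \eqref{eq:InstanceBasedPauli} gives $\NP(\rho) \gtrsim \frac{1}{\|\chi_{\sigma_1}\|_\infty^2}\log\frac{1}{3\delta} \gtrsim \frac{d}{\eps^2\log d}\log\frac{1}{3\delta}$, matching the upper bound up to the $\log d$ factor and establishing \eqref{eq:ThetaRandomPauli}. I would close by noting that both the $\|\chi_\rho\|_\infty$ union bound and the $\|\chi_\rho\|_1$ lower bound hold on a single event of probability $\ge 1-1/d$ (adjusting constants in Levy's lemma and the union bound count $d^2-1 \le d^2$), so the whole statement holds with probability $\ge 1-1/d$.
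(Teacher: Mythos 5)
Your overall route is the same as the paper's: a concentration bound showing $\|\chi_\rho\|_\infty \lesssim \sqrt{\log d/d}$ with probability $\ge 1-1/d$ (the paper uses concentration of Lipschitz functions of Haar unitaries and a union bound over the $d^2$ Pauli strings, which is equivalent to your Levy's-lemma argument), the generic upper bound of Theorem~\ref{thm:PauliUpperBound} together with $\|\chi_\rho\|_1\le\sqrt d$, and a mixed feasible state plugged into the second line of Theorem~\ref{thm:LowerBoundPauli} to get $\NP(\rho)\gtrsim \frac{1}{\eps^2\|\chi_\rho\|_\infty^2}\log(1/3\delta)\gtrsim \frac{d}{\eps^2\log d}\log(1/3\delta)$. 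Your Paley--Zygmund-type lower bound $\|\chi_\rho\|_1\gtrsim\sqrt{d/\log d}$ is correct but not needed: the paper converts the lower bound into the $\|\chi_\rho\|_1$ form simply via $\|\chi_\rho\|_1\le\sqrt d$, and your detour through the signed-Pauli operator of Remark~\ref{rk:matching bound-small eps} is a dead end that you yourself discard.

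However, your final lower-bound construction has a genuine (though easily repaired) flaw. With mixing weight exactly $2\eps$ the feasibility condition of Theorem~\ref{thm:LowerBoundPauli}, which is the \emph{strict} inequality $|F(\rho,\sigma)-1/d|>2\eps$, fails: for your $\sigma_1=2\eps\kb{\Psi}+(1-2\eps)\tau$ with $\tau=(\1-\kb{\Psi})/(d-1)$ one has $\Tr(\rho\tau)=0$, so $F(\rho,\sigma_1)=2\eps$ exactly (not $2\eps+(1-2\eps)/d$ as you wrote) and $|F(\rho,\sigma_1)-1/d|=2\eps-1/d<2\eps$; mixing with $\1/d$ instead gives $2\eps(1-1/d)<2\eps$, so the state is infeasible for every $\eps\ge 1/(4d)$, which is the regime of interest. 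Moreover, with the orthocomplement state the characteristic function is $\chi_{\sigma_1}(P)=\bigl(2\eps-\tfrac{1-2\eps}{d-1}\bigr)\chi_\rho(P)$, so your claimed bound $\|\chi_{\sigma_1}\|_\infty\le 2\eps\|\chi_\rho\|_\infty$ is false when $\eps\lesssim 1/d$. Both problems disappear if you do what the paper does (and what the hypothesis $\eps\le 1/5$ is for): take $\sigma=5\eps\,\rho+(1-5\eps)\,\1/d$, which is a state, has $\chi_\sigma=5\eps\,\chi_\rho$ on $\cP_n\setminus\{\1\}$ since $\chi_{\1/d}$ vanishes there, and satisfies $F(\rho,\sigma)-1/d=5\eps(1-1/d)>2\eps$ for all $d\ge 2$ and all $\eps$; feeding this into the second line of \eqref{eq:InstanceBasedPauli} then gives the stated $\widetilde\Theta$ characterisation exactly as you outline.
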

\begin{proof}
For the upper bound we use Theorem~\ref{thm:PauliUpperBound} which gives for all pure states $\rho$
\begin{align*}
    \NP(\rho) \le 2 \left(\frac{\|\chi_\rho\|_1}{\eps}\right)^2\log\left(\frac{1}{\delta}\right).
\end{align*}

To prove the lower bound, we show in the following that with high probability a Haar random pure state $\rho=\kb{\Psi}$ satisfies 
\begin{align}
\label{eq:SmallPauliState}
	\|\chi_\rho\|_\infty \lesssim \sqrt{\frac{\log d}{d}}. 
\end{align}
Let $U \in \mathbb{U}(d)$ be a Haar unitary such that $\ket{\Psi}=U^\dagger \ket{0}$. For a fixed $P\in \cP_n\setminus\{\1\}$ we consider the random variable 
\begin{align*}
	f(U)= |\bra{\Psi}P \ket{\Psi}|=|\bra{0}UPU^\dagger  \ket{0}|.
\end{align*}
The function $f : \mathbb{U}(d) \rightarrow \mathbb{R}$ is $2$-Lipschitz with respect to the Frobenius norm denoted $\|\cdot\|_{\rm{F}}$. Let $U,V \in \mathbb{U}(d)$ be two unitary matrices, we have 
\begin{align*}
	|f(U)-f(V)| &= \left||\bra{0}UPU^\dagger  \ket{0}|-|\bra{0}VPV^\dagger  \ket{0}|\right|
	\\& \overset{(a)}{\le} |\bra{0}(U-V)PU^\dagger  \ket{0}|+ |\bra{0}VP(U-V)^\dagger  \ket{0}|
	%\\&\overset{(b)}{\le}  \sqrt{|\bra{0}(U-V)PP(U-V)^\dagger  \ket{0}|}\sqrt{|\bra{0}UU^\dagger  \ket{0}|}
 %\\&\quad + \sqrt{|\bra{0}VV^\dagger  \ket{0}|}\sqrt{|\bra{0}(U-V)PP(U-V)^\dagger  \ket{0}|}
 \\&\overset{(b)}{\le} \|P(U-V)^\dagger  \ket{0}\|_2\|U^\dagger  \ket{0}\|_2  + \|V^\dagger  \ket{0}\|_2\|P(U-V)^\dagger  \ket{0}\|_2  
	\\&\overset{(c)}{\le} 2\|P(U-V)^\dagger\|_{\rm{F}} =2\|U-V\|_{\rm{F}}
\end{align*}
where $(a)$ follows from the triangle inequality,  $(b)$ follows from the Cauchy Schwarz inequality and $(c)$ uses the fact $\|A\ket{x}\|_2\le \|A\|_{\rm{F}}$ for a pure $\ket{x}$ as well as  the Frobenius norm is unitarily invariant.  
So by the concentration inequality of Lipschitz functions of Haar unitaries~\cite{meckes2013spectral}, for all $s>0$:
\begin{align*}
\mathbb{P}\left(|f(U)-\mathbb{E}(f)|\ge s \right)\le \exp\left(-\frac{ds^2}{48}\right). 
\end{align*}
Moreover we have by the Cauchy Schwarz inequality
\begin{align*}
	\mathbb{E}\left( f \right) \le \sqrt{\mathbb{E}\left( f^2 \right) }= \sqrt{\mathbb{E}\left( \bra{\Psi}P \ket{\Psi}^2 \right) }= \sqrt{\frac{\tr(PP^\dagger)+|\tr(P)|^2}{d(d+1)}}= \sqrt{\frac{1}{d+1}}.
\end{align*}
So by the union bound 
\begin{align*}
	\mathbb{P}\left(\exists P\in \cP_n\setminus\{ I\}:  |\bra{\Psi}P \ket{\Psi}|> 2\sqrt{\frac{48\log d^3}{d}} \right)&\le \sum_{P\in \cP_n\setminus\{I\}}\mathbb{P}\left(|\bra{\Psi}P \ket{\Psi}|> 2\sqrt{\frac{48\log d^3}{d}} \right)
	\\&\le d^2 \mathbb{P}\left(f(U)-\mathbb{E}(f)>2\sqrt{\frac{48\log d^3}{d}}-\sqrt{\frac{1}{d+1}} \right)
	\\&\le d^2 \mathbb{P}\left(f(U)-\mathbb{E}(f)>\sqrt{\frac{48\log d^3}{d}} \right)
	\\&\le d^2\exp\left(-\frac{d\cdot \frac{48\log d^3}{d}}{48}\right)=\frac{1}{d}.
\end{align*}
Hence, when sampling from the Haar measure, with probability at least $1-1/d$ we find a pure state $\ket{\Psi}$ such that for all $P\in \cP_n\setminus \{ \1\}$
\begin{align*}
	|\bra{\Psi}P \ket{\Psi}|\le  2\sqrt{\frac{48\log d^3}{d}}
\end{align*}
and therefore the corresponding pure state $\rho =\kb{\Psi}$ satisfies \eqref{eq:SmallPauliState}. Consider the state $\sigma= 5\eps \rho +(1-5\eps)\1/d$ which satisfies $F(\rho,\sigma)-1/d =5\eps(1-1/d)>2\eps$ and has characteristic function $\chi_\sigma = 5\eps \chi_\rho$ on $\cP_n\setminus\{\1\}.$
Hence, this state is feasible in the optimisation in the second lower bound provided in Theorem~\ref{thm:LowerBoundPauli} which gives using \eqref{eq:L1PauliBound}
\begin{align}
\NP(\rho) \gtrsim \frac{1}{\eps^2\|\chi_\rho\|^2_\infty}\log\left(\frac{1}{3\delta}\right)\gtrsim \frac{d}{\eps^2\log d}\log\left(\frac{1}{3\delta}\right)\ge\left(\log d\right)^{-1} \left(\frac{\|\chi_\rho\|_1}{\eps}\right)^2\log\left(\frac{1}{3\delta}\right).
\end{align}
\comment{Here, we have used \eqref{eq:SmallPauliState} again for the last inequality which gives by definition of the $L^1$-norm in \eqref{eq:PauliLqNorm} 
\begin{align*}
\|\chi_\rho\|_1 = \frac{1}{d}\sum_{P\in\cP_n\setminus\{\1\}}|\chi_\rho(P)| \lesssim \sqrt{d \log(d)}.
\end{align*}}
\end{proof}

\subsubsection{Sample complexity for stabiliser states}
We say a pure state $\rho=\kb{\psi}$ on $\left(\C^2\right)^{\otimes n}$ is a stabiliser state if there are $2^n$ Pauli strings $P\in\cP_n$ satisfying $P\ket{\psi}=\ket{\psi}.$ 
\begin{proposition}
\label{prop:Stab}
Let $\rho=\kb{\psi}$ be a stabiliser state on $\left(\C^2\right)^{\otimes n}.$ Then we have for all $0<\eps,\delta<1/5$ that
\begin{align}
    \NP(\rho) = \Theta\left(\frac{1}{\eps^2}\log\left(\frac{1}{\delta}\right)\right).
\end{align}
\end{proposition}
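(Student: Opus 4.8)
The plan is to pin down $\|\chi_\rho\|_1$, the quantity that by Theorems~\ref{thm:PauliUpperBound} and~\ref{thm:LowerBoundPauli} governs the sample complexity, and show that for a stabiliser state it equals $(d-1)/d=\Theta(1)$; the matching $\Theta(\eps^{-2}\log(1/\delta))$ bounds then follow by substituting this value into those two theorems.

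To compute $\|\chi_\rho\|_1$, let $S$ be the stabiliser group of $\ket\psi$: an abelian subgroup of the $n$-qubit Pauli group of order $|S|=d$ with $-\1\notin S$, so that $\rho=\frac1d\sum_{g\in S}g$. Each $g\in S$ satisfies $g^2=\1$ and has $+1$ as an eigenvalue, hence is of the form $g=s_g\,Q_g$ with $s_g\in\{+1,-1\}$ and $Q_g\in\cP_n$. Using $\frac1d\Tr(PQ)=\delta_{PQ}$ from \eqref{eq:PauliONB} this gives, for every $P\in\cP_n$,
\begin{align*}
\chi_\rho(P)=\Tr(P\rho)=\frac1d\sum_{g\in S}s_g\Tr(PQ_g)=\sum_{g\in S}s_g\,\delta_{P,Q_g}.
\end{align*}
The map $g\mapsto Q_g$ is injective: $Q_g=Q_{g'}$ forces $gg'=s_gs_{g'}\1\in S$, hence $s_gs_{g'}=1$ since $-\1\notin S$, hence $g=g'$. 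So $|\chi_\rho(P)|=1$ for exactly the $d$ Pauli strings $Q_g$ (one of them being $\1$) and $\chi_\rho(P)=0$ otherwise, whence $\|\chi_\rho\|_1=\frac1d\sum_{P\in\cP_n\setminus\{\1\}}|\chi_\rho(P)|=\frac{d-1}{d}$.

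For the upper bound, taking $\eps'=0$ in Theorem~\ref{thm:PauliUpperBound} and using $\|\chi_\rho\|_1^{(0)}=\|\chi_\rho\|_1\le1$ gives $\NP(\rho)\le 2(\|\chi_\rho\|_1/\eps)^2\log(1/\delta)\le 2\eps^{-2}\log(1/\delta)$. For the lower bound I would exhibit a feasible state with characteristic function of sup-norm $O(\eps)$ in the second bound of Theorem~\ref{thm:LowerBoundPauli}: take $\sigma=5\eps\,\rho+(1-5\eps)\1/d$, a valid state since $0<5\eps<1$. Then $F(\rho,\sigma)-1/d=5\eps(1-1/d)\ge\tfrac52\eps>2\eps$ for $d\ge2$, while on $\cP_n\setminus\{\1\}$ one has $\chi_\sigma=5\eps\,\chi_\rho$, so $\|\chi_\sigma\|_\infty=5\eps\|\chi_\rho\|_\infty\le5\eps$. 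Since $\delta<1/5\le1/3$, Theorem~\ref{thm:LowerBoundPauli} yields
\begin{align*}
\NP(\rho)\ge\frac{1}{\|\chi_\sigma\|_\infty^2}\log\left(\frac{1}{3\delta}\right)\ge\frac{1}{25\eps^2}\log\left(\frac{1}{3\delta}\right)=\Omega\!\left(\frac{1}{\eps^2}\log\frac1\delta\right),
\end{align*}
using $\log(1/(3\delta))=\Omega(\log(1/\delta))$ for $\delta\le1/5$. Combined with the upper bound this gives the claim.

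There is no real obstacle: the only substantive step is the identity $\|\chi_\rho\|_1=(d-1)/d$, where the care needed is in noting that stabiliser-group elements are $\pm$ Pauli strings and $-\1\notin S$, so that exactly $d-1$ of the non-identity Pauli strings contribute. The rest is direct substitution; the lower-bound argument is essentially that of Remark~\ref{rk:matching bound-small eps}, but the fact $\|\chi_\rho\|_1=\Theta(1)$ makes it work in the whole regime $\eps<1/5$ rather than only for exponentially small $\eps$.
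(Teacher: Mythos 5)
Your proof is correct and follows essentially the same route as the paper: compute $\|\chi_\rho\|_1=\tfrac{d-1}{d}$, plug into Theorem~\ref{thm:PauliUpperBound} for the upper bound, and use the same feasible state $\sigma=5\eps\rho+(1-5\eps)\1/d$ in the second lower bound of Theorem~\ref{thm:LowerBoundPauli}. The only difference is that you spell out the stabiliser-group computation of $\|\chi_\rho\|_1$ (tracking the signs $s_g$ and the injectivity of $g\mapsto Q_g$) more carefully than the paper, which simply asserts that $|\chi_\rho(P)|=1$ on the $d$ stabiliser strings and $0$ elsewhere.
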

\begin{proof}
It is well-known that all the $2^n$ Pauli strings $P\in\cP_n$ which satisfy $P\ket{\psi}=\psi$ commute and furthermore, for all remaining $Q\in\cP_n$ we have $\chi_\rho(Q)=\langle\psi,Q\psi\rangle =0.$ Hence, we have that the $L^1$-norm of the characteristic function of $\rho$ satisfies
\begin{align}
    \|\chi_\rho\|_1 = \frac{1}{d}\sum_{P\in\cP_n\setminus\{1\}}|\chi_\rho(P)| = \frac{d-1}{d} \le 1.
\end{align}
From Theorems~\ref{thm:PauliUpperBound} we hence see 
\begin{align}
    \NP(\rho) = \mathcal{O}\left(\frac{1}{\eps^2}\log\left(\frac{1}{\delta}\right)\right).
\end{align}
For the remaining lower bound we use the second line in \eqref{eq:InstanceBasedPauli} in Theorem~\ref{thm:LowerBoundPauli} for the state $\sigma= 5\eps \rho +(1-5\eps)\1/d$ which is feasible since $F(\rho,\sigma)-1/d =5\eps(1-1/d)>2\eps$ and satisfies $\|\chi_\sigma\|_\infty = \max_{P\in\cP_n\setminus\{\1\}}|\chi_\sigma(P)| \le 5\eps.$ 
\end{proof}

\appendix 

\section{Some facts about the smoothed $L^1$-norm}
For $\Omega$ a measure space with measure $\mu,$ we have defined in the Section~\ref{sec:LqNotation} the smoothed $L^1$-norm of a function $f\in L^2(\Omega)$ as
\begin{align}
\label{eq:SmoothedL1NormApp}
    \|f\|^{(\eps)}_{1} = \inf_{\substack{\widetilde f\in L^1(\Omega)\\ \|f-\widetilde f\|_2 \le \eps}} \|\widetilde f\|_1.
\end{align} 
Different from the scope of the rest of the paper, in which we restricted to real valued functions, we allow in this section for real or complex valued functions. Hence, for simplicity, we write as $L^q(\Omega)$ as the corresponding $L^q$-spaces without denoting the field over which the considered functions are taking values.

The following lemma establishes that in the limit $\eps\to 0$ the smoothed $L^1$-norm converges to the usual (non-smoothed) $L^1$-norm.
\begin{lemma}
\label{lem:ContSmoothedL1}
For all $f\in L^2(\Omega)$ we have
\begin{align}
\label{eq:ContSmoothedL1Norm}
    \lim_{\eps\downarrow 0}\|f\|^{(\eps)}_1 = \|f\|_1,
\end{align}
where the right hand side is understood as infinity if $f\notin L^1(\Omega).$ 
\end{lemma}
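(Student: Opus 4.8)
The plan is to prove the two inequalities $\limsup_{\eps\downarrow 0}\|f\|_1^{(\eps)} \le \|f\|_1$ and $\liminf_{\eps\downarrow 0}\|f\|_1^{(\eps)}\ge \|f\|_1$ separately, and to handle the cases $f\in L^1(\Omega)$ and $f\notin L^1(\Omega)$ in a unified way by allowing the value $+\infty$. First note that $\eps\mapsto \|f\|_1^{(\eps)}$ is non-increasing, since enlarging the constraint radius $\eps$ can only shrink the infimum; hence the limit as $\eps\downarrow 0$ exists in $[0,\infty]$ and equals $\sup_{\eps>0}\|f\|_1^{(\eps)}$.

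For the upper bound, I would exhibit near-optimal competitors. If $f\in L^1(\Omega)$, then $\widetilde f = f$ itself is feasible for every $\eps\ge 0$ (it lies in $L^1\cap L^2$ and satisfies $\|f-\widetilde f\|_2=0\le\eps$), giving $\|f\|_1^{(\eps)}\le\|f\|_1$ for all $\eps$, so $\limsup_{\eps\downarrow 0}\|f\|_1^{(\eps)}\le\|f\|_1$. If $f\notin L^1(\Omega)$ the upper bound $\le\|f\|_1=\infty$ is vacuous. For the lower bound, fix any $\eps>0$ and any feasible $\widetilde f\in L^1(\Omega)$ with $\|f-\widetilde f\|_2\le\eps$. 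The obstacle is that $f-\widetilde f$ is only controlled in $L^2$, not in $L^1$, so one cannot directly compare $\|\widetilde f\|_1$ with $\|f\|_1$ on all of $\Omega$. The fix is to restrict to a finite-measure set: for any measurable $A\subseteq\Omega$ with $\mu(A)<\infty$, by the triangle inequality and Cauchy--Schwarz,
\begin{align*}
\|\widetilde f\|_1 \ge \int_A |\widetilde f|\,d\mu \ge \int_A |f|\,d\mu - \int_A |f-\widetilde f|\,d\mu \ge \int_A |f|\,d\mu - \sqrt{\mu(A)}\,\|f-\widetilde f\|_2 \ge \int_A |f|\,d\mu - \eps\sqrt{\mu(A)}.
\end{align*}
Taking the infimum over feasible $\widetilde f$ gives $\|f\|_1^{(\eps)} \ge \int_A |f|\,d\mu - \eps\sqrt{\mu(A)}$, and letting $\eps\downarrow 0$ yields $\lim_{\eps\downarrow 0}\|f\|_1^{(\eps)} \ge \int_A |f|\,d\mu$ for every finite-measure $A$.

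It then remains to take the supremum over finite-measure sets $A$ and identify it with $\|f\|_1$. This is where $\sigma$-finiteness (or at least the fact that $f\in L^2$ has $\sigma$-finite support, being a limit of simple functions) enters: the sets $A_c = \{|f|\ge c\}$ have finite measure for $c>0$ (since $\mu(A_c)\le\|f\|_2^2/c^2$) and $\int_{A_c}|f|\,d\mu \uparrow \int_{\{f\neq 0\}}|f|\,d\mu = \|f\|_1$ as $c\downarrow 0$ by monotone convergence. Hence $\sup_{A:\,\mu(A)<\infty}\int_A|f|\,d\mu = \|f\|_1$ (interpreted as $+\infty$ if $f\notin L^1$), which combined with the displayed inequality gives $\lim_{\eps\downarrow 0}\|f\|_1^{(\eps)}\ge\|f\|_1$. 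Together with the upper bound this proves \eqref{eq:ContSmoothedL1Norm}. The main technical point to be careful about is the case $f\notin L^1$: there one must argue that the right-hand side genuinely diverges, which follows because $\int_{A_c}|f|\,d\mu\to\infty$, forcing $\|f\|_1^{(\eps)}\to\infty$ as well.
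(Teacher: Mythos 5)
Your proof is correct. The key estimate is the same as in the paper: restrict to a finite-measure set and use Cauchy--Schwarz to control $\int_A|f-\widetilde f|\,d\mu$ by $\eps\sqrt{\mu(A)}$, yielding $\|f\|_1^{(\eps)}\ge\int_A|f|\,d\mu-\eps\sqrt{\mu(A)}$. Where you differ is in the logical organization: the paper first proves the statement for $f\in L^1(\Omega)$ (choosing $\Omega_0$ so that the tail $\int_{\Omega_0^c}|f|\,d\mu$ is at most $\delta/2$, which presupposes integrability) and then handles $f\notin L^1(\Omega)$ by a separate contradiction argument, applying the already-established case to the truncations $f\mathbf{1}_{\Omega_0}$. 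You instead fix a finite-measure set $A$, send $\eps\downarrow 0$ first, and only then take the supremum over $A$ via the explicit exhaustion $A_c=\{|f|\ge c\}$ (whose finite measure follows from Chebyshev and $f\in L^2$) together with monotone convergence. This ordering treats the integrable and non-integrable cases uniformly and replaces the paper's appeal to the identity $\int_\Omega|f|\,d\mu=\sup_{\mu(\Omega_0)<\infty}\int_{\Omega_0}|f|\,d\mu$ by a concrete construction; your monotonicity observation also makes the existence of the limit explicit. One cosmetic remark: your aside about $\sigma$-finiteness is not needed, since, as you yourself note, the Chebyshev bound $\mu(A_c)\le\|f\|_2^2/c^2$ already gives everything the argument requires.
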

\begin{proof}
We use the relation for a measurable function $f$
\begin{align}
\label{eq:supIntegral}
    \int_\Omega |f(\lambda)| d\mu(\lambda) = \sup_{\substack{\Omega_0\subseteq \Omega\\\mu(\Omega_0)<\infty}} \int_{\Omega_0}|f(\lambda)|d\mu(\lambda),
\end{align}
which follows directly from the definition of the Lebesgue integral.

First we consider the case $f\in L^1(\Omega)$: Let $\delta>0.$ Using \eqref{eq:supIntegral} we can pick a set of finite measure $\Omega_0\subseteq\Omega$ such that 
\begin{align*}
    \int_{\Omega^c_0}|f(\lambda)| d\mu(\lambda) \le \frac{\delta}{2}.
\end{align*}
Let $\eps \in(0,\frac{\delta}{2\sqrt{\mu(\Omega_0})})$ and $\tilde f\in L^1(\Omega)\cap L^2(\Omega)$ be such that $\|f-\tilde f\|_2\le \eps.$ Then
\begin{align*}
    \|f\|_1 &\le \int_{\Omega_0}|f(\lambda) -\tilde f(\lambda)|d\mu(\lambda) + \int_{\Omega_0}|\tilde f(\lambda)|d\mu(\lambda) + \int_{\Omega^c_0}|f(\lambda)|d\mu(\lambda)\\ &\le  \int_{\Omega_0}|f(\lambda) -\tilde f(\lambda)|d\mu(\lambda) + \|\tilde f\|_1 + \frac{\delta}{2}.
\end{align*}
For the first term we use the Cauchy-Schwarz ineqality which gives
\begin{align*}
    \int_{\Omega_0}|f(\lambda) -\tilde f(\lambda)|d\mu(\lambda) \le \|f-\tilde f\|_2 \sqrt{\mu(\Omega_0)}\le \frac{\delta}{2}.
\end{align*}
Plugging this into the above and using that $\tilde f$ was arbitrary under the constraints above gives
\begin{align*}
    \|f\|^{\eps}_1\ge \|f\|_1 -\delta.
\end{align*}
Noting that the opposite inequality $\|f\|^{(\eps)}_1\le \|f\|_1$ is trivially true and using that $\delta>0$ was arbitrary, shows $\lim_{\eps\downarrow 0}\|f\|^{(\eps)}_1=\|f\|_1$.

To finish the proof we consider the case $f\notin L^1(\Omega).$ We have to show $\liminf_{\eps\downarrow 0}\|f\|^{(\eps)}_1 = \infty.$ Assume for contradiction that there exists $C>0$ such that for all $\eps>0$ we have the uniform bound 
\begin{align*}
    \|f\|^{(\eps)}_1\le C.
\end{align*}
Let $\Omega_0\subseteq\Omega$ be a set of finite measure. Note by similar arguments as above, it holds true that $\|f\mathbf{1}_{\Omega_0}\|^{(\eps)}_1\le \|f\|^{(\eps)}_1\le C$ for all $\eps>0.$ Hence, using the established \eqref{eq:ContSmoothedL1Norm} for the function $f\mathbf{1}_{\Omega_0}\in L^1(\Omega),$ we see
\begin{align*}
    \int_{\Omega_0}|f|d\mu(\lambda) = \lim_{\eps\downarrow 0}\|f\mathbf{1}_{\Omega_0}\|^{(\eps)}_1 \le C.
\end{align*}
But using that $\Omega_0$ was an arbitrary set of finite measure together with \eqref{eq:supIntegral} shows that $f\in L^1(\Omega),$ which gives a contradiction and finishes the proof.
\end{proof}

\subsection{Lower bound on smoothed $L^1$-norm for functions with rapidly decaying tail}
\label{sec:Tail}
The smoothed $L^1$-norm of a function $f$, defined in \eqref{eq:SmoothedL1Norm}, satisfies the trivial upper bound $\|f\|^{(\eps)}_1\le \|f\|_1$ for all $\eps>0.$ The gap here between smoothed and non-smoothed $L^1$-norm can in general be arbitrarily large 
even for $L^1$-functions for which the right hand side is finite. In particular, although by Lemma~\ref{lem:ContSmoothedL1} we know that $\lim_{\eps\downarrow 0}\|f\|^{(\eps)}_1=\|f\|_1,$ in general we have no control on the speed of convergence when taking this limit.

In this section, we, however, restrict to functions with \emph{rapidly decaying tail} (as made precise in \eqref{eq:DecayingTail} below) and for which we can provide good control in the relevant lower bound (c.f.~Lemma~\ref{lem:SmoothedLowerBound}). Here, for instances of interest, we obtain the lower bound
\begin{align}
\label{eq:LowerBoundSmoothed}
    \|f\|^{(\eps)}_1 \ge \|f\|_1(1 - C\eps^{\beta})
\end{align}
for some $C\ge 0$ and $\beta\in (0,1]$ independent of $\eps$ and the specific function $f.$ In particular \eqref{eq:LowerBoundSmoothed} holds true for $f$ being the Wigner functions $W_{\kb{n}}$ of the 1-mode Fock states as shown in Proposition~\ref{prop:WignerSmoothedLower} below. 

For functions satisfying \eqref{eq:LowerBoundSmoothed}, we find that\footnote{This can be seen by $\inf_{\eps' \in[0,\eps)}\,\frac{\|f\|_1^{(\eps')}}{\eps-\eps'}\ge   \frac{\|f\|_1}{\eps} - C \eps^{-(1-\beta)} = \Omega\left(\frac{\|f\|_1}{\eps}\right)$ together with $\|f\|^{(\eps)}_1\le \|f\|_1.$}
\begin{align}
\label{eq:Smoothed=NonSmoothed}
    \inf_{\eps' \in[0,\eps)}\,\frac{\|f\|_1^{(\eps')}}{\eps-\eps'}=\Theta\left(\frac{\|f\|_1}{\eps}\right)
\end{align}
for $\eps>0$ small enough. As seen in the previous sections, the sample complexity of black box overlap estimation and fidelity estimation in the Wigner- and Pauli-model can be upper bounded in terms of the smoothed $L^1$-norm with quantity of interest being the left hand side of \eqref{eq:Smoothed=NonSmoothed}
(c.f.~Theorems~\ref{thm:blackboxOverEstimation},  \ref{thm:WignerUpperBound} and~\ref{thm:PauliUpperBound}).    Therefore, by \eqref{eq:Smoothed=NonSmoothed} we see that for functions satisfying \eqref{eq:LowerBoundSmoothed}, we do not weaken this upper bound when choosing $\eps'=0$ and providing it in terms of the (non-smoothed) $L^1$-norm. This, hence, further justifies that the characterisation of the sample complexity of the Fock states found in Proposition~\ref{prop:FockStates} is in terms of the usual $L^1$-norms of the Wigner functions $W_{\kb{n}}.$

\bigskip 

\smallskip

In the remainder of this section we formally define the notion of functions with rapidly decaying tail and prove the mentioned lower bound on their smoothed $L^1$-norm in  Lemma~\ref{lem:SmoothedLowerBound}. After that we discuss as an example the relavant class of \emph{eventually exponentially decaying} functions on $\R^m,$ which includes the Wigner functions of Fock states. 

Let $\Omega$ be some measurable space with some measure $\mu$. For $\gamma,\kappa> 0$ and $\Omega_0\subseteq \Omega$ a finite measure set, we say a measurable function $f$ has a \emph{rapidly decaying tail of order $(\gamma,\kappa)$ outside $\Omega_0$} if for all $\delta>0$ we have \begin{align}
\label{eq:DecayingTail}
    \int_{\{|f|\le \delta\}\cap \Omega^c_0}\, |f(\lambda)| d\mu(\lambda) \le \kappa \,\delta^\gamma.
\end{align}
For such functions we have good control on the respective smoothed $L^1$-norm as proven in the following lemma. In particular the proof follows similar lines as the one of Lemma~\ref{lem:ContSmoothedL1}, however, providing additionally a convergence rate as $\eps\to 0$ by utilising the rapidly decaying tail assumption.
\begin{lemma}[Lower bound on smoothed $L^1$-norm]
\label{lem:SmoothedLowerBound}
Let $\eps,\gamma,\kappa>0$ and $\Omega_0\subseteq \Omega$ be a set of finite measure. Furthermore, let $f\in L^2(\Omega)$ with $\|f\|_2\le 1$ have a rapidly decaying tail of order $(\gamma,\kappa)$ outside $\Omega_0.$ Then
\begin{align}
\label{eq:LowerBoundSmoothL1}
    \|f\|^{(\eps)}_1 \ge \|f\|_1 -\kappa' \eps^{\beta} - \eps\sqrt{\mu(\Omega_0)}
\end{align}
with $\beta := \tfrac{\gamma}{\gamma+1}\in (0,1]$ and $\kappa':=2\kappa^{\frac{1}{1+\gamma}}.$
\end{lemma}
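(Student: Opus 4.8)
The plan is to mimic the structure of the proof of Lemma~\ref{lem:ContSmoothedL1}, but to extract a quantitative rate from the rapidly decaying tail hypothesis rather than merely taking a limit. Fix $\eps > 0$, and let $\widetilde f \in L^1(\Omega) \cap L^2(\Omega)$ be an arbitrary competitor with $\|f - \widetilde f\|_2 \le \eps$; the goal is to lower bound $\|\widetilde f\|_1$ by $\|f\|_1 - \kappa'\eps^\beta - \eps\sqrt{\mu(\Omega_0)}$, since taking the infimum over such $\widetilde f$ then yields \eqref{eq:LowerBoundSmoothL1}. (We may assume $\|f\|_1 < \infty$, since the remark following \eqref{eq:DecayingTailBlackBlox} guarantees $f \in L^1(\Omega)$ under the standing hypotheses; alternatively this is implied directly by the tail condition together with $f \in L^2$.)

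**The core decomposition.**

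First I would split $\Omega$ into three pieces: $\Omega_0$; the ``small tail'' set $A_\eta := \{|f| \le \eta\} \cap \Omega_0^c$; and the ``large tail'' set $B_\eta := \{|f| > \eta\} \cap \Omega_0^c$, for a threshold $\eta > 0$ to be optimized at the end. Write
\begin{align*}
\|f\|_1 = \int_{\Omega_0}|f|\,d\mu + \int_{A_\eta}|f|\,d\mu + \int_{B_\eta}|f|\,d\mu.
\end{align*}
The middle term is controlled directly by the rapidly decaying tail hypothesis \eqref{eq:DecayingTailBlackBlox}: $\int_{A_\eta}|f|\,d\mu \le \kappa\,\eta^\gamma$. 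For the first term, Cauchy--Schwarz gives $\int_{\Omega_0}|f - \widetilde f|\,d\mu \le \|f - \widetilde f\|_2\sqrt{\mu(\Omega_0)} \le \eps\sqrt{\mu(\Omega_0)}$, hence $\int_{\Omega_0}|f|\,d\mu \le \int_{\Omega_0}|\widetilde f|\,d\mu + \eps\sqrt{\mu(\Omega_0)}$. For the third term, on $B_\eta$ we have $|f| > \eta$, so $\mu(B_\eta) \le \eta^{-2}\int_{B_\eta}|f|^2\,d\mu \le \eta^{-2}$; then by Cauchy--Schwarz $\int_{B_\eta}|f - \widetilde f|\,d\mu \le \|f-\widetilde f\|_2\sqrt{\mu(B_\eta)} \le \eps/\eta$, giving $\int_{B_\eta}|f|\,d\mu \le \int_{B_\eta}|\widetilde f|\,d\mu + \eps/\eta$. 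Adding the three bounds and using $\int_{\Omega_0}|\widetilde f| + \int_{B_\eta}|\widetilde f| \le \|\widetilde f\|_1$ yields
\begin{align*}
\|f\|_1 \le \|\widetilde f\|_1 + \eps\sqrt{\mu(\Omega_0)} + \kappa\,\eta^\gamma + \frac{\eps}{\eta}.
\end{align*}

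**Optimizing the threshold.**

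It remains to choose $\eta$ to minimize $\kappa\eta^\gamma + \eps/\eta$. Setting the derivative to zero gives $\eta = (\eps/(\gamma\kappa))^{1/(\gamma+1)}$, i.e. $\eta \sim (\eps/\kappa)^{1/(\gamma+1)}$; a clean non-optimal choice such as $\eta = (\eps/\kappa)^{1/(\gamma+1)}$ makes both terms equal to $\kappa^{1/(\gamma+1)}\eps^{\gamma/(\gamma+1)} = \kappa^{1/(1+\gamma)}\eps^\beta$, so their sum is $2\kappa^{1/(1+\gamma)}\eps^\beta = \kappa'\eps^\beta$ with $\beta = \gamma/(\gamma+1)$ exactly as claimed. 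Plugging back in gives $\|f\|_1 \le \|\widetilde f\|_1 + \eps\sqrt{\mu(\Omega_0)} + \kappa'\eps^\beta$, and rearranging plus taking the infimum over $\widetilde f$ completes the proof.

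**Anticipated obstacle.**

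There is no deep obstacle here — the argument is a routine three-region estimate. The one point requiring a little care is the bookkeeping on the large-tail set $B_\eta$: one must use $\|f\|_2 \le 1$ (not just $f \in L^2$) to get the clean bound $\mu(B_\eta) \le \eta^{-2}$, and then verify that the two error terms arising from Cauchy--Schwarz on $\Omega_0$ and on $B_\eta$ are the ones appearing in \eqref{eq:LowerBoundSmoothL1}. A secondary subtlety is ensuring the inequality $\beta \le 1$ (so that $\beta \in (0,1]$): this holds since $\gamma/(\gamma+1) < 1$ for all $\gamma > 0$, with the value $1$ only approached as $\gamma \to \infty$, matching the stated range.
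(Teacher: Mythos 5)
Your proposal is correct and follows essentially the same route as the paper's proof: a three-region decomposition, the tail condition on the small-$|f|$ region outside $\Omega_0$, Cauchy--Schwarz (with $\|f\|_2\le 1$ giving a Chebyshev bound on the large-$|f|$ region's measure) for the two error terms, and the threshold choice $\eta=(\eps/\kappa)^{1/(\gamma+1)}$ yielding exactly $\kappa'\eps^{\beta}$. The only cosmetic difference is that you partition by $\Omega_0$ first and then by the threshold on $\Omega_0^c$, whereas the paper cuts by the threshold first; the resulting estimates and constants are identical.
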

\begin{remark}
    Note that since $f$ in Lemma~\ref{lem:SmoothedLowerBound} is supposed to be in $L^2(\Omega)$ and have a rapidly decaying tail outside a finite measure set, it is immediately also in $L^1(\Omega).$ Hence, the right hand side of \eqref{eq:LowerBoundSmoothL1} is finite.
\end{remark}
\begin{proof}
Let $\tilde f\in L^1(\Omega)\cap L^2(\Omega)$ with $\|f-\tilde f\|_2\le \eps$ and $\delta>0$ to be determined later. We have 
\begin{align}
\label{eq:TriangleTailProof}
   \nn \|f\|_1 & =\int_{\{|f| > \delta\}} \left|f(\lambda)\right| d\mu(\lambda) +\int_{\{|f| \le \delta\}\cap\Omega_0} |f(\lambda)| d\mu(\lambda) +\int_{\{|f| \le \delta\}\cap\Omega^c_0} |f(\lambda)| d\mu(\lambda) \\& \le \int_{\{|f| > \delta\}} |f(\lambda)-\tilde f(\lambda)| d\mu(\lambda) + \int_{\Omega_0} |f(\lambda)-\tilde f(\lambda)| d\mu(\lambda)+ \|\tilde f\|_1 + \kappa \,\delta^\gamma.
\end{align}
For the first term we use
 with the Cauchy-Schwarz inequality together with the fact that $\|f\|_2\le 1$ which gives
\begin{align*}
\int_{\{|f| > \delta\}} |f(\lambda)-\tilde f(\lambda)| d\mu(\lambda) \le \|f-\tilde f\|_2\sqrt{\mu\left(\{|f| > \delta\}\right)} \le \frac{\eps}{\delta}. 
\end{align*}
Similarly, we see for the second term in \eqref{eq:TriangleTailProof} that
\begin{align*}
    \int_{\Omega_0} |f(\lambda)-\tilde f(\lambda)| d\mu(\lambda) \le \eps\sqrt{\mu(\Omega_0)}.
\end{align*}
Plugging this into \eqref{eq:TriangleTailProof} and using the fact that $\tilde f$ is an arbitrary feasible function in the optimisation corresponding to the smoothed $L^1$-norm in \eqref{eq:SmoothedL1Norm}, we see
\begin{align*}
    \|f\|^{(\eps)}_1  \ge\|f\|_1 - \eps\sqrt{\mu(\Omega_0)}-\frac{\eps}{\delta}-\kappa\,\delta^{\gamma}.
\end{align*}
Optimising over $\delta>0$ and choosing $\delta = \left(\frac{\eps}{\kappa}\right)^{1/(\gamma+1)}$ finishes the proof.
\end{proof}

For $m\in\N$ and the special case $\Omega = \R^m$ equipped with the Lebesgue measure, a class of particular interest which satisfy the condition \eqref{eq:DecayingTail} are \emph{eventually exponentially decaying functions}, i.e. $f\in L^1(\R^m)$ which satisfy for some $t,C>0$ and $\Omega_0\subset \R^m$ of finite measure
\begin{align}
\label{eq:ExpDecFunct}
    |f(\alpha)| \le C e^{-t|\alpha|}
\end{align}
    for all $\alpha\in \R^m\setminus \Omega_0.$ In fact such functions have rapidly decaying tail of order $(\gamma,\kappa_{C,\gamma,t,m})$ outside $\Omega_0$ for every $\gamma\in(0,1)$ fixed and $\kappa_{C,\gamma,t,m}$ being linear in $C$ and  $\kappa_{C,\gamma,t,m}/C$ depending on $\gamma,t,m$ but not $f.$ This  essentially follows due to the standard inequality\footnote{More precisely, to show that an exponentially decaying function $f$ satisfies \eqref{eq:DecayingTail} for all $a\in(0,1),$ we note that $ \int_{\{|f|\le \delta< e^{-t|\alpha|}\}} |f(\alpha)|d\alpha \le  \int_{0}^{\log((1/\delta)^{1/t})} \,\delta \, d\alpha = \frac{\delta}{t}\log(1/\delta).$}
\begin{align}
    \int_{\{e^{-t|\alpha|}\le \delta\}} e^{-t|\alpha|}d\alpha \lesssim \int_{\log((1/\delta)^{1/t})}^\infty  e^{-tr}  r^{m-1} dr \lesssim \delta\big(\log\left(1/\delta\right)+1\big),  
\end{align}
where constants independent of $\delta$ are hidden in the $\lesssim$-notation. 

Note that the Wigner functions of the (1-mode) Fock states $W_{\kb{n}},$ which are discussed in Section~\ref{sec:Fockstates} and given in terms of Laguerre polynomials \eqref{eq:WignerFockSec}, are eventually exponentially decaying in the sense of \eqref{eq:ExpDecFunct}. This can be seen directly from the asymptotic expressions of the Laguerre polynomials provided in\cite{Askey_LaguerreExpansion_1965} (consider there the table on page 699 for convenience). From that we can conclude the mentioned lower bound on their respective smoothed $L^1$-norm in the following proposition:

\begin{proposition}
\label{prop:WignerSmoothedLower}
The Wigner functions of the the 1-mode Fock states $\left(\kb{n}\right)_{n\in\N}$ satisfy   
    \begin{align}
       \left\| W_{\kb{n}}\right\|^{(\eps)}_1 \ge \left\|W_{\kb{n}}\right\|_1(1-C\eps^\beta) \sim \sqrt{n}(1-C\eps^\beta)
    \end{align}
    for all $\eps>0,\,\beta\in(0,1/2)$ and some $C>0$ possibly dependent on $\beta$ but independent of $n$ and $\eps.$
\end{proposition}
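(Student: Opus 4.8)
The plan is to apply Lemma~\ref{lem:SmoothedLowerBound} directly to $f = W_{\kb{n}}$, using the two facts recalled just before the proposition: first, that $\|W_{\kb{n}}\|_{L^2(\C^m,\mu_W)} = \|W_{\kb{n}}\|_2 \le 1$ (here $m=1$ and we work with the ordinary Lebesgue measure, so that the $L^1$- and $L^2$-norms appearing in the lemma are the unscaled ones), and second, that each $W_{\kb{n}}$ is eventually exponentially decaying in the sense of \eqref{eq:ExpDecFunct}. The latter comes from the asymptotics of the Laguerre polynomials in \cite{Askey_LaguerreExpansion_1965}: outside a bounded region in phase space, $e^{-2|\alpha|^2}|L_n(4|\alpha|^2)|$ decays like $e^{-c|\alpha|^2} \le e^{-c'|\alpha|}$ for large $|\alpha|$, so for each fixed $n$ there are constants $C_n, t_n > 0$ and a finite-measure set $\Omega_{0,n}$ with $|W_{\kb{n}}(\alpha)| \le C_n e^{-t_n|\alpha|}$ for $\alpha \notin \Omega_{0,n}$.

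The first key step is to extract from this exponential decay a \emph{rapidly decaying tail of order $(\gamma,\kappa)$} with $\kappa$ controlled uniformly in $n$ after dividing by $\|W_{\kb{n}}\|_1$. Concretely, by the remark after \eqref{eq:ExpDecFunct}, an eventually exponentially decaying function has rapidly decaying tail of order $(\gamma, \kappa_{C,\gamma,t,m})$ outside $\Omega_0$ for every $\gamma \in (0,1)$, with $\kappa_{C,\gamma,t,m}$ linear in $C$; fixing $\gamma$ close to $1$ so that $\beta = \gamma/(\gamma+1)$ is an arbitrary value in $(0,1/2)$, I would need to check that the constants $C_n$, $t_n$, and $\mu(\Omega_{0,n})$ coming from the Laguerre asymptotics can be bounded in terms of $\|W_{\kb{n}}\|_1 \sim \sqrt n$ (up to universal constants). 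The cleanest route: rescale, i.e. apply the lemma to $\hat f_n := W_{\kb{n}}/\|W_{\kb{n}}\|_1$, which has $\|\hat f_n\|_1 = 1$ and $\|\hat f_n\|_2 = \|W_{\kb{n}}\|_2/\|W_{\kb{n}}\|_1 \lesssim 1/\sqrt n \le 1$, and whose tail parameters $\hat C_n, \hat t_n, \mu(\hat\Omega_{0,n})$ must be shown to be bounded uniformly in $n$. One then gets $\|\hat f_n\|_1^{(\eps)} \ge 1 - \kappa'\eps^\beta - \eps\sqrt{\mu(\hat\Omega_{0,n})}$ with constants independent of $n$; multiplying back through by $\|W_{\kb{n}}\|_1$ and using the easily checked scaling $\|W_{\kb{n}}\|_1^{(\eps)} = \|W_{\kb{n}}\|_1 \cdot \|\hat f_n\|_1^{(\eps/\|W_{\kb{n}}\|_1)} \ge \|W_{\kb{n}}\|_1 \cdot \|\hat f_n\|_1^{(\eps)}$ (the smoothing budget only grows) would yield the claimed bound $\|W_{\kb{n}}\|_1^{(\eps)} \ge \|W_{\kb{n}}\|_1(1 - C\eps^\beta)$.

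The main obstacle I anticipate is precisely this uniformity in $n$ of the tail constants. The Laguerre polynomial $L_n(4|\alpha|^2)$ oscillates on a region whose size grows like $\sqrt n$, and its envelope there is of order $n^{-1/4}$ (pointwise) rather than exponentially small; the genuinely exponential decay only kicks in past the largest zero, roughly $|\alpha|^2 \gtrsim n$. So the natural finite-measure set $\Omega_{0,n}$ has measure growing like $n$, and $C_n$ involves the value of the Laguerre polynomial near that transition. I would handle this by using the precise regional asymptotics from \cite{Askey_LaguerreExpansion_1965,szegho_orthogonal_1975,landau_bessel_2000} (the same ones invoked in the proof of Proposition~\ref{prop:FockStates}): on the oscillatory region $|\alpha|^2 \le $ roughly $n$, the integrand is $O(n^{-1/4} \cdot |\alpha|^{-1/2})$-type, contributing $O(\sqrt n)$ to the $L^1$-norm — matching the $\sqrt n$ normalisation — while on $|\alpha|^2 \gtrsim n$ one has genuine exponential decay $|W_{\kb{n}}(\alpha)| \lesssim e^{-|\alpha|^2/2}$ with constants independent of $n$. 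Then I would take $\Omega_{0,n}$ to be the oscillatory region (measure $\Theta(n) = \Theta(\|W_{\kb{n}}\|_1^2)$, so $\sqrt{\mu(\Omega_{0,n})} = \Theta(\|W_{\kb{n}}\|_1)$, which is exactly the scaling absorbed in $\|f\|_1$ on dividing through), and verify that on $\Omega_{0,n}^c$ the function satisfies $|W_{\kb{n}}| \le \delta$ forces the tail integral $\int_{\{|W_{\kb{n}}| \le \delta\} \cap \Omega_{0,n}^c} |W_{\kb{n}}| \le \kappa \delta^\gamma$ with $\kappa$ universal. Finally, combining with $\|W_{\kb{n}}\|_1 \sim \sqrt n$ from \eqref{eq:FockWignerL1} gives the asymptotic form $\sim \sqrt n (1 - C\eps^\beta)$ stated in the proposition.
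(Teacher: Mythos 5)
Your overall route is the same as the paper's: use the Laguerre asymptotics of \cite{Askey_LaguerreExpansion_1965} to get $|W_{\kb{n}}(\alpha)|\le c\,e^{-t|\alpha|^2}$ for all $|\alpha|\ge s\sqrt n$ with $c,t,s$ independent of $n$, apply Lemma~\ref{lem:SmoothedLowerBound} with $\Omega_0=B_{s\sqrt n}(0)$ (measure $\Theta(n)$), and finish with $\|W_{\kb{n}}\|_1\sim\sqrt n$ from \eqref{eq:FockWignerL1}. Note also that your worry about the oscillatory region is not actually an obstacle: Lemma~\ref{lem:SmoothedLowerBound} imposes no condition on $f$ inside $\Omega_0$ beyond $\mu(\Omega_0)<\infty$, so the $n^{-1/4}$ envelope there never enters; only the exponential decay beyond the largest zero (with $n$-independent constants) and the measure $\Theta(n)$ of the ball are needed, exactly as in the paper's proof.

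The rescaling detour, however, does not work as written and is where your argument has a genuine gap. After the monotonicity step $\|\hat f_n\|_1^{(\eps/\|W_{\kb{n}}\|_1)}\ge\|\hat f_n\|_1^{(\eps)}$ you must apply Lemma~\ref{lem:SmoothedLowerBound} to $\hat f_n$ at smoothing level $\eps$, and the resulting error term $\eps\sqrt{\mu(\Omega_{0,n})}=\Theta(\sqrt n)\,\eps$ is now compared against $\|\hat f_n\|_1=1$ rather than against $\sqrt n$; this is not $O(\eps^\beta)$ uniformly in $n$, and your stated requirement that $\mu(\hat\Omega_{0,n})$ be bounded uniformly in $n$ contradicts your own (correct) observation that the natural $\Omega_{0,n}$ has measure $\Theta(n)$. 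Two fixes: either keep the exact identity $\|W_{\kb{n}}\|_1^{(\eps)}=\|W_{\kb{n}}\|_1\,\|\hat f_n\|_1^{(\eps/\|W_{\kb{n}}\|_1)}$ and evaluate the lemma at level $\eps/\|W_{\kb{n}}\|_1$ (then the offending term becomes $O(\eps)$), or — simpler, and what the paper does — drop the normalisation entirely and apply the lemma directly to $W_{\kb{n}}$, giving $\|W_{\kb{n}}\|_1^{(\eps)}\ge\|W_{\kb{n}}\|_1-\kappa'\eps^\beta-C\sqrt n\,\eps$ with $\kappa',C$ independent of $n$ (the hypothesis $\|W_{\kb{n}}\|_2\le1$ holds since $\|W_{\kb{n}}\|_2=\pi^{-1/2}$), and then use $\|W_{\kb{n}}\|_1\sim\sqrt n$, $\|W_{\kb{n}}\|_1\ge1$ (by \eqref{eq:WignerNormalisation}) and $\eps\le\eps^\beta$ for $\eps\le1$ (the claim being vacuous for larger $\eps$ after adjusting $C$) to absorb both error terms into $\|W_{\kb{n}}\|_1\,C''\eps^\beta$.
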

\begin{proof}
We use that the Wigner function of the Fock states $\kb{n}$ can be written explicitly as \cite[Equation 36]{leonhardt_measuring_1995}\cite[Equation 7.3]{Scheel_QuantumOptics} in terms of the Laguerre polynomials as
\begin{align*}
 W_{\kb{n}}(\alpha) = (-1)^n\left(\frac{2}{\pi}\right) e^{-2|\alpha|^2}L_n(4|\alpha|^2).
\end{align*}
Using the asymptotic expressions for the Laguerre polynomials in \cite{Askey_LaguerreExpansion_1965}, we see that 
\begin{align*}
    W_{\kb{n}}(\alpha) \le c\, e^{-t|\alpha|^2}
\end{align*}
for all $|\alpha|\ge s \sqrt{n}$  for some $s,c,t>0$ independent of $n.$ Therefore, by the discussion around \eqref{eq:ExpDecFunct} we see that $W_{\kb{n}}$ has a rapidly decaying tail of order $(\gamma,\kappa_{c,\gamma,t})$ outside of $\Omega_0 = B_{s\sqrt{n}}(0)\subset \C$ for all $\gamma\in(0,1).$ Here, $\kappa_{c,\gamma,s}>0$  depends linearly on $c$ and is independent of $n.$  Noting that the volume of $\Omega_0$ is of order $n,$ we see from Lemma~\ref{lem:SmoothedLowerBound}
that
\begin{align*}
   \left\|W_{\kb{n}}\right\|_1-\left\| W_{\kb{n}}\right\|^{(\eps)}_1 \lesssim  \eps^\beta + \sqrt{n}\eps \lesssim \sqrt{n}\,\eps^{\beta},
\end{align*}
where $\beta = \tfrac{\gamma}{\gamma+1}\in(0,1/2)$ and we are hiding all constants which are independent of $n$ and $\eps$ in the $\lesssim$-notation.
Using the fact that $\|W_{\kb{n}}\|_1\sim \sqrt{n}$ as noted in \eqref{eq:FockWignerL1} as a consequence of \cite[Lemma 1]{Markett_ScalingLpLaguerre_1982}, finishes the proof.
\end{proof}

\bigskip

\noindent\textbf{Acknowledgments.} The authors would like to thank Ulysse Chabaud, Giulia Ferrini, Niklas Galke, Naixu Guo, 
Benjamin Huard, Hector Hutin, Ludovico Lami, Lauritz van Luijk and Corentin Moumard for helpful discussions. OF and RS acknowledge funding from the European Research Council (ERC Grant AlgoQIP, Agreement No. 851716), from the QuantERA II Programme of the European Union’s Horizon 2020 research and innovation programme under Grant Agreement No 101017733 (VERIqTAS) as well as the government grant managed by the Agence Nationale de la Recherche under the Plan France 2030 with the reference ANR-22-PETQ-0007. AO acknowledges funding by the European Research Council (ERC Grant Agreement No. 948139).

\bibliography{Ref}
\bibliographystyle{alpha}

\end{document}